\tikzstyle{rep}=[rounded rectangle,thick,draw,minimum size=1.4em,inner sep=1ex]
\tikzstyle{op}=[rounded rectangle,draw,minimum size=1.4em,inner sep=1ex]
\tikzstyle{fail}=[rectangle,thick,draw,minimum size=1.4em,inner sep=1ex]
\tikzstyle{tran}=[draw,->,>=stealth]
\newcommand{\A}{\mathcal{A}}
\title{Synthesis of Optimal Resilient Control Strategies\thanks{
	The authors are partly supported by the Czech Science Foundation, grant No.~15-17564S,
	by the DFG through the Collaborative Research Center SFB 912 -- HAEC,
	the Excellence Initiative by the German Federal and State Governments 
	(cluster of excellence cfAED), and the DFG-projects BA-1679/11-1 and BA-1679/12-1.}
}%
\author{
	Christel Baier\inst{1}
\and
	Clemens Dubslaff\inst{1}
\and
    \v{L}ubo\v{s} Koren\v{c}iak\inst{2}
\and
   	Anton\'{\i}n Ku\v{c}era\inst{2}
\and
	Vojt\v{e}ch \v{R}eh\'ak\inst{2}
}
\institute{
  TU Dresden, Germany\\
  \email{\{christel.baier, clemens.dubslaff\}@tu-dresden.de}
\and
   Masaryk University, Brno, Czech Republic\\
   \email{\{korenciak, kucera, rehak\}@fi.muni.cz}
}
\newcommand{\cB}{\mathcal{B}}
\newcommand{\cE}{\mathcal{E}}
\newcommand{\cL}{\mathcal{L}}
\newcommand{\cM}{\mathcal{M}}
\newcommand{\cN}{\mathcal{N}}
\newcommand{\cP}{\mathcal{P}}
\newcommand{\cQ}{\mathcal{Q}}
\newcommand{\dist}{\mathrm{Dist}}
\newcommand{\Act}{\mathit{Act}}
\newcommand{\act}{\alpha}
\newcommand{\sinit}{s_{\mathit{\scriptscriptstyle init}}}
\newcommand{\E}{\Exp}
\newcommand{\Exp}{\ensuremath{\mathbb{E}}}
\newcommand{\wgt}{\mathit{wgt}}
\newcommand{\rew}{\mathit{rew}}
\newcommand{\cost}{\mathit{cost}}
\newcommand{\payoff}{\mathit{payoff}}
\newcommand{\fMP}{\mathrm{MP}}
\newcommand{\sched}{\mathfrak{S}}
\newcommand{\asched}{\mathfrak{H}}
\newcommand{\tsched}{\mathfrak{T}}
\newcommand{\rsched}{\mathfrak{R}}
\newcommand{\residual}[2]{#1 \uparrow #2}
\newcommand{\lift}[2]{#1|^{#2}}
\newcommand{\FinPaths}{\mathit{FinPaths}}
\newcommand{\Path}{\mathit{Path}}
\newcommand{\infpath}{\zeta}
\newcommand{\finpath}{\pi}
\newcommand{\fpath}{\finpath}
\newcommand{\ipath}{\infpath}
\newcommand{\last}{\mathit{last}}
\newcommand{\transformed}[1]{\hat{#1}}
\newcommand{\Repair}{\mathit{Rep}}
\newcommand{\Error}{\mathit{Err}}
\newcommand{\Operational}{\mathit{Op}}
\newcommand{\Avail}{\mathrm{Avail}}
\newcommand{\Nat}{\mathbb{N}}
\newcommand{\Rational}{\mathbb{Q}}
\newcommand{\neXt}{\bigcirc}
\DeclareMathOperator{\WeakUntil}{\ensuremath{\mathsf{W}}}
\DeclareMathOperator{\Until}{\ensuremath{\mathsf{U}}}
\DeclareMathOperator{\Eventually}{\ensuremath{\diamondsuit}}
\renewcommand{\>}{\rangle}
\newcommand{\goal}{\mathit{goal}}
\begin{document}

\maketitle

\begin{abstract}
Repair mechanisms are important within resilient systems to maintain the
system in an operational state after an error occurred. Usually, constraints on the
repair mechanisms are imposed, e.g., concerning the time or resources required (such as
energy consumption or other kinds of costs). 
For systems modeled by Markov decision processes (MDPs), 
we introduce the concept of \emph{resilient schedulers}, which represent control strategies 
guaranteeing that these constraints are always met within some given probability. 
Assigning rewards to the operational states of the system, we then aim towards resilient 
schedulers which maximize the long-run average reward, i.e., the expected mean payoff. 
We present a pseudo-polynomial algorithm that decides whether a resilient scheduler exists
and if so, yields an optimal resilient scheduler.
We show also that already the decision problem asking whether there exists a 
resilient scheduler is PSPACE-hard.

\end{abstract}

\section{Introduction}
\label{sec:intro}

Computer systems are resilient when they incorporate mechanisms to
adapt to changing conditions and to recover rapidly or at low costs from 
disruptions. The latter property of resilient systems is usually maintained
through repair mechanisms, which push the system towards an operational state
after some error occurred. 
Resilient systems and repair mechanisms have been widely studied in the literature
and are an active field of research (see, e.g., \cite{attoh2016resilience} for
an overview). 
Errors such as measurement errors, read/write errors, connection errors do not
necessarily impose a system error but may be repaired to foster the system
to be operational. 
Examples of repair mechanisms include rejuvenation
procedures that face the degradation of software over time \cite{german-book}, the evaluation
of checksums to repair communication errors, or methods to counter an attack
from outside a security system. 
The repair of a degraded software system could be achieved, e.g.,
by clearing caches (fast, very good availability), 
by running maintenance methods (more time, less availability, but higher success), 
or by a full restart (slow, cutting off availability, but guaranteed success).
Depending on the situation the system faces, there
is a trade-off between these characteristics and a choice has to be made, 
which of the repair mechanisms should be executed to fulfill further
constraints on the repair, which errors should be avoided, and to optimize an overall goal.
Usually, finding suitable control strategies performing the choices for
repair is done in an ad-hoc manner and requires a considerable
engineering effort.%

In this paper, we face the question of an automated synthesis of
\emph{resilient control strategies} that maximize the long-run average 
availability of the system. Inspired by the use of probabilistic response patterns to 
describe resilience \cite{Camara2012}, we focus on control strategies that
are \emph{probabilistically resilient}, i.e., with high probability repair 
mechanisms succeed within a given amount of time or other kinds of costs.
Our formal model we use to describe resilient systems is provided by 
Markov decision processes (MDPs, see, e.g., \cite{Puterman:book,Kallenberg}).
That is, directed graphs over states with edges annotated by actions 
that stand for non-deterministic choices and stochastic information about the probabilistic
choices resolved after taking some action. Following \cite{Baier2014,Huang2016}, 
we distinguish between three kinds of states: error, repair and 
operational states. Error states stand for states where
a disruption of the system is discovered, initiating a repair mechanism modeled by repair states.
Operational states are those states where the system is available and 
no repair is required. To reason about the trade-off between choosing control strategies,
we amend error and repair states with cost values, and operational states with 
payoff values, respectively. Assigned costs formalize, e.g., the time required 
or the energy consumed for leaving an error or repair state. Likewise, assigned payoff values
quantify the benefit of some operational state, e.g., stand for the number of 
successfully completed tasks while being operational.
We define the long-run average availability as the mean-payoff. 
Control strategies in MDPs are provided by (randomized) schedulers that, depending on the
history of the system execution, choose the probability of the next action to fire. When the
probabilities for action choices are Dirac, i.e., exactly one action is chosen almost surely,
the scheduler is called deterministic. Schedulers which select an action only depending on
the current state, i.e., do not depend on the history, are called memoryless. 
For a given cost bound $R$ and a probability threshold $\wp$, we call a scheduler
\emph{resilient} if the scheduler ensures for every error a recovery within at most 
$R$ costs with probability at least $\wp$.
\paragraph{Our Contribution.} We show that if the cost bound $R$ is represented in \emph{unary}, the existence of a resilient scheduler is solvable in polynomial time. Further, we show that if there is at least one resilient scheduler, then there also exists an \emph{optimal} resilient scheduler $\rsched$ computable in polynomial time. Here, optimality means that $\rsched$ achieves the maximal long-run average availability among all resilient schedulers. The constructed scheduler $\rsched$ is randomized and uses finite memory.  The example below illustrates that deterministic or memoryless randomized schedulers are less powerful. If $R$ is encoded in binary, our algorithms are exponential, and we show that deciding the existence of a resilient scheduler becomes PSPACE-hard. Let us note that all numerical constants (such as $\wp$ or MDP transition probabilities) except for $R$ are represented as fractions of binary numbers.
The key technical ingredients of our results are non-trivial observations about the structure of resilient schedulers, which connect the studied problems to the existing works on MDPs with multiple objectives and optimal strategy synthesis \cite{Kallenberg,EtessamiKVY08,BBCFK-TwoViews14}. The PSPACE-hardness result is obtained by a simple reduction of the cost-bounded reachability problem in acyclic MDPs \cite{HaaseKiefer15}. 
More details are given at appropriate places in Section~\ref{sec:algorithm} and in the appendix.

\begin{figure}[t]
 	\centering%
\definecolor{darkgreen}{rgb}{0.0, 0.5, 0.0}
\definecolor{darkgreen}{rgb}{0.0, 0.5, 0.0}
\begin{tikzpicture}[x=2cm,y=1cm,font=\scriptsize]
    \node[label={\textcolor{darkgreen}{0}}] (s0) at (0,0)   [op] {$\sinit$};
    \node[label={\textcolor{red}{0}}] (f) at (1,0)   [fail] {$error$};
    \node[label={\textcolor{darkgreen}{0}}] (o1) at (3,1)   [op] {$op_1$};
    \node[label={\textcolor{red}{1}}] (r) at (2,0)   [rep] {$rep$};
    \node[label={\textcolor{darkgreen}{1}}] (o2) at (3,-1)   [op] {$op_2$};

    \draw [tran] (s0) -- (f);
    \draw [tran] (f) -- (r);
    \draw [tran,rounded corners] (r) -- +(.2,-.6) -- node[below] {$\beta$, 1/2} +(-.2,-.6) -- (r);
    \draw [tran] (r) -- node[above] {\hspace{1.5em}$\beta$, 1/2} (o2);
    \draw [tran,rounded corners] (o2) -- +(.4,.3) --  +(.4,-.3) -- (o2); 
    \draw [tran,rounded corners] (o1) -- +(.4,.3) --  +(.4,-.3) -- (o1); 
    \draw [tran,rounded corners] (r) -- node[below] {$\alpha$} (o1);
\end{tikzpicture}  	\caption{\label{fig:randomized} Optimal resilient schedulers
 		might require finite memory and randomization}
 	\label{fig-sched}
\end{figure}
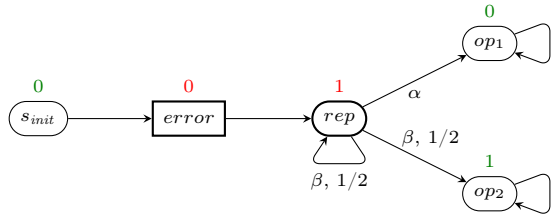
\paragraph{Example.} 
As a simple example, consider an MDP model of a resilient system depicted in Fig.~\ref{fig:randomized}.
Operational states are depicted by thin rounded boxes, 
error states are shown as rectangles and repair states 
are depicted by thick-rounded boxes. 
Assigned cost and payoff values are indicated above the nodes of the MDP.
For edges without any action name or probability, we assume one action with probability one.
The system starts its execution in the operational state $\sinit$, from which it reaches the error state $error$ and directly invokes a repair mechanism by switching to the  repair state $rep$, where either action $\alpha$ or $\beta$ can be chosen.
After taking $\alpha$, an operational state $op_1$ is reached that, however,
does not grant any payoff. 
When choosing $\beta$, a fair coin is flipped and either the repair mechanism has to be tried
again or the operational state $op_2$ is reached, while providing the payoff value 1 for each visit of $op_2$. 
Assume that we have given the cost bound $R=2$ and  probability threshold $\wp=4/5$.
The memoryless deterministic strategy always choosing $\beta$ yields the maximal possible mean payoff of $1$,
but is not resilient as $\wp>1-1/{2^R}=3/4$. 
The memoryless randomized scheduler that chooses $\beta$ with probability 
$2/\sqrt{5}$ is resilient and achieves the maximal mean payoff of 
$1/(\sqrt{5} - 1) \approx 0.809$, 
when ranging over all memoryless randomized schedulers. 
Differently, 
the finite-memory randomized scheduler playing $\beta$ with probability $4/5$ in the 
second step and with probability 1 in all other steps yields the 
mean payoff of $0.9$, which is optimal within all resilient schedulers. 
As this example shows, optimal resilient schedulers might require randomization
and finite memory in terms of remembering the accumulated costs 
spent so far after an error occurred.

\paragraph{Related work.} 
Concerning the analysis of resilient systems, \cite{Baier2014} presented algorithms to
reason about trade-offs between costs and payoffs using (probabilistic) model-checking techniques. 
In \cite{Longo2017}, several metrics to quantify resiliency and their applications to large scale
systems has been detailed.

Synthesis of control strategies for resilient systems have been mainly considered in the
non-probabilistic setting. In \cite{Huang2016}, a game-theoretic approach towards 
synthesizing strategies that maintain a certain resilience level has been presented.
The resilience level is defined in terms of the number of errors from which the 
system can recover simultaneously.
Automatic synthesis of Pareto-optimal implementations of resilient systems
were detailed in \cite{Ehlers2014}. Robust synthesis procedures with both,
qualitative and mean-payoff objectives have been presented in \cite{Bloem2014}.
In \cite{Girault2009}, the authors present algorithms to synthesize controllers
for fault-tolerant systems compliant to constraints on power consumption.

Optimization problems for MDPs with mean-payoff objectives and constraints on 
cost structures have been widely studied in the field of constrained Markov 
decision processes (see, e.g., \cite{Puterman:book} and
\cite{altman-constrainedMDP} for an overview). 
MDPs with multiple constraints on the probabilities for 
satisfying $\omega$-regular specifications 
were studied in~\cite{EtessamiKVY08}.
This work has been extended to also allow for (multiple) 
constraints on the expected total
reward in MDPs with rewards in~\cite{ForejtKNPQ11}.
Synthesis of optimal schedulers with multiple long-run
average objectives in MDPs has been considered in~\cite{Chatterjee2007,BBCFK-TwoViews14}.
All of the mentioned approaches have in common that they adapt well-known linear programs 
to synthesize optimal memoryless randomized schedulers (see, e.g., \cite{Kallenberg,Puterman:book}).
We also use combinations of similar techniques to find optimal resilient schedulers.
As far as we know, we are the first to consider mean-payoff optimization problems
under cost-bounded reachability probability constraints. Although we 
investigate these problems in the context of resilient systems, they are interesting
by its own.

\section{Notations and problem statement}
\label{sec:prelim}
Given a finite set $X$, we denote by $\dist(X)$ the set of \emph{probability distributions}
on $X$, i.e., the set of functions $\mu\colon X\rightarrow [0,1]$ where
$\sum_{x\in X} \mu(x) = 1$. 
By $X^\infty$ we denote finite or infinite sequences of elements of $X$.
We assume that the reader is familiar with principles about probabilistic
systems, logics, and model-checking techniques and refer to \cite{BK08}
for an introduction in these subjects.

\subsection{Markov decision processes}
\label{sec-mdp}
A \emph{Markov decision process (MDP)} is a triple $\cM=(S,\Act,P,\sinit)$,
where $S$ is a finite state space, $\sinit \in S$ an initial state,
$\Act$ a finite set of actions, and $P\colon S \times \Act \times S \to [0,1]$
a transition probability function, i.e., a function where 
$\sum_{s'\in S} P(s,\alpha,s')\in \{0,1\}$ 
for all $s\in S$ and $\alpha \in \Act$.
For $s\in S$, let $\Act(s)$ denote the set of 
actions $\alpha \in \Act$ that are enabled in~$s$, i.e.,
$\alpha \in \Act(s)$ iff
$P(s,\alpha,\cdot)$ is a probability distribution over $S$.
Unless stated differently, we suppose that any MDP does not have any trap states,
i.e., states $s$ where $\Act(s) = \varnothing$.
\emph{Paths} in $\cM$ are alternating sequences 
$s_0\alpha_0s_1\alpha_1\ldots\in S\times(\Act{\times}S)^\infty$ 
of states and actions,
such that $P(s_{i},\alpha_{i},s_{i+1})>0$ for all $i\in\Nat$.
The set of all finite paths starting in state $s\in S$ is denoted by 
$\FinPaths(s)$, where we omit $s$ when all finite paths from any state
are issued. 
A \emph{(randomized, history-dependent) scheduler} for $\cM$ is a function 
$\sched\colon \FinPaths\rightarrow\dist(\Act)$. 
A \emph{$\sched$-path} in $\cM$ is a
path $\pi=s_0\act_0s_1\act_1\ldots$ in $\cM$ where for all $n\in\Nat$
we have that $\sched(s_0\act_0s_1\act_1\ldots\act_{n-1}s_n)(\act_n)>0$.
We write
$\Pr\nolimits^{\sched}_{\cM,s}$ for the probability measure on infinite paths of $\cM$
induced by a scheduler $\sched$ and starting in $s$.
For a scheduler $\sched$ and $\fpath\in\FinPaths$, 
$\residual{\sched}{\fpath}$ denotes the \emph{residual scheduler}
$\tsched$ given by $\tsched(\fpath') = \sched(\fpath ; \fpath')$
for each finite path $\fpath'$ where 
the first state of $\fpath'$ equals the last state of $\fpath$. 
Here $;$ is used for the concatenation operator on finite paths.
$\sched$ is called \emph{memoryless} if
$\sched(s)=\sched(\pi)$
for all $s\in S$ and all finite paths $\pi\in\FinPaths$ 
where the last state of $\pi$ is $s$.
We abbreviate memoryless (randomized) schedulers
as \emph{MR-schedulers}. 

\subsection{Markov decision processes with repair}
\label{sec-MDP-repair}
Let $\cM=(S,\Act,P,\sinit)$ be an MDP and suppose that 
we have given two disjoint sets of states $\Error,\Operational\subseteq S$. 
Intuitively, $\Error$ stands for the set of states where an error occurs,
and $\Operational$ stands for the set of states where the system modeled is operational.
In all other states, we assume that a repair mechanism is running, triggered directly
within the next transition after some error occurred.
We formalize the latter assumption by
\begin{equation}
  \label{weakuntil}
  e \models \forall \neXt \forall ( \neg \Error \WeakUntil \Operational)
  \qquad
  \text{for all states $e\in \Error$}
  \tag{*}
\end{equation}
where $\neXt$ and $\WeakUntil$ stand for the standard next and weak-until 
operator, respectively, borrowed from computation tree logic (CTL, see, e.g.,
\cite{BK08}). 
Assumption~\eqref{weakuntil} also asserts
that as soon as a repair protocol has been started,
the system does not enter a new error state before a successful repair, i.e.,
until the system switches to its operational mode. 
Further, we suppose that states in $\cM$ are amended with non-negative integer values,
i.e., we are given a non-negative integer reward function \mbox{$\rew\colon S \to \Nat$}.
For an operational state $s\in\Operational$, the value
$\rew(s)$ is viewed as the \emph{payoff} value of state $s$,
while for the non-operational states $s\in S{\setminus}\Operational$,
the value $\rew(s)$ is viewed as the repairing \emph{costs} caused by
state $s$.
To reflect this intuitive meaning of the reward values, we shall
write $\payoff(s)$ instead of $\rew(s)$ for $s\in \Operational$
and $\cost(s)$ instead of $\rew(s)$ for $s\in S{ \setminus} \Operational$.
Furthermore, we assume $\payoff(s)=0$ if $s\in S{\setminus}\Operational$ and 
$\cost(s)=0$ if $s \in \Operational$.
For a finite path $\pi= s_0\alpha_0s_1 \ldots \alpha_{n-1}s_n$, let $\cost(\pi)$ and $\payoff(\pi)$ 
be $\sum_{i=0}^{n} \cost(s_i)$ and $\sum_{i=0}^{n} \payoff(s_i)$, respectively.

An \emph{MDP with repair} is formally defined as a tuple $(\cM,\Error,\Operational,\rew)$,
where  assumption \eqref{weakuntil} is satisfied and the transition probability function 
of $\cM$ is rational, assuming representation of probabilities as fractions of binary numbers. 
\subsection{Long-run availability and resilient schedulers}
Given an MDP with repair $(\cM,\Error,\Operational,\rew)$ and a scheduler $\sched$ for $\cM$, we define the \emph{long-run availability} of $\sched$, denoted by $\Avail^{\sched}_{\cM,\sinit}$, 
as the expected long-run average (mean-payoff) of the payoff function.
That is,  for any $s_0\in S$, $\Avail^{\sched}_{\cM,s_0}$ agrees with the expectation of the
random variable $X$ under $\Pr\nolimits^{\sched}_{\cM,s_0}$ 
that assigns to each infinite path 
$\infpath = s_0 \, \act_0 \, s_1 \, \act_1 \, s_2 \, \act_2 \ldots$ 
the value
\[
   X(\infpath) \ \ = \ \ 
   \liminf_{n \to \infty} \
     \frac{1}{n} \sum_{i=0}^{n-1} \payoff(s_i).
\]
Let us further assume that we have given a rational probability threshold $\wp\in (0,1]$
and a cost bound $R\in \Nat$. The threshold $\wp$ is always represented as a fraction of two binary numbers. The bound $R$ is represented either in binary or in unary, which significantly influences the (computational) complexity of the studied problems.

\begin{definition}[Resilient schedulers]
\label{def:res-scheduler} 
A scheduler $\sched$ is said to be \emph{probabilistically resilient} 
with respect to $\wp$ and $R$ if the following conditions
\eqref{res-sched} and \eqref{res-rep} hold for all finite $\sched$-paths $\fpath$ from $\sinit$ to an error state~$s$:
\begin{align}
  \label{res-sched}
  \Pr\nolimits^{\residual{\sched}{\fpath}}_{\cM,s}
        \bigl(\ \Eventually^{\leqslant R} \Operational \ \bigr) 
  \ \ &\geqslant \ \ \wp
  \tag{\textrm{Res}} \\
\label{res-rep}
\Pr\nolimits^{\residual{\sched}{\fpath}}_{\cM,s}
\bigl(\ \Eventually \Operational \ \bigr) 
\ \ &= \ \ 1
\tag{\textrm{ASRep}}
\end{align}
Here, $\Eventually \Operational$ denotes the set of
infinite paths $\ipath$ for which there exist a finite path~$\fpath'$ and an infinite path 
$\varrho$ such that $\ipath=\fpath';\varrho$ and the last state of $\fpath'$ is
in $\Operational$. %
Further, $\Eventually^{\leqslant R} \Operational$  denotes the set $\Eventually \Operational$ restricted to paths satisfying $\cost(\fpath')\leqslant R$.
\end{definition}

The task addressed in this paper is to check the existence of 
resilient schedulers (i.e., schedulers that are probabilistically resilient w.r.t. $\wp$ and $R$), 
and if so, construct an \emph{optimal} resilient scheduler $\rsched$ that has maximal
long-run availability amongst all resilient schedulers, i.e., 
$\Avail^{\rsched}_{\cM,\sinit}=\Avail^{\max}_{\cM,\sinit}$,
where
\[
   \Avail^{\max}_{\cM,\sinit} \ \ = \ \ 
   \sup  \ 
   \bigl\{ \ \Avail^{\rsched'}_{\cM,\sinit} \ : \ 
             \text{$\rsched'$ is a resilient scheduler} \ 
   \bigr\}.
\]
\section{The results}
\label{sec:algorithm}

In the following, we present and prove our main result of this paper:

\begin{theorem}
\label{thm-main}
	Let $(\cM,\Error,\Operational,\rew)$ be an MDP with repair, $\wp\in \ (0,1]$ a rational probability threshold, and $R \in \Nat$ a cost bound encoded in unary. The existence of a probabilistically resilient scheduler w.r.t.{} $\wp$ and $R$ is decidable in polynomial time. If such a scheduler exists, then an \emph{optimal} probabilistically resilient scheduler $\rsched$ (w.r.t.{} $\wp$ and $R$) is computable in polynomial time.
\end{theorem}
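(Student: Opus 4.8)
The plan is to decompose the problem into two interacting parts: a ``repair-phase'' analysis that characterizes which behaviours during repair episodes are resilient, and a ``long-run'' optimization over operational behaviour that maximizes the mean payoff subject to the constraint that every repair episode entered is handled resiliently. First I would observe that, by assumption~\eqref{weakuntil}, once an error state $e$ is entered the system stays outside $\Error$ until it reaches $\Operational$; hence the behaviour of a resilient scheduler on the path fragment from $e$ up to the first operational state is governed entirely by the cost accumulated so far, and conditions~\eqref{res-sched} and~\eqref{res-rep} are conditions on the residual scheduler at $e$ only. This lets me restrict attention to a finite ``unfolded'' sub-MDP $\cM_{\mathrm{rep}}$ in which the repair region is product-constructed with a counter $c \in \{0,1,\dots,R\}$ tracking accumulated cost (truncating at $R{+}1 = \text{``dead''}$); because $R$ is in unary this product has polynomial size. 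On $\cM_{\mathrm{rep}}$ I would compute, by a standard reachability/linear-programming argument (cf.\ \cite{EtessamiKVY08,BBCFK-TwoViews14}), for each entry configuration $(e,0)$ the maximal probability of reaching $\Operational$ with cost $\leqslant R$ simultaneously with probability $1$ of reaching $\Operational$ eventually; this yields the set of ``resilience-feasible'' repair strategies and, in particular, decides whether \emph{any} resilient scheduler exists (it does iff this max probability is $\geqslant \wp$ for every error state reachable under some scheduler that also keeps all earlier repair episodes resilient — which I would formalize by iteratively pruning states from which resilience is unavoidable to violate).

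Next I would set up the optimization. The key structural observation I expect to need is that an optimal resilient scheduler can be taken to act as follows: in the operational part it behaves like a (finite-memory) scheduler optimizing mean payoff, and whenever an error is entered it switches into a fixed resilience-feasible repair strategy on the counter-augmented region, returning to operational mode upon recovery. The subtlety, highlighted by the paper's own example, is that the repair strategy should not merely maximize $\Pr(\Eventually^{\leqslant R}\Operational)$ — it should spend exactly the ``resilience budget'' it needs and otherwise route probability mass toward the operational state(s) with the best continuation value. So the repair-phase sub-problem is itself a constrained optimization: among repair strategies meeting the probability-$\wp$ constraint from $(e,0)$, choose one maximizing the expected value of $\Avail$ at the operational state reached. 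Since the continuation values of operational states are themselves the object being optimized, this couples the two phases; I would break the circularity by writing a single linear program over the whole (polynomial-size) product MDP whose frequency (occupation-measure) variables encode a memoryless randomized scheduler on the product, with constraints: the usual flow-conservation constraints, a mean-payoff objective built from $\payoff$, and for each error-entry configuration $(e,0)$ a linear constraint stating that the conditional probability of reaching $\Operational$ within cost $R$ given that $(e,0)$ is entered is at least $\wp$. A memoryless optimal solution of this LP on the product corresponds, after projecting the counter component away, to a \emph{finite-memory} randomized resilient scheduler on $\cM$ with the claimed optimal availability; LP solvability in polynomial time (in the size of the product, hence polynomial in the input when $R$ is unary) gives the complexity bound, and a witnessing scheduler is extracted from the optimal occupation measure in the standard way.

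The main obstacle I anticipate is making the coupling between the two phases rigorous — specifically, proving that the single-LP formulation is \emph{sound and complete}: every resilient scheduler's induced behaviour is captured (possibly after replacing it by an equivalent-or-better one that is ``well-structured'' in the above sense, e.g.\ restarting the counter and the repair strategy afresh at each error, which needs Assumption~\eqref{weakuntil} and a standard exchange/averaging argument on residual schedulers), and conversely every feasible LP point yields a genuine resilient scheduler whose availability matches the LP value. A secondary technical point is the conditional probability constraint: $\Pr(\Eventually^{\leqslant R}\Operational \mid \text{enter }(e,0))$ is a ratio of occupation-measure expressions, so I would either clear denominators (the constraint $\text{num} \geqslant \wp\cdot\text{den}$ is linear) or argue that error-entry configurations have positive frequency on every relevant BSCC; a little care is also needed for error states that are simply never entered by the optimal scheduler, which the pruning step from the feasibility phase handles. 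Finally, I would confirm that finite memory genuinely suffices (the counter up to $R$ plus a memoryless operational strategy is all the memory used), matching the theorem statement and the example.
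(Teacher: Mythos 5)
Your first step --- the product construction with a cost counter $\{0,\dots,R\}$, polynomial because $R$ is unary --- is exactly the paper's transformation into $\transformed{\cM}$ (Section~\ref{sec:transformation}), and your instinct to linearize the conditional constraint as $\mathrm{num}\geqslant\wp\cdot\mathrm{den}$ matches the paper's resiliency constraint~(4). The gap is in the optimization phase. Your single LP over the whole product, with flow/frequency variables, a mean-payoff objective, and per-error occupation-measure constraints, only enforces what the paper calls \emph{average-resilience} ($\fMP_e^{\sched}\geq 0$), which is strictly weaker than resilience: condition~\eqref{res-sched} is a requirement on the residual scheduler at \emph{every} visit to an error state, whereas long-run frequency ratios only bound an average over visits. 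Closing this gap is most of the paper's work. Concretely: (i) the standard multichain mean-payoff LP mixes transient and recurrent occupation measures, and the schedulers it certifies are 2-memory stochastic-update schedulers whose mode switch can occur in the middle of a repair episode, so the linear constraints do not directly bound the per-visit success probability of any residual; (ii) even restricting to end components, the paper shows that a resilient scheduler may be \emph{forced} by the transient resilience constraints into end components that do not appear in the solution of a single global optimization --- this is why Algorithm~\ref{alg-comp-compute} repeatedly solves the multi-mean-payoff program of \cite{BBCFK-TwoViews14}, prunes the MDP, and re-solves, to build a sufficiently large set $\cE$ of candidate end components with resilient MR-schedulers (Lemmas~\ref{lem-cover} and~\ref{lem-resilient}); and only \emph{then} solves a separate total-reward/reachability LP on an auxiliary MDP $\cN$ with reward $\Avail(E)$ at goal states to optimize the transient routing. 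A single LP does not see this iteration.

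The second, and deeper, missing piece is the one you yourself flag as ``the main obstacle'': soundness and completeness of the formulation, i.e., that an arbitrary history-dependent resilient scheduler can be replaced by a well-structured one (memoryless on the product, entering canonical end components and playing fixed MR-schedulers there) without losing availability. You propose ``a standard exchange/averaging argument on residual schedulers'', but the paper's corresponding argument (Lemma~\ref{lem-claim-one}) is not standard: it requires classifying end components as \emph{offending} or not, and for offending ones performing a surgery using the path sets $\Pi^{\varepsilon}$, a temporary switch to $\asched_E$, and a carefully chosen $\delta$ balancing the conditions I and II, all while preserving resilience via Lemma~\ref{lem-res-switch}. Without this lemma (or a substitute), the LP value is only an upper bound over a restricted class of schedulers, and the claim that the extracted scheduler is \emph{optimal among all resilient schedulers} is unproved. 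Your feasibility check (iterated pruning of states from which the multi-objective condition $\Pr(\Eventually^{\leqslant R}\Operational)\geqslant\wp$ together with $\Pr(\Eventually\Operational)=1$ cannot be met) is a reasonable sketch and roughly what the paper's construction delivers as a by-product of LP feasibility, but for the optimality claim the proposal as written would not go through.
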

If $R$~is encoded in binary, our algorithms are exponential, and we show that even the existence of a probabilistically resilient scheduler w.r.t.{} $\wp$ and $R$ becomes PSPACE-hard. The optimal scheduler $\rsched$ is randomized and history dependent, which is unavoidable (see the example in the introduction%
). More precisely, the memory requirements of $\rsched$ are finite with at most $|\Error| \cdot R$ memory elements, and this memory is only used in the repairing phase where the scheduler needs to remember the error state and the total costs accumulated since visiting this error state.

For the rest of this section, we fix an MDP with repair $(\cM,\Error,\Operational,\rew)$ where $\cM=(S,\Act,P,\sinit)$, a rational probability threshold $\wp\in \ (0,1]$, and a cost bound $R \in \Nat$. We say that a scheduler is \emph{resilient} if it is probabilistically resilient w.r.t.{} $\wp$ and $R$. 

The proof of Theorem~\ref{thm-main} is obtained in two steps. First, the MDP $\cM$ is transformed into a suitable MDP $\transformed{\cM}$ where the total costs accumulated since the last error are explicitly remembered in the states. 
Hence, the size of $\transformed{\cM}$ is polynomial in the input size if $R$ is encoded in unary. 
We will show that the problem of computing an optimal resilient scheduler can be safely considered in $\transformed{\cM}$ instead of $\cM$. 
In the second step, it is shown that there exists an optimal \emph{memoryless} resilient scheduler for $\transformed{\cM}$ computable in time polynomial in the size of $\transformed{\cM}$. This is the very core of our paper requiring non-trivial observations and constructions. Roughly speaking, we start by connecting our problem to the problem of multiple mean-payoff optimization, and use the results and algorithms presented in \cite{BBCFK-TwoViews14} to analyze the limit behavior of resilient schedulers. First, we show how to compute the set of end components such that resilient schedulers can stay only in these end components without loosing availability. We also compute memoryless schedulers for these end components that can safely be adopted by resilient schedulers. Then, we show that the behavior of a resilient scheduler prior entering an end component can also be modified so that it becomes memoryless and the achieved availability does not decrease. After understanding the structure of resilient schedulers, we can compute an optimal memoryless resilient scheduler for $\transformed{\cM}$ by solving suitable linear programs.

The first step (i.e, the transformation of $\cM$ into $\transformed{\cM}$) is described in Section~\ref{sec:transformation}, and the second step in Section~\ref{sec-res-solving}.

\subsection{Transformation}

\label{sec:transformation}

Let $(\transformed{\cM},\transformed{\Error},\transformed{\Operational},\transformed{\rew})$ be an MDP with repair where $\transformed{\cM}$ is an MDP 
$(\transformed{S},\transformed{\Act}, \transformed{P},\sinit)$ such that $\transformed{S} =  S  \cup \Repair$ with
\[
  \Repair \ \ = \ \ 
  \Error \times S \times \{0,1,\ldots,R\}.
\]
Intuitively, state $\<e,s,r\>\in\Repair$ indicates that the system is in state $s$
executing a repair procedure that has been triggered by visiting $e\in\Error$ 
somewhen in the past and with accumulated costs $r$ so far.
For technical reasons, we also include triples $\<e,s,r\>$ with $s\in \Operational$
in which case a repair mode with total cost $r$ has just finished.
The sets of error and operational states in $\transformed{\cM}$ are:
\begin{center}
   $\transformed{\Error} \ = \ \Error$ \ \ \ and \ \ \
   $\transformed{\Operational} \ = \ 
    \Operational \cup 
    \bigl\{\, \<e,s,r\> \in \Repair \, : \, s\in \Operational \, \bigr\}$.
\end{center}
The action set of $\transformed{\cM}$ is the same as for $\cM$.
In what follows, 
we write $\transformed{\Act}(\transformed{s})$ 
for the set of actions that are enabled
in state $\transformed{s}$ of $\transformed{\cM}$.
Then,
$\transformed{\Act}(s)=\transformed{\Act}(\<e,s,r\>)=\Act(s)$.
Let $s,s'\in S$ and $\alpha \in \Act$.
Then,
$\transformed{P}(s,\alpha,s') = P(s,\alpha,s')$ if $s \notin \Error$.
If $e\in \Error$ and $\alpha\in \Act(e)$, then
$$
   \transformed{P}\big(e,\alpha,\<e,s,\cost(e)\>\big) \ =\ P(e,\alpha,s)
$$
For, $e \in \Error$, $r\in \{0,1,\ldots,R\}$, and $\alpha \in \Act(s)$ we have:
$$
 \begin{array}{rcll}
  \transformed{P}\big(\<e,s,r\>,\alpha,\<e,s',r{+}\cost(s)\>\big) & \ = \ &
  P(s,\alpha,s')
  &
  \quad \text{if $r{+}\cost(s) \leqslant R$ and $s\notin \Operational$}
  \\[1.2ex]

  \transformed{P}\big(\<e,s,r\>,\alpha,s'\big) & \ = \ & P(s,\alpha,s')
  &
  \quad \text{if $r{+}\cost(s) > R$ or $s\in \Operational$}
 \end{array}
$$
In all remaining cases, we set $\transformed{P}(\cdot)=0$.
The reward function $\transformed{\rew}$ of $\transformed{\cM}$ is given by
$\transformed{\cost}(s)=\transformed{\cost}(\<e,s,r\>) = \cost(s)$
and $\transformed{\payoff}(s)= \transformed{\payoff}(\<e,s,r\>) = \payoff(s)$.
Note that assumption \eqref{weakuntil} ensures that $s \notin \Error$
for all states $\<e,s,r\>$.

There is a one-to-one correspondence between the paths in $\cM$ and in
$\transformed{\cM}$. More precisely,
given a (finite or infinite) path $\transformed{\fpath}$ 
in $\transformed{\cM}$, let
$\transformed{\fpath}|_{\cM}$ denote the unique path in $\cM$ that arises from
$\transformed{\fpath}$ by replacing 
each repair state $\<e,s,r\>$ with $s$.
Vice versa, each path $\fpath$ in $\cM$ can be lifted to a path
$\lift{\fpath}{\transformed{\cM}}$ in $\transformed{\cM}$ 
such that $(\lift{\fpath}{\transformed{\cM}})|_{\cM}=\fpath$.
Next lemmas follow directly from definitions of $\transformed{\cost}$ and $\transformed{\payoff}$.

\begin{lemma}
 \label{cost-fpath}
  For each finite path $\transformed{\fpath}$ in $\transformed{\cM}$ starting
  in some state $e\in \transformed{\Error}$ we have
  $\transformed{\cost}(\transformed{\fpath})=
      \cost(\transformed{\fpath}|_{\cM})$.
\end{lemma}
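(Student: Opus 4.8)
The plan is to prove Lemma~\ref{cost-fpath} directly from the definition of the cost structure in $\transformed{\cM}$ together with the bijective correspondence between paths in $\cM$ and paths in $\transformed{\cM}$. Recall that for a finite path $\transformed{\fpath} = \transformed{s}_0 \act_0 \transformed{s}_1 \act_1 \ldots \act_{n-1} \transformed{s}_n$ in $\transformed{\cM}$ we have $\transformed{\cost}(\transformed{\fpath}) = \sum_{i=0}^{n} \transformed{\cost}(\transformed{s}_i)$, and the projection $\transformed{\fpath}|_{\cM}$ is obtained by replacing each repair state $\<e,s,r\>$ with $s$ (and leaving states of $S$ unchanged). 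So the proof amounts to showing that $\transformed{\cost}$ is preserved state-by-state under this projection.

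First I would recall the defining equations for $\transformed{\rew}$, namely $\transformed{\cost}(s) = \cost(s)$ for $s \in S$ and $\transformed{\cost}(\<e,s,r\>) = \cost(s)$ for every repair state $\<e,s,r\> \in \Repair$. This says precisely that for each state $\transformed{s}_i$ occurring on $\transformed{\fpath}$, the value $\transformed{\cost}(\transformed{s}_i)$ equals $\cost(s_i)$, where $s_i$ is the corresponding state in the projected path $\transformed{\fpath}|_{\cM}$: if $\transformed{s}_i \in S$ then $s_i = \transformed{s}_i$ and the equality is immediate, while if $\transformed{s}_i = \<e,s,r\> \in \Repair$ then $s_i = s$ and $\transformed{\cost}(\<e,s,r\>) = \cost(s) = \cost(s_i)$ by definition. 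Summing over $i = 0, \ldots, n$ then gives
\[
  \transformed{\cost}(\transformed{\fpath}) \ = \ \sum_{i=0}^{n} \transformed{\cost}(\transformed{s}_i) \ = \ \sum_{i=0}^{n} \cost(s_i) \ = \ \cost(\transformed{\fpath}|_{\cM}),
\]
which is the claimed identity. Note that the hypothesis that $\transformed{\fpath}$ starts in an error state $e \in \transformed{\Error}$ is not actually needed for this particular equality — it holds for every finite path — but is retained because it matches the way the lemma is invoked later, when one cares about costs accumulated \emph{since} an error; in particular it guarantees that the costs recorded in the third component of repair states along $\transformed{\fpath}$ are meaningful, though that observation belongs to the companion statements rather than to this one.

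There is no real obstacle here: the lemma is a bookkeeping fact, and the only thing to be careful about is that the projection $\transformed{\fpath}|_{\cM}$ and the lifting are genuinely mutually inverse and length-preserving, so that the two sums range over index sets of the same cardinality and match up term by term. That follows from the one-to-one correspondence between paths already established in the excerpt (each repair state is replaced by exactly one state of $S$, and no states are inserted or deleted), so the argument is complete.
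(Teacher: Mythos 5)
Your proof is correct and matches the paper's approach: the paper dispenses with this lemma by noting it ``follows directly from definitions of $\transformed{\cost}$ and $\transformed{\payoff}$,'' and your argument is precisely that unpacking --- the state-by-state identity $\transformed{\cost}(\<e,s,r\>) = \cost(s)$ and $\transformed{\cost}(s)=\cost(s)$ summed over the length-preserving projection $\transformed{\fpath}\mapsto\transformed{\fpath}|_{\cM}$. Your side remark that the hypothesis $e\in\transformed{\Error}$ is not actually needed for the equality is also accurate.
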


\begin{lemma}
	\label{payoff-infpath}
	For each infinite path $\transformed{\infpath}$ in $\transformed{\cM}$, 
	$\transformed{\payoff}(\transformed{\infpath})=
	\payoff(\transformed{\infpath}|_{\cM})$.
\end{lemma}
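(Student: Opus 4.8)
The plan is to observe that the projection $\transformed{\infpath}\mapsto\transformed{\infpath}|_{\cM}$ replaces states one for one — it never deletes, merges, or duplicates a position — and that it is reward-preserving at every position; consequently both quantities $\transformed{\payoff}(\transformed{\infpath})$ and $\payoff(\transformed{\infpath}|_{\cM})$ are computed from termwise identical sequences of payoff values, and the equality follows.

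Concretely, I would write $\transformed{\infpath}=\transformed{s}_0\,\act_0\,\transformed{s}_1\,\act_1\ldots$ and, accordingly, $\transformed{\infpath}|_{\cM}=s_0\,\act_0\,s_1\,\act_1\ldots$. By the definition of $|_{\cM}$, for every $i\in\Nat$ exactly one of two cases occurs: either $\transformed{s}_i\in S$, in which case $s_i=\transformed{s}_i$ and $\transformed{\payoff}(\transformed{s}_i)=\payoff(s_i)$ by the clause $\transformed{\payoff}(s)=\payoff(s)$ of the definition of $\transformed{\rew}$; or $\transformed{s}_i=\<e,s,r\>\in\Repair$, in which case $s_i=s$ and $\transformed{\payoff}(\transformed{s}_i)=\transformed{\payoff}(\<e,s,r\>)=\payoff(s)=\payoff(s_i)$ by the clause $\transformed{\payoff}(\<e,s,r\>)=\payoff(s)$. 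In either case $\transformed{\payoff}(\transformed{s}_i)=\payoff(s_i)$ for all $i\in\Nat$.

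It then remains to lift this termwise identity to the payoff of the whole infinite path. For every $n\in\Nat$ we get $\sum_{i=0}^{n-1}\transformed{\payoff}(\transformed{s}_i)=\sum_{i=0}^{n-1}\payoff(s_i)$, hence the two finite-horizon averages $\frac1n\sum_{i=0}^{n-1}\transformed{\payoff}(\transformed{s}_i)$ and $\frac1n\sum_{i=0}^{n-1}\payoff(s_i)$ coincide for every $n$; taking $\liminf_{n\to\infty}$ on both sides yields $\transformed{\payoff}(\transformed{\infpath})=\payoff(\transformed{\infpath}|_{\cM})$, as claimed. (If $\transformed{\payoff}$ applied to an infinite path is instead read as the plain accumulated reward $\sum_{i\geqslant 0}\transformed{\payoff}(\transformed{s}_i)$, then the same termwise identity gives the equality directly.) I do not expect any real obstacle: the statement is pure bookkeeping, and the only point that genuinely has to be noticed is that $|_{\cM}$ is index-preserving, which is immediate from its definition as a state-by-state rewriting that leaves actions and path length untouched.
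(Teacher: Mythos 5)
Your proof is correct and matches the paper's intent: the paper gives no explicit argument, stating only that the lemma ``follows directly from definitions of $\transformed{\cost}$ and $\transformed{\payoff}$,'' and your termwise, index-preserving comparison is precisely that direct argument spelled out. Your parenthetical noting the ambiguity in how $\payoff$ of an \emph{infinite} path is to be read (liminf of averages vs.\ plain sum) is a fair observation, and your proof covers both readings.
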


The one-to-one correspondence between the paths in $\cM$ 
and in $\transformed{\cM}$
carries over to the schedulers for $\cM$ and $\transformed{\cM}$.
Given a scheduler $\sched$ for $\cM$, let $\lift{\sched}{\transformed{\cM}}$
denote the scheduler for $\transformed{\cM}$ given by 
$\lift{\sched}{\transformed{\cM}}(\transformed{\fpath}) = 
 \sched(\transformed{\fpath}|_{\cM})$
for all finite paths $\transformed{\fpath}$ of $\transformed{\cM}$.
This yields a scheduler transformation 
$\sched \mapsto \lift{\sched}{\transformed{\cM}}$ 
that maps each scheduler for $\cM$ to a scheduler for $\transformed{\cM}$.
Vice versa, given a scheduler $\transformed{\sched}$ 
for $\transformed{\cM}$ there
exists a scheduler $\transformed{\sched}|_{\cM}$ such that
$\transformed{\sched}= 
 \lift{(\transformed{\sched}|_{\cM})}{\transformed{\cM}}$.

Due to assumption \eqref{weakuntil} we have that
  $s\notin \Error$ for all repair states $\<e,s,r\>$
  that are reachable from $e$ in $\transformed{\cM}$. 
Thus, with Lemma \ref{cost-fpath} and Lemma \ref{payoff-infpath},
we obtain:

\begin{lemma}
 \label{sched-schedprime}
   Let $\sched$ be a scheduler for $\cM$ and 
   $\transformed{\sched}$ a scheduler for $\transformed{\cM}$ such that
   $\sched = \transformed{\sched}|_{\cM}$.
   Then: 
   \begin{enumerate}
   \item [(a)] 
      For each state $e\in \Error$: \
      $\Pr\nolimits^{\sched}_{\cM,e}
       \bigl(\, \Eventually \Operational \, \bigr)
       \ = \ 
       \Pr\nolimits^{\transformed{\sched}}_{\transformed{\cM},e}
       \bigl(\, \Eventually \transformed{\Operational} \, \bigr) 
      $
      and
      $$
      \Pr\nolimits^{\sched}_{\cM,e}
      \bigl(\, \Eventually^{\leqslant R}\Operational \, \bigr)
      \ \ = \ \ 
      \Pr\nolimits^{\transformed{\sched}}_{\transformed{\cM},e}
      \bigl(\, \Eventually^{\leqslant R}\transformed{\Operational} \, \bigr)
      \ \ = \ \ 
      \Pr\nolimits^{\transformed{\sched}}_{\transformed{\cM},e}
      \bigl(\, \neXt (\Repair \Until \Operational_e) \, \bigr)
      $$
      where 
      $\Operational_e \, = \, 
      \bigl\{ \, \<e,s,r\>\in \Repair \, : \, s\in \Operational \, \bigr\}$.
  \item [(b)] 
     $\Avail^{\sched}_{\cM,\sinit}
      \ =  \ 
      \Avail^{\transformed{\sched}}_{\transformed{\cM},\sinit}$
   \end{enumerate}
\end{lemma}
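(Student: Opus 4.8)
The plan is to upgrade the path correspondence $\fpath\mapsto\lift{\fpath}{\transformed{\cM}}$ of this section to a \emph{measure-preserving} bijection and then read off all four identities by transporting events, resp.\ random variables, along it. Since the transformations $\sched\mapsto\lift{\sched}{\transformed{\cM}}$ and $\transformed{\sched}\mapsto\transformed{\sched}|_{\cM}$ are mutually inverse, the hypothesis $\sched=\transformed{\sched}|_{\cM}$ is the same as $\transformed{\sched}=\lift{\sched}{\transformed{\cM}}$, so for every state $t$ the lifting restricts to a bijection from the $\sched$-runs of $\cM$ started in $t$ onto the $\transformed{\sched}$-runs of $\transformed{\cM}$ started in $t$. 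I would check measure preservation on cylinders: the cylinder spanned by a finite $\sched$-path $\fpath$ is mapped onto the cylinder spanned by $\lift{\fpath}{\transformed{\cM}}$, and these carry equal probability because the length-$i$ prefixes of $\lift{\fpath}{\transformed{\cM}}$ and $\fpath$ are related by $|_{\cM}$, so $\transformed{\sched}=\lift{\sched}{\transformed{\cM}}$ returns the same distribution on actions on them, while by construction $\transformed{P}$ matches $P$ on the corresponding transitions. As cylinders generate the $\sigma$-algebra, uniqueness of the measure extension gives $\Pr\nolimits^{\transformed{\sched}}_{\transformed{\cM},t}=(\lift{\cdot}{\transformed{\cM}})_{*}\Pr\nolimits^{\sched}_{\cM,t}$ for every $t$.

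For part~(a), fix $t=e\in\Error$ and identify the matched events. An infinite path $\infpath$ visits $\Operational$ iff $\lift{\infpath}{\transformed{\cM}}$ visits $\transformed{\Operational}$: outside a repair phase the lifting keeps states unchanged, and inside a repair phase it replaces $s$ by some $\<e,s,r\>$, which lies in $\transformed{\Operational}$ exactly when $s\in\Operational$; with measure preservation this gives the first identity. For the second, costs are non-decreasing along a path, so it suffices to compare the cost of the prefix up to the \emph{first} operational state (whose position is the same for $\infpath$ and for $\lift{\infpath}{\transformed{\cM}}$ by the previous point), and Lemma~\ref{cost-fpath} says this cost is preserved by lifting, so $\Eventually^{\leqslant R}\Operational$ from $e$ is carried onto $\Eventually^{\leqslant R}\transformed{\Operational}$ from $e$. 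The third identity is purely internal to $\transformed{\cM}$: the unique successor of $e$ is a repair state $\<e,s,\cost(e)\>\in\Repair$, and by the transition rules of $\transformed{\cM}$ a path stays among $\Repair$-states with first component $e$ precisely while the accumulated cost has not exceeded $R$ and no operational state has been seen; hence from $e$ one reaches $\transformed{\Operational}$ within cost $\leqslant R$ iff one stays in $\Repair$ until hitting $\Operational_e$, i.e.\ iff $\neXt(\Repair\Until\Operational_e)$ holds. The point needing care here is that a path leaving $\Repair$ for an $S$-state does so either from an $\Operational_e$-state or only once its cost already exceeds $R$, so such escapes never create a cost-$\leqslant R$ visit of $\transformed{\Operational}$ that $\neXt(\Repair\Until\Operational_e)$ would miss.

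For part~(b), transport the random variable instead of an event. Writing $X$, $\transformed{X}$ for the $\liminf$ mean-payoff variables of $\cM$ and $\transformed{\cM}$, the equalities $\transformed{\payoff}(\<e,s,r\>)=\payoff(s)$ and $\transformed{\payoff}(s)=\payoff(s)$ give, position by position, that $\transformed{\infpath}$ and $\transformed{\infpath}|_{\cM}$ carry the same payoffs, whence $\transformed{X}(\transformed{\infpath})=X(\transformed{\infpath}|_{\cM})$ (this is Lemma~\ref{payoff-infpath}). Applying the change-of-variables formula for the measure-preserving bijection at $t=\sinit$ yields $\Avail^{\sched}_{\cM,\sinit}=\int X\,d\Pr\nolimits^{\sched}_{\cM,\sinit}=\int\transformed{X}\,d\Pr\nolimits^{\transformed{\sched}}_{\transformed{\cM},\sinit}=\Avail^{\transformed{\sched}}_{\transformed{\cM},\sinit}$.

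I expect the main obstacle to be the first step — turning the lifting into a genuine measure isomorphism (cylinders to cylinders with matching probabilities, then extension) — together with the short case analysis establishing $\Eventually^{\leqslant R}\transformed{\Operational}\equiv\neXt(\Repair\Until\Operational_e)$ from $e$; once these are in place, the four identities follow immediately by transporting events and random variables through Lemmas~\ref{cost-fpath} and~\ref{payoff-infpath}.
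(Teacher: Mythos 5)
Your proposal is correct and follows essentially the same route as the paper, which derives the lemma directly from the one-to-one, probability-preserving correspondence between paths (and schedulers) of $\cM$ and $\transformed{\cM}$ together with Lemmas~\ref{cost-fpath} and~\ref{payoff-infpath}; you merely make explicit the cylinder-set argument and the case analysis for $\Eventually^{\leqslant R}\transformed{\Operational}\equiv\neXt(\Repair\Until\Operational_e)$ that the paper leaves implicit.
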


\begin{corollary}
\label{cor-transform}
     $\Avail^{\max}_{\cM,\sinit}
      \ = \ 
      \Avail^{\max}_{\transformed{\cM},\sinit}$
\end{corollary}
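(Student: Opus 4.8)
The plan is to upgrade the scheduler bijection $\sched \mapsto \lift{\sched}{\transformed{\cM}}$ (with inverse $\transformed{\sched} \mapsto \transformed{\sched}|_{\cM}$) established above to a bijection between the \emph{resilient} schedulers of $\cM$ and those of $\transformed{\cM}$, along which $\Avail_{\cdot,\sinit}$ is invariant; the equality of the two suprema then follows by taking suprema of equal sets of reals.

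First I would fix $\sched$ for $\cM$, set $\transformed{\sched} = \lift{\sched}{\transformed{\cM}}$, and observe that the finite $\transformed{\sched}$-paths from $\sinit$ to an error state of $\transformed{\cM}$ are precisely the lifts $\lift{\fpath}{\transformed{\cM}}$ of the finite $\sched$-paths $\fpath$ from $\sinit$ to an error state of $\cM$: indeed $\transformed{\Error} = \Error \subseteq S$, by assumption~\eqref{weakuntil} no repair state $\<e,s,r\>$ with $s \in \Error$ is reachable, the projection $\transformed{\fpath}|_{\cM}$ of a path ending in a state of $S$ ends in that same state, $\transformed{\fpath}$ equals $\lift{(\transformed{\fpath}|_{\cM})}{\transformed{\cM}}$, and the correspondence preserves positive-probability transitions, so that being a $\transformed{\sched}$-path is equivalent to its projection being a $\sched$-path. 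The one identity I would verify by hand is that residuals commute with lifting,
\[
  \residual{\bigl(\lift{\sched}{\transformed{\cM}}\bigr)}{\lift{\fpath}{\transformed{\cM}}}
  \ = \ \lift{\bigl(\residual{\sched}{\fpath}\bigr)}{\transformed{\cM}},
\]
which is immediate from unfolding both sides, using that restriction to $\cM$ distributes over concatenation of paths joined at a state of $S$ (here the shared state is the error state reached by $\fpath$).

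Given this, I would apply Lemma~\ref{sched-schedprime}(a) to the matching pair $\residual{\sched}{\fpath}$ and $\residual{\transformed{\sched}}{\lift{\fpath}{\transformed{\cM}}} = \lift{(\residual{\sched}{\fpath})}{\transformed{\cM}}$, started in the error state $e$ reached by $\fpath$, to get
\[
  \Pr\nolimits^{\residual{\sched}{\fpath}}_{\cM,e}\!\bigl(\Eventually^{\leqslant R}\Operational\bigr)
   = \Pr\nolimits^{\residual{\transformed{\sched}}{\lift{\fpath}{\transformed{\cM}}}}_{\transformed{\cM},e}\!\bigl(\Eventually^{\leqslant R}\transformed{\Operational}\bigr)
  \quad\text{and}\quad
  \Pr\nolimits^{\residual{\sched}{\fpath}}_{\cM,e}\!\bigl(\Eventually\Operational\bigr)
   = \Pr\nolimits^{\residual{\transformed{\sched}}{\lift{\fpath}{\transformed{\cM}}}}_{\transformed{\cM},e}\!\bigl(\Eventually\transformed{\Operational}\bigr).
\]
Hence conditions \eqref{res-sched} and \eqref{res-rep} hold at $\fpath$ (in $\cM$) iff they hold at $\lift{\fpath}{\transformed{\cM}}$ (in $\transformed{\cM}$); since $\fpath \mapsto \lift{\fpath}{\transformed{\cM}}$ sweeps out exactly the finite paths to error states that matter on each side, $\sched$ is resilient iff $\transformed{\sched}$ is. Combining with Lemma~\ref{sched-schedprime}(b), which gives $\Avail^{\sched}_{\cM,\sinit} = \Avail^{\transformed{\sched}}_{\transformed{\cM},\sinit}$, the bijection restricts to a bijection between resilient schedulers preserving availability, so the two sets of achievable availabilities coincide and therefore $\Avail^{\max}_{\cM,\sinit} = \Avail^{\max}_{\transformed{\cM},\sinit}$.

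I do not expect a genuine obstacle here; the only care needed is the bookkeeping in transferring the universal quantifier ``for all finite $\sched$-paths from $\sinit$ to an error state'' — namely that $\transformed{\cM}$ contributes no new reachable error states and that the path correspondence is a bijection on the relevant finite paths — together with checking that Lemma~\ref{sched-schedprime} is invoked on a truly matching pair of (residual) schedulers, which is precisely what the residual/lift commutation identity above secures. No numerical computation is involved.
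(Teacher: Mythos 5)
Your proposal is correct and follows essentially the same route as the paper: the paper's proof likewise rests on the compatibility of the residual operator with the lift/projection maps, then invokes Lemma~\ref{sched-schedprime}(a) to transfer resilience and part (b) to transfer availability. Your additional bookkeeping (that the lift is a bijection on the relevant finite paths to error states, and that no new error states arise in $\transformed{\cM}$) only makes explicit what the paper leaves implicit.
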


\begin{proof}
The above transformations $\fpath \mapsto \lift{\fpath}{\transformed{\cM}}$ 
and  $\sched \mapsto \lift{\sched}{\transformed{\cM}}$ for 
paths and schedulers of $\cM$
to paths and schedulers of $\transformed{\cM}$, and the inverse mappings
$\transformed{\fpath} \mapsto \transformed{\fpath}|_{\cM}$ and
$\transformed{\sched} \mapsto \transformed{\sched}|_{\cM}$
for paths and schedulers of $\transformed{\cM}$
to paths and schedulers of $\cM$ are compatible with the
residual operator for schedulers in the following sense:
$$
  \lift{(\residual{\sched}{\fpath})}{\transformed{\cM}}  
  \ =  \ 
  \residual{\bigl(\lift{\sched}{\transformed{\cM}}\bigr)}
           {\bigl( \lift{\fpath}{\transformed{\cM}}\big)}
  \qquad \text{and} \qquad
  \big(\residual{\transformed{\sched}}{\transformed{\fpath}}\big)|_{\cM} 
   \ =  \ 
  \residual{\big(\transformed{\sched}|_{\cM}\big)}
           {\big(\transformed{\fpath}|_{\cM}\big)}
$$
Thus, part (a) of Lemma \ref{sched-schedprime} yields that 
$\transformed{\sched}$
is resilient for $\transformed{\cM}$ if and only if $\sched$ 
is resilient for $\cM$.
Part (b) of Lemma \ref{sched-schedprime} then yields the claim.\qed
\end{proof}

The following mainly technical lemma shows that residual schedulers 
arising from resilient schedulers maintain the resilience property.

\begin{lemma}
	\label{lem-res-switch}
	Let $\sched$ be a resilient scheduler for $\transformed{\cM}$, and let $s$ be a state of $\transformed{\cM}$ such that $s \not\in \Repair$. Let $\cP$ be a set of finite $\sched$-paths initiated in $\sinit$ and terminating in~$s$, and let $\sched'$ be a scheduler for $\transformed{\cM}$  resilient for the initial state  changed to~$s$. Consider the scheduler $\sched[\cP,\sched']$ which is the same as $\sched$ except that for every finite path $w$ such that $w = w';w''$ where $w' \in \cP$ we have that  $\sched[\cP,\sched'](w) = \sched'(w'')$. Then $\sched[\cP,\sched']$ is resilient (for the initial state $\sinit$).
\end{lemma}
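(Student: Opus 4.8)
Below is the strategy I would follow.

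\medskip
\noindent\emph{Proof plan.} The plan is to verify Definition~\ref{def:res-scheduler} for $\sched[\cP,\sched']$ in $\transformed{\cM}$ directly, so fix an arbitrary finite $\sched[\cP,\sched']$-path $\fpath$ from $\sinit$ to an error state $e\in\transformed{\Error}$ and show that \eqref{res-sched} and \eqref{res-rep} hold for $\residual{(\sched[\cP,\sched'])}{\fpath}$ started in $e$; to keep $\sched[\cP,\sched']$ unambiguous I read the decomposition $w=w';w''$ with $w'\in\cP$ as using the \emph{shortest} prefix $w'\in\cP$, i.e.\ the switch to $\sched'$ happens at the first visit to a path in $\cP$. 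The first case is that $\fpath$ already has a prefix in $\cP$: letting $w'$ be the shortest such prefix and writing $\fpath=w';w''$, the path $w'$ ends in $s$, $w''$ is a path from $s$ to $e$, and $w'$ is still the shortest $\cP$-prefix of every extension of $\fpath$, so $\residual{(\sched[\cP,\sched'])}{\fpath}$ equals $\residual{\sched'}{w''}$ while $w''$ is a finite $\sched'$-path from $s$ to the error state $e$. Since $\sched'$ is resilient for initial state $s$, conditions \eqref{res-sched} and \eqref{res-rep} hold for $\residual{\sched'}{w''}=\residual{(\sched[\cP,\sched'])}{\fpath}$, and this case is done.

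In the second case $\fpath$ has no prefix in $\cP$, hence $\sched[\cP,\sched']$ agrees with $\sched$ on $\fpath$ and all its prefixes and $\fpath$ is a finite $\sched$-path from $\sinit$ to $e$. The technical core, and the only place where \eqref{weakuntil} and the hypothesis $s\notin\Repair$ are used, is the observation that \emph{no switch to $\sched'$ can occur while the recovery started at $e$ is still inside its repair phase}: after $e$ the path visits only states of $\Repair$ whose first component is $e$ until it leaves $\Repair$, by \eqref{weakuntil} it does not return to the error $e$ before reaching $\transformed{\Operational}$, and since $s\notin\Repair$ and $\fpath\notin\cP$ none of these states is the last state of a path in $\cP$; hence $\residual{(\sched[\cP,\sched'])}{\fpath}$ coincides with $\residual{\sched}{\fpath}$ on every finite path that has not yet left $\Repair$. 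By Lemma~\ref{sched-schedprime}(a) the event $\Eventually^{\leqslant R}\transformed{\Operational}$ from $e$ equals $\neXt(\Repair\Until\Operational_{e})$ with $\Operational_{e}=\{\<e,s',r'\>\in\Repair:s'\in\Operational\}$, and this event is determined by the part of the trajectory preceding the first departure from $\Repair$, so
\[
  \Pr\nolimits^{\residual{(\sched[\cP,\sched'])}{\fpath}}_{\transformed{\cM},e}\!\bigl(\Eventually^{\leqslant R}\transformed{\Operational}\bigr)
  \;=\;
  \Pr\nolimits^{\residual{\sched}{\fpath}}_{\transformed{\cM},e}\!\bigl(\Eventually^{\leqslant R}\transformed{\Operational}\bigr)
  \;\geqslant\;\wp ,
\]
the inequality being resilience of $\sched$ along the $\sched$-path $\fpath$. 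This gives \eqref{res-sched} in the second case.

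For \eqref{res-rep} in the second case I would partition the continuations of $\fpath$ according to whether and where the first switch to $\sched'$ occurs. On continuations with no switch, $\residual{(\sched[\cP,\sched'])}{\fpath}$ behaves as $\residual{\sched}{\fpath}$, so resilience of $\sched$ already forces $\transformed{\Operational}$ to be reached almost surely there; on a continuation whose first switch is at $s$, from that point on one follows $\sched'$ started in $s$, and either $\transformed{\Operational}$ has already been reached (nothing to prove), or $s$ is an error state and resilience of $\sched'$ applied to the trivial path $s$ gives $\Pr^{\sched'}_{\transformed{\cM},s}(\Eventually\transformed{\Operational})=1$, or $s$ is a non-operational, non-error state, which by the observation above can only be reached once the cost bound $R$ has already been exceeded, i.e.\ strictly inside the overflow tail of the ongoing recovery. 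I expect this last configuration to be the main obstacle, since it is not controlled by \eqref{res-sched}/\eqref{res-rep} for $\sched'$ as such: closing it requires the auxiliary fact that $\transformed{\Operational}$ is reached almost surely from $s$ under $\sched'$, which I would obtain either from an a.s.-reachability property of $\transformed{\Operational}$ natural in MDPs with repair (every invoked repair eventually terminates successfully, so $\transformed{\Operational}$ is almost surely reachable from every state) or, for the way this lemma is applied, from the fact that the relevant sets $\cP$ never contain a path ending strictly inside a recovery. Granting this, summing over the disjoint first-switch-time events yields $\Pr^{\residual{(\sched[\cP,\sched'])}{\fpath}}_{\transformed{\cM},e}(\Eventually\transformed{\Operational})=1$, which is \eqref{res-rep}, and the lemma follows.
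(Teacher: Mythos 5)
The paper never actually proves this lemma --- it is introduced as ``mainly technical'' and no argument appears in the main text or the appendix --- so there is no official proof to compare against. Your overall strategy (split on whether $\fpath$ already has a prefix in $\cP$, and in the no-switch case exploit $s\notin\Repair$ to show that the $\Repair$-portion of the recovery is scheduled identically by $\residual{\sched}{\fpath}$ and $\residual{(\sched[\cP,\sched'])}{\fpath}$) is the right one, and your treatment of \eqref{res-sched} is sound: the event $\Eventually^{\leqslant R}\transformed{\Operational}$ from $e$ is indeed decided before the path first leaves $\Repair$ (either $\Operational_e$ is hit, or the accumulated cost overflows $R$ and, costs being non-negative, the event can no longer occur), and no path in $\cP$ can end inside $\Repair$. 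The shortest-prefix reading of the decomposition $w=w';w''$ is also the only sensible disambiguation.

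The obstacle you flag for \eqref{res-rep} is not a defect of your write-up but a genuine gap in the lemma as literally stated, and you are right not to paper over it. If $s$ is a non-operational, non-error state of $\transformed{S}\setminus\Repair$ that $\sched$ reaches only after the cost bound has overflowed (so the recovery triggered by $e$ is still unfinished), then a scheduler $\sched'$ that loops forever in non-operational, non-error states from $s$ never produces a $\sched'$-path from $s$ to an error state, hence is \emph{vacuously} resilient for initial state $s$; the combined scheduler then traps a positive-probability set of continuations of $\fpath$ outside $\transformed{\Operational}$ and violates \eqref{res-rep}. A concrete instance: $e$ leads (with probability $1{-}\wp$) into a repair state whose cost overflows $R$ and drops to a plain state with two actions, one to an operational sink (chosen by $\sched$, making $\sched$ resilient) and one to a non-operational self-loop state $s$; taking $\sched'$ to be the self-loop at $s$ breaks the conclusion. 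So the lemma needs an extra hypothesis --- e.g.\ that $\Pr^{\sched'}_{\transformed{\cM},s}(\Eventually\transformed{\Operational})=1$, or that paths in $\cP$ do not end strictly inside an unfinished recovery. In the paper's only use of the lemma (the proof of Lemma~\ref{lem-claim-one}) the switch target is some $\asched_E$ with $(E,\A,\asched_E)\in\cE$, and whenever $E$ contains an error state, strong connectedness of $(E,\A,\asched_E)$ together with \eqref{res-rep} for $\asched_E$ gives almost-sure reachability of $\transformed{\Operational}$ from every $s\in E$; so of your two proposed repairs, the second (constraining how the lemma is invoked) is the one that matches the paper's needs, and you should state the strengthened hypothesis explicitly rather than leave it implicit.
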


\subsection{Solving the resilience-availability problem for $\transformed{\cM}$}
\label{sec-res-solving}
In this section, we analyze the structure of resilient schedulers for $\transformed{\cM}$ and prove the following proposition:

\begin{proposition}
\label{prop-trans-solve}
	The existence of a resilient scheduler for $\transformed{\cM}$ can be decided in polynomial time. The existence of some resilient scheduler for $\transformed{\cM}$ implies the existence of an optimal memoryless resilient scheduler for $\transformed{\cM}$ computable in polynomial time. 
\end{proposition}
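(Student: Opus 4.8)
The plan is to analyze the structure of resilient schedulers for $\transformed{\cM}$ by decomposing their behavior into a ``transient'' phase and a ``limit'' phase living in end components (ECs), following the approach of \cite{BBCFK-TwoViews14}. First I would observe that the resilience constraints \eqref{res-sched} and \eqref{res-rep} are, by construction of $\transformed{\cM}$ and by Lemma~\ref{sched-schedprime}(a), purely \emph{local} constraints on the repair gadget: whenever the system enters an error state $e \in \transformed{\Error}$, the residual scheduler must, from the corresponding copy of $e$, reach $\transformed{\Operational}$ almost surely and must reach it within cost $R$ (equivalently, reach some state of $\Operational_e$ within the $\Repair$-layer) with probability at least $\wp$. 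Since costs are recorded explicitly in the states of $\transformed{\cM}$, the event $\Eventually^{\leqslant R}\transformed{\Operational}$ becomes the unbounded reachability event $\neXt(\Repair \Until \Operational_e)$, so these are ordinary reachability-probability constraints. Deciding whether, from each $e$, there \emph{exists} a sub-scheduler on the repair gadget meeting both the $\mathord{=}1$ and $\mathord{\geq}\wp$ reachability bounds is a two-objective reachability problem on the finite MDP obtained by restricting $\transformed{\cM}$ to the $\Repair$-layer reachable from $e$; this is solvable in polynomial time via the LP of \cite{EtessamiKVY08}, and a witnessing memoryless randomized sub-scheduler exists. A resilient scheduler for $\transformed{\cM}$ exists iff such a witness exists for every reachable $e \in \Error$, which gives the decidability claim.

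For optimality, I would argue in two further steps. \emph{Step 1 (limit behavior is confined to "good" end components).} Any scheduler's mean payoff is determined by its long-run frequency over ECs; by the standard theory, $\Avail^{\max}_{\transformed{\cM},\sinit}$ is attained by considering, for each EC $\cE$ reachable under \emph{some} resilient scheduler, the best achievable average payoff of a memoryless scheduler that stays in $\cE$ and is resilient ``inside'' $\cE$ (i.e.\ every error state of $\cE$ still admits the $\mathord=1$ and $\mathord\geq\wp$ repair guarantees, now using paths that must remain in $\cE$). Crucially, Lemma~\ref{lem-res-switch} lets me splice: once a resilient scheduler reaches a state $s \notin \Repair$ of such an EC, I may overwrite its future behavior with the fixed memoryless EC-scheduler without destroying resilience, because the replacement is itself resilient from $s$. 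So I would first compute the set of such "resilient ECs" and, for each, a memoryless resilient scheduler maximizing the in-EC average payoff (again an LP; the multi-constraint in-EC LP of \cite{Chatterjee2007,BBCFK-TwoViews14} handles the simultaneous payoff-maximization and reachability constraints). \emph{Step 2 (the transient part can be made memoryless).} It remains to choose how a resilient scheduler routes probability mass from $\sinit$ into these ECs so as to maximize the weighted combination of their optimal in-EC payoffs, subject to still being resilient at every error state encountered \emph{before} committing to an EC. I would set this up as a single LP over ``flow'' variables $x_{s,\alpha}$ (expected number of uses of $(s,\alpha)$) together with the per-error-state reachability constraints of type \eqref{res-sched}/\eqref{res-rep} imposed locally in each repair gadget, plus ``switching'' variables distributing the flow into the ECs; the optimal solution yields a memoryless randomized scheduler, and combining it with the Step-1 EC-schedulers (justified by Lemma~\ref{lem-res-switch}) produces a single memoryless resilient scheduler attaining $\Avail^{\max}_{\transformed{\cM},\sinit}$. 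All LPs have size polynomial in $|\transformed{\cM}|$, giving the polynomial-time bound.

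The main obstacle I anticipate is the interaction between the resilience constraints and the EC/transient decomposition: unlike in the classical setting, "being resilient" is not a tail property, so I cannot simply optimize mean payoff over all ECs and then patch in an arbitrary transient strategy — I must ensure that \emph{every} error state visited on the way into an EC (and inside it) independently satisfies the repair guarantees. The key technical point making this work is that $\transformed{\cM}$ records accumulated repair cost in its states, so the bounded-cost event is memoryless there, and that Lemma~\ref{lem-res-switch} permits local replacement of sub-schedulers without global side effects; the careful part is verifying that the LP encoding of "resilient at every reachable error state" is both correct (every feasible flow corresponds to a genuine resilient memoryless scheduler) and complete (every resilient scheduler's behavior is dominated by some feasible flow), which is where I expect the bulk of the work to go.
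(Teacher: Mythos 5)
Your overall architecture (confine the limit behavior to end components, splice via Lemma~\ref{lem-res-switch}, route the transient flow by an LP) matches the paper's, but two of its load-bearing ideas are missing. First, you do not say how to obtain, in polynomial time, the finitely many candidate end components together with optimal memoryless schedulers for them. ``For each EC reachable under some resilient scheduler'' is not effective: there can be exponentially many end components, and inside an end component the constraint \eqref{res-sched} is a \emph{per-visit reachability probability}, which is not a linear function of the state-action frequencies used in mean-payoff LPs, so the ``multi-constraint in-EC LP'' you invoke does not directly apply. The paper's device is to relax resilience to \emph{average-resilience}: for each $e\in\Error$ it introduces the weight function $\wgt_e$ (value $1{-}\wp$ on successful-repair states, $-\wp$ on budget-overrun states) and requires $\fMP_e^{\sched}\ge 0$, a genuine multiple-mean-payoff constraint solvable by the methods of \cite{BBCFK-TwoViews14}; Lemma~\ref{lem-average-res} shows this is necessary for resilience, and Lemma~\ref{lem-resilient} shows that for an MR-scheduler on a strongly connected end component it is also sufficient (the per-visit success probability is history-independent, so it is either always at least $\wp$ or the average constraint already fails). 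On top of this, Algorithm~\ref{alg-comp-compute} iteratively prunes the MDP and recomputes, because a resilient scheduler may be forced into an end component that is suboptimal for the full MDP; this iteration is what produces a polynomial-size set $\cE$ covering every end component any scheduler can stay in (Lemma~\ref{lem-cover}). Without a substitute for these steps your Step~1 is not an algorithm.

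Second, your splice argument shows that resilience is preserved but not that availability is preserved. When a resilient scheduler enters a state $s$ of a candidate end component $E$, its residual availability from $s$ may exceed $\Avail(E)$ (it may intend to leave $E$ again), so overwriting with the in-EC scheduler can strictly decrease the value. The paper's proof of Lemma~\ref{lem-claim-one} spends most of its effort on exactly this point, via the notion of \emph{offending} components and a surgery that reroutes, at small probability cost, the mass that would otherwise stay in an offending component. Relatedly, your reduction of the existence question to ``every reachable error state has a local witness'' is circular (which error states are reachable depends on the scheduler), and is anyway not how the decision is made: in the paper, nonexistence is certified by infeasibility of the final flow LP, whose resiliency constraint is the linear surrogate $\sum_{s\in\Operational_e} y_s \ \ge\ \wp\cdot y_e$ on expected visit counts rather than a per-visit reachability constraint. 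Your Step~2 flow LP is right in spirit, but you should state this linearization explicitly and justify why, for the induced MR-scheduler, the ratio of expected visit counts equals the per-visit success probability.
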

Note that Theorem~\ref{thm-main} follows immediately from 
Proposition~\ref{prop-trans-solve} and Corollary~\ref{cor-transform}.

We start by introducing some notions. A \emph{fragment} of $\transformed{\cM} = (\transformed{S},\transformed{\Act}, \transformed{P},\sinit)$ is a pair $(F,\A)$ where $F \subseteq \transformed{S}$ and $\A\colon F \rightarrow 2^{\transformed{\Act}}$ is a function such that $\A(s) \neq \varnothing$ and $\A(s) \subseteq \transformed{\Act}(s)$ for every $s \in F$. 
An \emph{MR-scheduler for $(F,\A)$} is a function $\sched_F$ assigning a probability distribution over $\A(s)$ to every $s \in F$. We say that a scheduler $\sched$ for $\transformed{\cM}$ is \emph{consistent} with $\sched_F$ if for every $\fpath\in\FinPaths$ ending in a state of $F$ we have that $\sched(\pi) = \sched_{F}(\pi)$.

An \emph{end component} of $\transformed{\cM}$ is a fragment $(E,\A)$ of $\transformed{\cM}$ such that 
\begin{itemize}
\item 
  $(E,\A)$ is strongly connected, i.e., for all $s,s' \in E$ there is a finite path $s_0\alpha_0 s_1\ldots \alpha_{n-1} s_n$ from $s=s_0$ to $s'=s_n$ such that $s_i \in E$ and $\alpha_i \in \A(s_{i})$ for all $0\leq i < n$;
\item 
 for all $s \in E$, $\alpha \in \A(s)$, and $s' \in \transformed{S}$ such that $\transformed{P}(s,\alpha,s') > 0$ we have $s' \in E$.
\end{itemize}
Let $\sched$ be a scheduler for $\transformed{\cM}$ (not necessarily resilient). For every infinite path~$\infpath$, let $F_\infpath$ be the set of states occurring infinitely often in~$\infpath$. For every $s \in F_\infpath$, let $\A_\infpath(s)$ be the set of all actions executed infinitely often from $s$ along~$\infpath$. For a fragment $(F,\A)$, let $\Path(F,\A)$ be the set of all infinite paths $\infpath$ such that $F_{\infpath} = F$ and $\A_{\infpath} = \A$, 
and let $\Pr\nolimits^{\sched}_{\transformed{\cM},\sinit}(F,\A)$ be the probability of all $\infpath \in \Path(F,\A)$ starting in $\sinit$. If $(F,\A)$ is \emph{not} an end component, then clearly $\Pr\nolimits^{\sched}_{\transformed{\cM},\sinit}(F,\A) = 0$. Hence, there are end components $(F_1,\A_1),\ldots,(F_m,\A_m)$ such that:
\begin{center}
   $\Pr\nolimits^{\sched}_{\transformed{\cM},\sinit}(F_i,\A_i) > 0$ 
   for all $i \leq m$, and 
   $\sum\limits_{i=1}^m  
     \Pr\nolimits^{\sched}_{\transformed{\cM},\sinit}(F_i,\A_i) = 1$ 
\end{center}
We say that $\sched$ \emph{stays} in these end components.

Proposition~\ref{prop-trans-solve} is proved as follows. We show that there is a set $\cE$, computable in time polynomial in $|\transformed{\cM}|$, consisting of triples of the form  $(E,\A,\sched_E)$ such that $(E,\A)$ is an end component of $\transformed{\cM}$ and $\sched_E$ is an MR-scheduler for $(E,\A)$, satisfying the following conditions (E1) and (E2):
\begin{description}
	\item [(E1)] If $(E,\A,\sched_E),(E',\A',\sched_{E'}) \in \cE$, 
  	then the two triples are either the same or $E \cap E' = \varnothing$. 
  	\item [(E2)]
          Every $(E,\A,\sched_E) \in \cE$ is \emph{strongly connected}, i.e., the directed graph $(E,\to)$, where $s \to s'$ iff
 there is some $\alpha \in \A(s)$ such that 
 $\sched_E(s)(\alpha) > 0$ and $\transformed{P}(s,\alpha,s')>0$, 
 is strongly connected.  
 (In this case, $E$ is a bottom strongly connected component of 
  the Markov chain induced by $\sched_E$.)
\end{description}
Further, we can safely restrict ourselves to resilient schedulers whose long-run behavior is captured by some subset $\cE' \subseteq \cE$ in the following sense:  
\begin{lemma}
\label{lem-claim-one}
Given the set $\cE$, %
for every resilient scheduler $\rsched$ there exist a set
$\cE' \subseteq \cE$ and a resilient scheduler $\rsched'$ such that
\begin{itemize}
	\item almost all $\rsched'$-paths starting in $\sinit$ visit a state of $\bigcup_{(E,\A,\sched_E) \in \cE'} E$,
	\item $\rsched'$ is consistent with $\sched_E$ for every $(E,\A,\sched_E) \in \cE'$,
	\item $\Avail^{\rsched}_{\transformed{\cM},\sinit} \ \leq \ \Avail^{\rsched'}_{\transformed{\cM},\sinit}$.
\end{itemize}	
\end{lemma}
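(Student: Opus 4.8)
The plan is to take an arbitrary resilient scheduler $\rsched$, decompose its long‑run behaviour into end components, replace each of them by the matching triple from $\cE$, and glue everything back together using Lemma~\ref{lem-res-switch}. I would start from the standard fact that $\rsched$ stays in finitely many end components $(F_1,\A_1),\dots,(F_m,\A_m)$, with $p_i\eqdef\Pr\nolimits^{\rsched}_{\transformed\cM,\sinit}(F_i,\A_i)>0$ and $\sum_{i=1}^m p_i=1$. Since the mean payoff of a run is a tail quantity fixed by its limit end component, $\Avail^{\rsched}_{\transformed\cM,\sinit}=\sum_{i=1}^m p_i\,v_i$, where $v_i$ is the conditional expected mean payoff of $\rsched$ given that the run stays in $(F_i,\A_i)$.

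Next I would exploit the way $\cE$ is constructed. For every end component $(F,\A)$ of $\transformed\cM$ one considers the sub‑MDP it induces together with the side constraints, one per error state $e\in F$, that from $e$ the run reaches $\Operational_e$ (i.e.\ finishes the repair episode started at $e$) within the recorded cost with probability at least $\wp$ and reaches $\Operational_e$ almost surely; by Lemma~\ref{sched-schedprime}(a) these are plain reachability constraints, local to a single episode, so by the constrained‑MDP machinery of~\cite{BBCFK-TwoViews14,Kallenberg} the maximal mean payoff $v^{\star}_{(F,\A)}$ achievable under them is attained by an MR‑scheduler $\sched_F$ for $(F,\A)$, and every resilient scheduler confined to $(F,\A)$ has conditional mean payoff at most $v^{\star}_{(F,\A)}$; in particular $v_i\le v^{\star}_{(F_i,\A_i)}$. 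The triples put into $\cE$ are the bottom strongly connected components of the chains induced by the various $\sched_F$, each paired with the restriction of $\sched_F$ to it; they are internally resilient and strongly connected (this is (E2)), and — processing maximal end components to resolve overlaps — pairwise disjoint when distinct (this is (E1)). For each $i$ I choose the bottom SCC $E_i$ of $\sched_{F_i}$'s chain with the largest mean payoff $g_i$; since $g_i$ is at least the $\sched_{F_i}$‑weighted average $v^{\star}_{(F_i,\A_i)}$ of the bottom‑SCC mean payoffs, we get $g_i\ge v^{\star}_{(F_i,\A_i)}\ge v_i$. I set $\cE'=\{(E_i,\cdot,\sched_{E_i}) : 1\le i\le m\}\subseteq\cE$ with $\sched_{E_i}=\sched_{F_i}|_{E_i}$.

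I would then build $\rsched'$ as follows. It simulates $\rsched$, and at each step flips an independent fair coin; at the first step~$n$ that comes up heads (a time~$T$ that is almost surely finite) it inspects the history $\fpath_n$, forms the conditional probabilities $q_i=\Pr\nolimits^{\residual{\rsched}{\fpath_n}}_{\transformed\cM,\last(\fpath_n)}(F_i,\A_i)$, samples an index~$i$ from $(q_1,\dots,q_m)$, and thereafter follows a scheduler confined to $(F_i,\A_i)$ that is resilient for initial state $\last(\fpath_n)$, reaches $E_i$ with probability one, and switches to $\sched_{E_i}$ the moment the run hits $E_i$ (so it stays in $E_i$ forever, playing $\sched_{E_i}$ whenever in any $E_k$, which secures the consistency clause). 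Two observations finish the argument. First, $n\mapsto\Pr\nolimits^{\residual{\rsched}{\fpath_n}}_{\transformed\cM,\last(\fpath_n)}(F_i,\A_i)$ is a bounded martingale under $\rsched$ with mean $p_i$, so by optional stopping $\rsched'$ settles in $E_i$ with probability exactly $p_i$; hence almost every $\rsched'$‑run visits a state of $\bigcup_{(E,\cdot,\cdot)\in\cE'}E$, $\rsched'$ is consistent with each $\sched_E$ in $\cE'$, and $\Avail^{\rsched'}_{\transformed\cM,\sinit}=\sum_i p_i\,g_i\ge\sum_i p_i\,v_i=\Avail^{\rsched}_{\transformed\cM,\sinit}$. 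Second, $\rsched'$ is resilient: up to time~$T$ it coincides with the resilient scheduler $\rsched$, the tail from the resolving state~$s\notin\Repair$ is by construction resilient for initial state~$s$, so Lemma~\ref{lem-res-switch} applied with $\cP$ the set of $\rsched'$‑paths that resolve in~$s$ upgrades this to resilience for $\sinit$.

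The main obstacle is the tail scheduler invoked after resolving: inside $(F_i,\A_i)$ one must steer the run into the target bottom SCC $E_i$ with probability one while never breaking a recovery obligation, since the steering can pass through error states whose repair episodes must still be completed within cost $R$ with probability at least $\wp$. The cleanest route I see is to handle every such episode exactly as the resilience‑safe $\sched_{F_i}$ does and to redirect towards $E_i$ only at non‑error, non‑episode states, using strong connectedness of $(F_i,\A_i)$ to reach $E_i$ almost surely (and taking care that this redirection does not get absorbed into some $E_k\in\cE'$ first); making this precise, together with the optional‑stopping bookkeeping and the verification that the tail is genuinely resilient for its initial state, is the delicate part. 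The remaining ingredients — the constrained linear program defining $\cE$, the bound $v_i\le v^{\star}_{(F_i,\A_i)}$, and the checks of (E1) and (E2) — are routine given the tools of~\cite{BBCFK-TwoViews14}.
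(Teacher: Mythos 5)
Your high-level plan (decompose the limit behaviour of $\rsched$ into end components, replace each by a triple of $\cE$, and glue with Lemma~\ref{lem-res-switch}) matches the paper's, but two of your steps have genuine gaps, and they sit exactly where the real difficulty of this lemma lies. First, the inequality $v_i\le g_i$. The set $\cE$ is \emph{given} (it is the output of Algorithm~\ref{alg-comp-compute}, built from the linear program of \cite{BBCFK-TwoViews14} for the \emph{average-resilience} relaxation, with iterative pruning), so you do not get to re-derive it as ``bottom SCCs of optimal schedulers for an LP with per-episode resilience constraints.'' The cited constrained-MDP machinery handles linear (expectation/mean-payoff) constraints; it does not directly optimize under the requirement that a cost-bounded reachability probability hold at \emph{every} visit to an error state, which is why the paper introduces average-resilience at all. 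Moreover, even granting an optimal value $v^{\star}_{(F_i,\A_i)}$ for schedulers confined to $(F_i,\A_i)$, your step $v_i\le v^{\star}_{(F_i,\A_i)}$ does not follow from ``every scheduler confined to $(F_i,\A_i)$ has value at most $v^{\star}$'': $\rsched$ is \emph{not} confined to $(F_i,\A_i)$, it merely stays there with positive probability, and conditioning on that tail event does not produce a scheduler for the sub-MDP. Closing this hole is precisely the content of the paper's Lemma~\ref{lem-avegare-avail}, whose proof needs an $\varepsilon$-approximation of the leaving transitions plus the Pareto-closedness result of \cite{BBCFK-TwoViews14}; it cannot be waved through as routine.

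Second, the tail of your construction. You switch at an independent geometric time $T$, so the resolving state $\last(\fpath_T)$ can lie in $\Repair$ (where Lemma~\ref{lem-res-switch} is inapplicable and a recovery obligation with partially spent budget is pending) and can lie outside $F_i$ altogether, since at time $T$ the run need not yet have entered the component it will eventually stay in. You then need a scheduler from $\last(\fpath_T)$ that is resilient for that initial state \emph{and} reaches the designated $E_i$ almost surely; you flag this as ``the delicate part'' but give no construction, and none is routine: steering towards a specific $E_i$ may force passages through error states where the cost-bounded recovery constraint cannot be met, and this is exactly the obstruction that forces the paper's ``offending component'' analysis. The paper sidesteps the steering problem entirely by switching at the \emph{first visit} to some $E$ with $(E,\A,\asched_E)\in\cE'$ (guaranteed to happen almost surely and at a non-$\Repair$ state), and then spends most of the proof on the case where such a switch would \emph{lose} availability (the residual of $\rsched$ at the entry point beats $\Avail(E)$), removing offending components one at a time by a careful surgery on $\Pi^{\delta}$-prefixes. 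Your scheme does avoid the offending case in the accounting (you compare $\sum_i p_i g_i$ with $\sum_i p_i v_i$ globally rather than at entry points), which is an attractive idea, but it buys this only by pushing the entire difficulty into the unconstructed tail scheduler. As written, the proof is incomplete at both of these points.
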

Using Lemma~\ref{lem-claim-one}, we prove the following:
\begin{lemma}
\label{lem-claim-two}
Given the set $\cE$,  %
there is a linear program $\cL$ computable in time polynomial in $|\transformed{\cM}|$ satisfying the following: If $\cL$ is not feasible, then there is no resilient scheduler for $\transformed{\cM}$. Otherwise, there is a subset $\cE' \subseteq \cE$ and an MR-scheduler $\sched_{F}$ for the fragment $(F,\A)$ with $F = \transformed{S} \setminus \bigcup_{(E,\A,\sched_E) \in \cE'} E$ and $\A(s) =  \transformed{\Act}(s)$ for every $s \in F$ such that
\begin{itemize}
	\item $\cE'$ and $\sched_{F}$ are computable in time polynomial in $|\transformed{\cM}|$,
	\item the scheduler $\rsched$ consistent with $\sched_{F}$ and $\sched_{E}$ for every $(E,\A,\sched_E) \in \cE'$
		is resilient, and
	\item for every resilient scheduler $\rsched'$ we have that  
	$\Avail^{\rsched}_{\transformed{\cM},\sinit} \ \geq \ \Avail^{\rsched'}_{\transformed{\cM},\sinit}$.
\end{itemize}	
\end{lemma}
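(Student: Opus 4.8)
The plan is to cast the search for an optimal resilient scheduler of $\transformed{\cM}$ as a single linear program $\cL$, obtained by enriching the classical average-reward LP for MDPs (cf.\ \cite{Kallenberg,BBCFK-TwoViews14}) with auxiliary flow variables and one linear inequality per error state that encodes the cost-bounded recovery requirement. Lemma~\ref{lem-claim-one} is what reduces the problem to such an LP: it lets us restrict attention to resilient schedulers that, for some $\cE'\subseteq\cE$, reach $\bigcup_{(E,\A,\sched_E)\in\cE'}E$ almost surely and, inside each such $E$, play the fixed MR-scheduler $\sched_E$. For such a scheduler the only remaining freedom is the behaviour in the transient region $\transformed{S}\setminus\bigcup_{(E,\A,\sched_E)\in\cE'}E$, and its availability equals $\sum_{(E,\A,\sched_E)\in\cE'}z_E\,a_E$, where $z_E$ is the probability that it is absorbed into $E$ and $a_E=\sum_{s\in E}\mu_E(s)\,\payoff(s)$ is the mean payoff of the bottom strongly connected component induced by $\sched_E$ (a constant by~\textbf{(E2)}, with $\mu_E$ the corresponding stationary distribution). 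This is exactly the quantity $\cL$ will optimise.

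Concretely, $\cL$ uses transient-flow variables $y_{s,\alpha}\ge 0$ for the ``operational-phase'' states (operational states of $S$, the states $\<e,s,r\>$ with $s\in\Operational$, and the non-operational non-error states of $S$); for each error state $e$ a ``repair'' reachability flow $f^{e}_{s,\alpha}\ge 0$ over $\{e\}$ together with the states $\<e,s,r\>$ with $s\notin\Operational$; and mass variables $z_E\ge 0$, one per triple in $\cE$. Its constraints are the usual flow-conservation equations for $y$ (with $\sinit$ as unit source and each $\cE$-component $E$ absorbing mass $z_E\mu_E(\cdot)$), with the $y$-flow arriving at an error state $e$ serving as the source of $f^{e}$, and each $f^{e}$ in turn re-injecting into the $y$-system the portion $L_e$ of its flow that enters a state $\<e,s,r\>$ with $s\in\Operational$ and the overflow portion $O_e$ that re-enters $S$; flow conservation for every $f^{e}$ with a ``no leftover'' equation $L_e+O_e=(\text{source of }f^{e})$ (so that $f^{e}$ cannot get trapped inside the repair part of $e$); the normalisation $\sum_{(E,\A,\sched_E)\in\cE}z_E=1$; and, crucially, the resilience inequality $L_e\ \ge\ \wp\cdot(\text{source of }f^{e})$ for every error state $e$. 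The objective is to maximise $\sum_{(E,\A,\sched_E)\in\cE}z_E\,a_E$. All coefficients are rationals of polynomial bit-size, so $\cL$ is constructible and solvable in time polynomial in $|\transformed{\cM}|$.

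From an optimal solution of $\cL$ one reads off $\cE'=\{(E,\A,\sched_E)\in\cE:z_E>0\}$, the fragment $(F,\A)$ with $F=\transformed{S}\setminus\bigcup_{(E,\A,\sched_E)\in\cE'}E$ and $\A(s)=\transformed{\Act}(s)$, and the MR-scheduler $\sched_F$ obtained by normalising the relevant flow in each state of $F$ ($y$ on operational-phase states, $f^{e}$ on the repair part of $e$, arbitrary where the flow vanishes); then $\rsched$ is the scheduler consistent with $\sched_F$ and all $\sched_E$, $(E,\A,\sched_E)\in\cE'$. It then remains to prove two things. First, every feasible point of $\cL$ has objective value at most $\Avail^{\max}_{\transformed{\cM},\sinit}$ over resilient schedulers, while conversely every resilient scheduler yields — via Lemma~\ref{lem-claim-one} followed by reading its occupation measures and absorption probabilities off as $y,f^{e},z_E$ — a feasible point of $\cL$ of value at least its own availability; here the resilience inequality holds because, by Lemma~\ref{sched-schedprime}(a), $L_e/(\text{source of }f^{e})$ is exactly the cost-bounded recovery probability from $e$, which is $\ge\wp$ at every visited error state and the inequality is vacuous elsewhere. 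Second, $\rsched$ is resilient with $\Avail^{\rsched}_{\transformed{\cM},\sinit}$ equal to the optimum of $\cL$: the availability claim is the standard LP-to-scheduler correspondence (the $y$-flow is absorbed into the $\cE$-components with probabilities $z_E$ since $\sum z_E=1$, each contributing $a_E$ in the limit), and the resilience claim combines the cost-bounded bound \eqref{res-sched} (again the resilience inequality, read through Lemma~\ref{sched-schedprime}(a)) with condition \eqref{res-rep}, which follows from the ``no leftover'' equations forcing almost-sure exit from every repair part together with the construction of $\cE$, which guarantees that an operational state is then visited almost surely; Lemma~\ref{lem-res-switch} is used to glue these per-error guarantees into resilience of $\rsched$. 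Together with the first part this gives both ``$\cL$ infeasible iff no resilient scheduler exists'' and the optimality of $\rsched$.

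I expect the main obstacle to be precisely this LP-to-scheduler correspondence and its bookkeeping: identifying ``$e$ reachable under $\rsched$'' with ``source of $f^{e}$ positive''; treating separately the error states lying inside an end component of $\cE$ (where \eqref{res-sched} and \eqref{res-rep} are delivered internally by the choice of $\sched_E$, not by an $f^{e}$-constraint); ruling out $f^{e}$-flow, and overflow $y$-flow, that could be absorbed into an end component without operational states — which needs an almost-sure-reachability preprocessing of the repair part of $\transformed{\cM}$ and matching care in how $\cE$ is built; and checking that rerouting a resilient scheduler's repair-phase behaviour to the flow-induced one is harmless for availability, which it is because repair states carry payoff~$0$ and are transient. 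Once these points are settled, feasibility of $\cL$, optimality of $\rsched$, and the polynomial running time all follow at once.
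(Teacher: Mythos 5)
Your LP is essentially the paper's, just written directly on $\transformed{\cM}$: the paper routes the same idea through an auxiliary MDP $\cN$ with fresh $\tau$-transitions from $E\cap\transformed{\Operational}$ to absorbing states $\goal_E$ carrying reward $\Avail(E)$, so that your mass variables $z_E$ are its expected visit counts $y_{\goal_E}$, your objective $\sum_E z_E a_E$ is its expected total reward, and your resilience inequality is its constraint $\sum_{s\in\Operational_e}y_s\geq\wp\cdot y_e$. Your separate per-error flows $f^{e}$ are redundant, since in $\transformed{\cM}$ the repair states $\<e,s,r\>$ are already tagged by the triggering error $e$, so one system of flow equations does all the bookkeeping. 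Two points, however, need more than you give them.

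First, the genuine gap: your completeness direction --- ``every resilient scheduler yields, by reading off its occupation measures and absorption probabilities, a feasible point of $\cL$ of value at least its own availability'' --- is false as stated. A resilient scheduler may reach $\bigcup_{(E,\A,\sched_E)\in\cE'}E$ almost surely and yet have \emph{infinite} expected occupation measures $y_{s,\alpha}$ on the transient part, in which case it induces no feasible point at all. The paper isolates exactly this: its Lemma~\ref{lem:reach_correctness} covers only schedulers with finite expected number of steps before reaching $\goal$, and a separate approximation argument (Lemma~\ref{lem:completeness}), which truncates an arbitrary resilient scheduler after $i$ steps and switches to the LP-induced scheduler $\rsched_\cN$, is what shows the LP optimum dominates the supremum over \emph{all} resilient schedulers. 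Without this step your proof only establishes optimality within the subclass of schedulers with finite expected absorption time. Second, your phrase ``each component $E$ absorbing mass $z_E\mu_E(\cdot)$'' pins the entry distribution into $E$ to the stationary distribution of $\sched_E$, which an arbitrary scheduler's flow need not satisfy; the clean formulation is the paper's, namely to let the flow circulate inside $E$ and leave through a $\tau$-exit available at every state of $E\cap\transformed{\Operational}$, and then to prove (Lemmas~\ref{lem:same_value} and~\ref{lem:NtoM}) that switching to $\sched_E$ immediately upon entering $E$, rather than where the LP solution happens to take the exit, does not change the value. This last point is the subtlety the paper flags explicitly, and your sketch does not address it.
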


In the next subsections, we show how to compute the set $\cE$ satisfying conditions (E1) and (E2) in polynomial time and provide proofs for Lemmas~\ref{lem-claim-one} and~\ref{lem-claim-two}.
Note that Proposition~\ref{prop-trans-solve} then follows from Lemma~\ref{lem-claim-two} and the polynomial-time computability of $\cE$.

\subsubsection{Constructing the set $\cE$.}
For each $e\in \Error$, we define the weight function $\wgt_e\colon \transformed{S}\to \Rational$ given by
\[
\begin{array}{lcll}
\wgt_e(\<e,s,r\>) & = & 1{-}\wp \quad
&
\text{if $s \in \Operational$}
\\[1.5ex]

\wgt_e(\<e,s,r\>) & = & - \wp 
&
\text{if $s \notin \Operational$ and $r{+}\cost(s) > R$}
\end{array}
\]
and $\wgt_e(\transformed{s})=0$ otherwise (in particular, 
for all states in $\transformed{s} \in \transformed{S}$ that do not have the form $\<e,s,r\>$). For every scheduler $\sched$, let $\fMP_e^\sched$ be the expected value (under $\Pr\nolimits^{\sched}_{\transformed{\cM},\sinit}$) of the random variable $X_e$ assigning to each infinite $\sched$-path $\infpath = s_0 \, \act_0 \, s_1 \, \act_1 \, s_2 \, \act_2 \ldots$
the value
\begin{center}
   $X_e(\infpath) \ \ = \ \ 
   \liminf\limits_{n \to \infty} \
       \frac{1}{n} \sum\limits_{i=0}^{n-1} \wgt_e(s_i)$.
\end{center}
We say that a scheduler $\sched$ for $\transformed{\cM}$ is \emph{average-resilient} if $\fMP_e^\sched \geq 0$ for all $e \in \Error$. Note that if $\rsched$ is a resilient scheduler for $\transformed{\cM}$, then $X_e(\infpath) \geq 0$ for almost all $\infpath$ (this follows by a straightforward application of the strong law of large numbers). Thus, we obtain:

\begin{lemma}
	\label{lem-average-res}
	Every resilient scheduler for $\transformed{\cM}$ is average-resilient. 
\end{lemma}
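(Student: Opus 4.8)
\emph{Proof plan.}
Fix $e\in\Error$ and a resilient scheduler $\rsched$ for $\transformed\cM$. The plan is to establish the pointwise bound $X_e(\infpath)\ge 0$ for $\Pr\nolimits^{\rsched}_{\transformed\cM,\sinit}$-almost every infinite path $\infpath$; since $\wgt_e$ takes only the values $-\wp$, $0$ and $1{-}\wp$, the random variable $X_e$ is bounded, hence integrable, so $\fMP_e^{\rsched}$, which is its expectation under $\Pr\nolimits^{\rsched}_{\transformed\cM,\sinit}$, is then $\ge 0$ at once.

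The first step is a combinatorial analysis of $\transformed\cM$ around $e$. By assumption~\eqref{weakuntil} and the definition of $\transformed P$, every visit of $e$ opens an \emph{episode}: the maximal block of subsequent states of the form $\<e,\cdot,\cdot\>$ that it induces, along which the cost component is nondecreasing. Inspecting the two transition rules for states in $\Repair$ shows that \emph{every} state of such a block has weight $\wgt_e=0$ except its last state $\<e,s,r\>$, which has weight $1{-}\wp$ when $s\in\Operational$ (a \emph{success}) and weight $-\wp$ when $s\notin\Operational$ and $r{+}\cost(s)>R$ (an \emph{overflow}). By Lemma~\ref{sched-schedprime}(a), reaching a success state coincides with the event $\Eventually^{\leqslant R}\transformed\Operational$ for the scheduler residual at the start of the episode, and \eqref{res-rep} rules out an episode running forever (otherwise $\transformed\Operational$ would never be reached); hence almost surely every episode terminates as a success or as an overflow. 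Writing $Y_k\in\{+1,-1\}$ for the outcome of the $k$-th episode ($Y_k=+1$ iff it is a success), $Z_k:=\mathbf 1[Y_k=1]-\wp$, and $C(n)$ for the number of episodes whose last state occurs strictly before step $n$, we obtain the exact identity $\sum_{i=0}^{n-1}\wgt_e(s_i)=\sum_{k=1}^{C(n)}Z_k$, valid for all $n$ and almost every $\infpath$.

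The analytic core is a strong-law estimate for $(Z_k)_k$: these are bounded, $|Z_k|\le 1$, and although not independent, instantiating \eqref{res-sched} at every finite $\rsched$-path reaching $e$ (together with the identification from Lemma~\ref{sched-schedprime}(a)) yields precisely $\Exp[\,Z_k\mid\mathcal H_k\,]\ge 0$, where $\mathcal H_k$ is the $\sigma$-algebra of the history up to the $k$-th visit of $e$. Therefore $M_j:=\sum_{k=1}^{j}\bigl(Z_k-\Exp[Z_k\mid\mathcal H_k]\bigr)$ is a martingale with bounded increments, so $M_j/j\to 0$ almost surely by the strong law of large numbers for such martingales, and since the subtracted conditional means are nonnegative we get $\liminf_{j\to\infty}\tfrac1j\sum_{k=1}^{j}Z_k\ge\liminf_{j\to\infty}M_j/j=0$ almost surely. (Equivalently one may couple the indicators $\mathbf 1[Y_k=1]$ from below with an i.i.d.\ $\mathrm{Bernoulli}(\wp)$ sequence and invoke the ordinary strong law of large numbers.)

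Finally I would combine the two parts. On the event that $e$ is visited only finitely often, $\sum_{i=0}^{n-1}\wgt_e(s_i)$ is eventually constant, so $X_e(\infpath)=0$. Otherwise $C(n)\to\infty$; since $C(n)/n\in[0,1]$, from $\tfrac1m\sum_{k=1}^{m}Z_k\ge-\varepsilon$ for all large $m$ we get $\tfrac1n\sum_{i=0}^{n-1}\wgt_e(s_i)=\tfrac{C(n)}{n}\cdot\tfrac1{C(n)}\sum_{k=1}^{C(n)}Z_k\ge-\varepsilon$ for all large $n$, and letting $\varepsilon\downarrow 0$ gives $X_e(\infpath)\ge 0$ a.s. The only genuinely delicate point is the bookkeeping behind the telescoping identity $\sum_i\wgt_e(s_i)=\sum_kZ_k$, namely verifying that each episode contributes exactly one nonzero weight and terminates almost surely, together with getting the filtration in the strong-law step right; the probabilistic content is otherwise routine.
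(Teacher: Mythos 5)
Your proof is correct and takes essentially the same approach as the paper, which disposes of this lemma in a single sentence by asserting that $X_e(\infpath)\geq 0$ for almost all $\infpath$ ``by a straightforward application of the strong law of large numbers''; your episode decomposition, the conditional bound $\Exp[Z_k\mid\mathcal{H}_k]\geq 0$ from \eqref{res-sched} via Lemma~\ref{sched-schedprime}(a), and the martingale SLLN are exactly the details that assertion leaves implicit. Taking expectations of the bounded variable $X_e$ then yields $\fMP_e^{\rsched}\geq 0$, as in the paper.
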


Although an average-resilient scheduler for $\transformed{\cM}$ is not necessarily resilient, we show that the problems of maximizing the long-run availability under resilient and average-resilient schedulers are to some extent related. The latter problem can be solved by the algorithm of \cite{BBCFK-TwoViews14}. More precisely, by Theorem~4.1 of \cite{BBCFK-TwoViews14}, one can compute a linear program~$\cL_{\transformed{\cM}}$ in time polynomial in $|\transformed{\cM}|$ such that:
\begin{itemize}
\item 
  if $\cL_{\transformed{\cM}}$ is not feasible, then there is no average-resilient scheduler for $\transformed{\cM}$;
\item 
  otherwise, there is a 2-memory stochastic update scheduler $\asched$ for $\transformed{\cM}$,  constructible in time polynomial in $|\transformed{\cM}|$, which is average-resilient and achieves the maximal long-run availability among all average-resilient schedulers. 
\end{itemize}
The scheduler $\asched$ almost surely ``switches'' from its initial mode to its second mode where it behaves memoryless. Hence, there is a set $\cE_{\asched}$ (computable in time polynomial in $|\transformed{\cM}|$) comprising triples $(E,\A,\asched_E)$ that enjoy the following properties (H1) and (H2):
\begin{description}
	\item[(H1)] $(E,\A)$ is an end component of $\transformed{\cM}$ and $\asched_E$ is an MR-scheduler for   
	$(E,\A)$ achieving the \emph{maximal} long-run availability among all average-resilient schedulers for every initial state~$s \in E$. 
	\item[(H2)] If $(E,\A,\asched_E),(E',\A',\asched_{E'}) \in \cE_{\asched}$, then the two triples are either the same or $E \cap E' = \varnothing$. Further,
    every $(E,\A,\asched_E) \in \cE_{\asched}$ is strongly connected.
\end{description}
We show that for every $(E,\A,\asched_E) \in \cE_{\asched}$ and every $s \in E$, the scheduler $\asched_E$ is \emph{resilient} when the initial state is changed to~$s$ (see Lemma~\ref{lem-resilient}). So, $\asched$ starts to behave like a resilient scheduler after a ``switch'' to some 
$(E,\A,\asched_E) \in \cE_{\asched}$. However, in the initial transient phase, $\asched$ may violate the resilience condition, which may disallow a resilient scheduler $\rsched$ to  enter some of the end components of $\cE_{\asched}$. Thus, a resilient scheduler $\rsched$ can in general be forced to stay in an end component that does not appear in $\cE_{\asched}$. So, the set $\cE$ needs to be \emph{larger} than $\cE_{\asched}$, and we show that a sufficiently large $\cE$ is computable in polynomial time by Algorithm~\ref{alg-comp-compute}. 

Algorithm~\ref{alg-comp-compute} starts by initializing $\cQ$  to  $\transformed{\cM}$, $s$ to $\sinit$, and $\cE$ to~$\emptyset$. 
Then, it computes the linear program $\cL_{\cQ}$ and checks its feasibility. If $\cL_{\cQ}$ is not feasible, the initial state $s$ of $\cQ$ is removed from $\cQ$ in the way described below.
Otherwise, the algorithm constructs the scheduler $\asched$, adds $\cE_\asched$ to $\cE$, and ``prunes'' $\cQ$ into $\cQ \ominus \cE_\asched$. If the state $s$ is deleted from $\cQ$, some state of $\cQ$ is chosen as a new initial state. This goes on until $\cQ$ becomes empty.
Here, the MDP $\cQ \ominus X$ is the largest MDP subsumed by $\cQ$ which does not contain the states in $X \subseteq \transformed{S}$. 
Note that when a state of $\cQ$ is deleted, all actions leading to this state must be disabled; and if all outgoing actions of a state $s$ are disabled, then $s$ must be deleted. 
Hence, deleting the states appearing in $\cE_\asched$ may enforce deleting additional states and disabling further actions.
Note that every $(E,\A,\asched_E) \in \cE$ is obtained in some iteration of the repeat-until cycle of Algorithm~\ref{alg-comp-compute} by constructing the scheduler $\asched$ for the current value of $\cQ$.  
We denote this MDP $\cQ$ as ${\cQ}_{E}$ (note that ${\cQ}_{E}$ is not necessarily connected).
The set $\cE$ returned by  Algorithm \ref{alg-comp-compute}
indeed satisfies conditions (E1) and (E2).
The outcome $\cE=\varnothing$ is possible, in which case there is no resilient scheduler for $\transformed{\cM}$ as the linear program $\cL$ of Lemma~\ref{lem-claim-two} is not feasible for $\cE = \varnothing$.

An immediate consequence of property~(H1) is the following:

\begin{lemma}
	\label{lem-E-max}
	Let $(E,\A,\asched_E) \in \cE$ and $s \in E$. Then $\asched_E$ achieves the \emph{maximal} long-run availability for the initial state $s$ among all average-resilient schedulers for ${\cQ}_E$. 
\end{lemma}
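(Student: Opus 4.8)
I would prove Lemma~\ref{lem-E-max} essentially by unfolding the construction of $\cE$ in Algorithm~\ref{alg-comp-compute} and reading off property~(H1) for the appropriate working MDP. The point is that every triple in $\cE$ is born from an application of Theorem~4.1 of \cite{BBCFK-TwoViews14} to a concrete pruned MDP, and the availability-optimality guarantee of that theorem is exactly what (H1) records; so the lemma is a restatement of (H1) once we identify which MDP is in play.

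\textbf{Main steps.} First I would pin down the provenance of $(E,\A,\asched_E)$: by definition of $\cE$ it is added during the unique iteration of the repeat-until loop in which the current working MDP $\cQ$ equals $\cQ_E$. In that iteration the linear program $\cL_{\cQ_E}$ must be feasible (otherwise the algorithm deletes the initial state of $\cQ$ and adds nothing to $\cE$), so the $2$-memory stochastic update scheduler $\asched$ of Theorem~4.1 of \cite{BBCFK-TwoViews14} is constructed for $\cQ_E$; it is average-resilient for $\cQ_E$ and achieves the maximal long-run availability among all average-resilient schedulers of $\cQ_E$. The triple $(E,\A,\asched_E)$ is then one of the elements of $\cE_\asched$ obtained from the almost-sure ``switch'' of $\asched$ into its memoryless second mode. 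Second, I would invoke property~(H1), read with $\cQ_E$ in the role of the ambient MDP: it says exactly that $(E,\A)$ is an end component, that $\asched_E$ is an MR-scheduler for $(E,\A)$, and that for every initial state $s \in E$ the scheduler $\asched_E$ attains the maximal long-run availability among all average-resilient schedulers for $\cQ_E$. This is verbatim the claim of Lemma~\ref{lem-E-max}, so nothing further is needed.

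\textbf{The delicate point.} The only part that is not a pure restatement is the quantification over \emph{all} $s \in E$: a priori $\asched_E$ is only known to be optimal as the second-mode component of the composite scheduler $\asched$ launched from the initial state of $\cQ_E$, not from an arbitrary $s \in E$. I would dispatch this using that $(E,\A,\asched_E)$ is strongly connected (condition~(E2), which $\cE$ inherits from (H2)): the Markov chain induced by $\asched_E$ on $E$ is then irreducible, so both $\Avail^{\asched_E}_{\cQ_E,s}$ and each value $\fMP_e^{\asched_E}$ are independent of the choice of $s \in E$, whence ``average-resilient'' is well defined for $\asched_E$ regardless of the starting state and optimality transfers from one state of $E$ to all of them. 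I do not anticipate a genuine obstacle here: all the heavy lifting — the structure of $\asched$ and its optimality among average-resilient schedulers — is delegated to \cite{BBCFK-TwoViews14}, and what remains is the bookkeeping above together with this elementary irreducibility remark.
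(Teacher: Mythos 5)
Your proposal is correct and follows essentially the same route as the paper, which offers no separate proof and simply states the lemma as ``an immediate consequence of property~(H1)'' applied to the working MDP $\cQ_E$ of the iteration in which $(E,\A,\asched_E)$ was produced --- exactly the unfolding you perform. Your additional remark on the quantification over all $s\in E$ matches the paper's own observation that $\Avail(E)$ is independent of $s$ by strong connectedness, so nothing further is needed.
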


\begin{algorithm}[t]
	\SetAlgoLined
	\DontPrintSemicolon
	\SetKwInOut{Input}{input}\SetKwInOut{Output}{output}
	\SetKwData{n}{n}\SetKwData{f}{f}\SetKwData{g}{g}
	\SetKwData{Low}{l}\SetKwData{x}{x}
	\Input{the transformed MDP $\transformed{\cM}$} 
	\Output{the set $\cE$ satisfying (E1) and (E2)}
	\BlankLine
	$\cQ :=  \transformed{\cM}$, $s := \sinit$, $\cE := \emptyset$\;
	\Repeat{$\cQ$ becomes empty} {
		Compute the linear program $\cL_{\cQ}$\;  
		\eIf{$\cL_{\cQ}$ is feasible}
		{   compute the scheduler $\asched$ and the set $\cE_\asched$ satisfying (H1) and (H2)\;
			$\cE := \cE \cup \cE_{\asched}$ \;
			$\cQ := \cQ \ominus \cE_{\asched}$ \;
		}
		{$\cQ := \cQ \ominus \{s\}$ \; }
		\If{$s \text{ is not a state of } \cQ$}{$s := \text{ some state of } \cQ$ \; }
	}     
	\Return $\cE$ \;	
	\caption{Computing the set $\cE$.}
	\label{alg-comp-compute}

\end{algorithm}
\noindent
The next lemma follows easily from the construction of $\cE$.
\begin{lemma}
	\label{lem-cover} 
	Let $\sched$ be a scheduler for $\transformed{\cM}$ (not necessarily resilient) and let $(F,\cB)$ be an end component where $\sched$ stays with positive probability. Then there is   $(E,\A,\asched_E) \in \cE$ such that $(F,\cB)$ is an end component of $\cQ_E$ and $F \cap E \neq \varnothing$.
\end{lemma}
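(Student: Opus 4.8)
The plan is to analyze how the end components of $\transformed{\cM}$ are distributed among the MDPs $\cQ_E$ produced by Algorithm~\ref{alg-comp-compute}, using the monotonic ``pruning'' structure of the algorithm. First I would observe that $\cQ$ only ever shrinks: starting from $\transformed{\cM}$, each iteration replaces $\cQ$ either by $\cQ \ominus \cE_\asched$ or by $\cQ \ominus \{s\}$, so the sequence of values of $\cQ$ is a decreasing chain of sub-MDPs of $\transformed{\cM}$, and every state of $\transformed{\cM}$ is removed at some point (since the loop runs until $\cQ$ is empty). Now fix the scheduler $\sched$ and the end component $(F,\cB)$ in which $\sched$ stays with positive probability; in particular $F \neq \varnothing$, so pick any $s^\star \in F$. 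Consider the first iteration in which some state of $F$ is removed from $\cQ$, and let $\cQ_{\text{before}}$ denote the value of $\cQ$ at the start of that iteration. Then $F \subseteq$ (state set of $\cQ_{\text{before}}$), and moreover, since $(F,\cB)$ is an end component of $\transformed{\cM}$ and $\cQ_{\text{before}}$ is a sub-MDP containing all of $F$ that has not yet lost any state of $F$, the pair $(F,\cB)$ is still an end component of $\cQ_{\text{before}}$: every action in $\cB(t)$ for $t \in F$ leads only into $F$, and all those states and actions are present in $\cQ_{\text{before}}$ (a state is deleted only when one of its enabled actions is disabled, i.e. only when a successor is deleted, and no successor of an $F$-state under $\cB$ — which lies in $F$ — has been deleted yet).

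Next I would argue that in this critical iteration, the branch taken must be the ``feasible'' branch, producing some $\cE_\asched$, and that $F$ intersects $\bigcup_{(E,\A,\asched_E)\in\cE_\asched} E$. The key point is that the other branch removes only the single current initial state $s$, and that branch is taken precisely when $\cL_\cQ$ is infeasible, i.e. when there is \emph{no} average-resilient scheduler for $\cQ_{\text{before}}$. But $(F,\cB)$ being an end component of $\cQ_{\text{before}}$, one can define a scheduler on $\cQ_{\text{before}}$ that reaches $F$ (from any state in a BSCC-like argument, or at least we only need existence of \emph{some} average-resilient scheduler on $\cQ_{\text{before}}$). Here I would invoke Lemma~\ref{lem-average-res} together with the fact that a resilient scheduler staying in $(F,\cB)$ exists restricted to $\cQ_{\text{before}}$ — more carefully, since $(F,\cB)$ is an end component we can consider the sub-MDP it induces and, if needed, compose with the hypothesized scheduler $\sched$ restricted appropriately; the existence of \emph{any} scheduler on $\cQ_{\text{before}}$ that visits $F$ with positive probability and whose mean weights $\fMP_e$ are all $\geq 0$ contradicts infeasibility of $\cL_\cQ$. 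Hence the feasible branch is taken, $\asched$ and $\cE_\asched$ are constructed, and $\cQ := \cQ_{\text{before}} \ominus \cE_\asched$ removes a state of $F$; since the only states removed in this step are those in, or forced by, $\bigcup_{(E,\A,\asched_E)\in\cE_\asched} E$, and since no state of $F$ can be ``forced out'' without first an $F$-successor being removed (again using that $(F,\cB)$ is an end component of $\cQ_{\text{before}}$), the state of $F$ that is removed must itself lie in $\bigcup E$. Taking $(E,\A,\asched_E)$ to be the triple in $\cE_\asched$ with that state in $E$, we get $F \cap E \neq \varnothing$, and $\cQ_E = \cQ_{\text{before}}$ by definition, which also contains $(F,\cB)$ as an end component. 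This triple lies in $\cE$ since $\cE_\asched \subseteq \cE$.

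I expect the main obstacle to be the careful bookkeeping around the ``forced'' deletions of the $\ominus$ operator: I must rule out that a state of $F$ disappears from $\cQ$ merely as a side effect of deleting some state \emph{outside} $F$, before any state of $F$ is ``directly'' placed in some $\cE_\asched$. The clean way to handle this is the invariant already sketched: as long as no state of $F$ has been removed, $(F,\cB)$ remains an end component of the current $\cQ$, so each $t \in F$ still has all of its $\cB(t)$-actions enabled (their successors, lying in $F$, are all still present), hence $t$ is never force-deleted — a state is force-deleted only when it loses \emph{all} outgoing actions. Therefore the very first $F$-state to leave $\cQ$ leaves because it was explicitly included in a removal set $\cE_\asched$ (the $\ominus\{s\}$ branch removes a single non-$F$... — actually it could remove $s \in F$; but that branch is infeasible, which we have excluded by the previous paragraph's argument). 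Once this invariant is in place the rest is immediate from the definitions of $\cQ_E$ and $\cE$.
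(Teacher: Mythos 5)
The overall skeleton of your argument is sound and is essentially what the paper leaves implicit (it offers no proof, only the remark that the lemma ``follows easily from the construction of $\cE$''): track the first iteration of Algorithm~\ref{alg-comp-compute} in which a state of $F$ disappears from $\cQ$, observe that up to that point $(F,\cB)$ is still an end component of the current $\cQ$ because a state of $F$ can only be force-deleted after one of its $\cB$-successors (which lies in $F$) has already been deleted, and conclude that the first $F$-state to go must be explicitly listed in the removal set of that iteration. This correctly yields both ``$(F,\cB)$ is an end component of $\cQ_E$'' and ``$F\cap E\neq\varnothing$'' \emph{provided} the critical iteration takes the feasible branch.

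The step where you rule out the infeasible branch is, however, a genuine gap. You need an average-resilient scheduler for $\cQ_{\text{before}}$ from its \emph{current} initial state $s$ (which in the problematic case lies in $F$), and you propose to obtain one from ``a resilient scheduler staying in $(F,\cB)$ restricted to $\cQ_{\text{before}}$''. No such scheduler is available: the hypothesis explicitly allows $\sched$ to be non-resilient, and an end component of $\transformed{\cM}$ need not support \emph{any} average-resilient behavior --- for instance, if $(F,\cB)$ contains an error state $e$ all of whose repair runs overshoot the budget $R$, then $\wgt_e$ equals $-\wp$ on some state of $F$ and is never positive on $F$, so every scheduler confined to $(F,\cB)$ has $\fMP_e<0$. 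In that situation the critical iteration \emph{does} take the infeasible branch with $s\in F$, no triple of $\cE$ ever meets $F$, and your argument (indeed the lemma as literally stated, for arbitrary $\sched$) breaks down. To close the gap one must use resilience of $\sched$ in the way the paper actually applies the lemma: then $X_e\geq 0$ holds almost surely pathwise (the observation preceding Lemma~\ref{lem-average-res}), hence also on almost every path staying in $(F,\cB)$, and the closedness of achievable mean-payoff vectors from \cite{BBCFK-TwoViews14} --- exactly the device used in the appendix proof of Lemma~\ref{lem-avegare-avail} --- yields an average-resilient scheduler living entirely inside $(F,\cB)$, witnessing feasibility of $\cL_{\cQ_{\text{before}}}$ from $s$. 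Note that invoking only Lemma~\ref{lem-average-res} is not enough even for resilient $\sched$: $\fMP_e^{\sched}\geq 0$ is an expectation from $\sinit$ mixed over all end components in which $\sched$ stays, and says nothing about the conditional behavior inside $(F,\cB)$.
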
	

\noindent
Let $(E,\A,\asched_E) \in \cE$. Since $\asched_E$ is an MR-scheduler, the behavior of $\asched_E$ in an error state $f \in E$ (for an arbitrary initial state $s \in E$) is independent of the history. That is, the resilience condition is either simultaneously satisfied or simultaneously violated for all visits to $f$. 
However, if the second case holds, $\asched_E$ is not even average-resilient, what is a contradiction. Thus, we obtain:  

\begin{lemma}
	\label{lem-resilient}
	Let $(E,\A,\asched_E) \in \cE$, and let $s \in E$. Then the scheduler $\asched_E$ is \emph{resilient} when the initial state is changed to~$s$. Further, if  $\rsched$ is a resilient scheduler for $\cQ_E$ with the initial state $s$, then $\Avail^{\asched_E}_{{\cQ_E},s} \ \geq \ \Avail^{\rsched}_{{\cQ_E},s}$.
\end{lemma}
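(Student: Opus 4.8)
Lemma~\ref{lem-resilient}: Let $(E,\A,\asched_E) \in \cE$, and let $s \in E$. Then the scheduler $\asched_E$ is *resilient* when the initial state is changed to $s$. Further, if $\rsched$ is a resilient scheduler for $\cQ_E$ with the initial state $s$, then $\Avail^{\asched_E}_{{\cQ_E},s} \ \geq \ \Avail^{\rsched}_{{\cQ_E},s}$.

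Let me think about how to prove this.

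The plan is to dispatch the (short) availability statement first and then concentrate on the resilience of $\asched_E$, which is the substantive part. If $\rsched$ is resilient for $\cQ_E$ with initial state $s$, then by the strong law of large numbers --- exactly as in the proof of Lemma~\ref{lem-average-res} --- we have $X_e \geqslant 0$ $\Pr\nolimits^{\rsched}_{\cQ_E,s}$-almost surely for every $e \in \Error$, so $\rsched$ is average-resilient for $\cQ_E$ with initial state $s$; Lemma~\ref{lem-E-max} then gives $\Avail^{\asched_E}_{\cQ_E,s} \geqslant \Avail^{\rsched}_{\cQ_E,s}$, which is the second part. For the first part, observe that $\asched_E$ is memoryless and, by (E2)/(H2), $E$ is a bottom strongly connected component --- hence a closed, irreducible, recurrent class --- of the Markov chain induced by $\asched_E$. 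Consequently every $\asched_E$-path from $s \in E$ stays inside $E$, visits every state of $E$ infinitely often almost surely, and the residual $\residual{\asched_E}{\fpath}$ after any finite path $\fpath$ ending in a state $f$ equals $\asched_E$. The error states reachable from $s$ under $\asched_E$ are therefore exactly those in $E$, so it suffices to show that for each such $f$ we have $\Pr\nolimits^{\asched_E}_{\cQ_E,f}(\Eventually^{\leqslant R}\transformed{\Operational}) \geqslant \wp$ (condition~\eqref{res-sched}) and $\Pr\nolimits^{\asched_E}_{\cQ_E,f}(\Eventually \transformed{\Operational}) = 1$ (condition~\eqref{res-rep}); if $E$ contains no error state both conditions hold vacuously.

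Fix an error state $f \in E$ and set $p_f = \Pr\nolimits^{\asched_E}_{\cQ_E,f}(\Eventually^{\leqslant R}\transformed{\Operational})$, which by Lemma~\ref{sched-schedprime}(a) coincides with $\Pr\nolimits^{\asched_E}_{\cQ_E,f}(\neXt(\Repair \Until \Operational_f))$, i.e.\ with the probability that the repair chain $\langle f,\cdot,\cdot\rangle$ entered in the step after $f$ reaches a state of $\Operational_f = \{\langle f,u,r\rangle \in \Repair : u \in \Operational\}$. The core of the argument is to express $\fMP_f^{\asched_E}$ through $p_f$ by a renewal--reward (ergodic) computation along the excursions of $f$. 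Since $f$ is positive recurrent, almost every path decomposes into i.i.d.\ excursions between successive visits of $f$; in each excursion the path enters the chain $\langle f,\cdot,\cdot\rangle$ exactly once (this chain can only be entered from $f$), traverses it, and then --- before returning to $f$ --- visits only ``plain'' states and chains $\langle f',\cdot,\cdot\rangle$ with $f' \neq f$. The traversal ends in one of three ways: (i)~at a state of $\Operational_f$, whereupon it leaves the chain; (ii)~at a state $\langle f,u,r\rangle$ with $u \notin \Operational$ and $r + \cost(u) > R$, whereupon it leaves the chain; or (iii)~never, staying inside $\langle f,\cdot,\cdot\rangle$ forever. Case~(iii) has probability $0$, because $\langle f,\cdot,\cdot\rangle$-states lie in $\Repair$ while $f \notin \Repair$, so case~(iii) would prevent the almost-sure return to $f$; hence (i) and (ii) occur with probabilities $p_f$ and $1-p_f$.

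Next I would observe that $\wgt_f$ vanishes outside $\langle f,\cdot,\cdot\rangle$, equals $1-\wp$ precisely at the states of $\Operational_f$, equals $-\wp$ precisely at the type-(ii) states, and vanishes at every state of a traversal strictly before its exit (such states have $u \notin \Operational$ and $r + \cost(u) \leqslant R$). Hence the total $\wgt_f$-reward gathered during one excursion of $f$ is $1-\wp$ with probability $p_f$ and $-\wp$ with probability $1-p_f$, so its expectation is $p_f(1-\wp) + (1-p_f)(-\wp) = p_f - \wp$; by the renewal--reward theorem $X_f$ equals almost surely the constant $(p_f-\wp)/\Exp_f[\tau_f]$, where $\Exp_f[\tau_f] \in (0,\infty)$ is the mean return time of $f$, and therefore $\fMP_f^{\asched_E} = (p_f-\wp)/\Exp_f[\tau_f]$. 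Property (H1) says $\asched_E$ with initial state $f \in E$ is average-resilient, i.e.\ $\fMP_f^{\asched_E} \geqslant 0$; since $\Exp_f[\tau_f] > 0$ this forces $p_f \geqslant \wp$, which is~\eqref{res-sched}. Finally, $p_f \geqslant \wp > 0$ and the strong Markov property make the successive traversals i.i.d.\ with success probability $p_f$, and $f$ is visited infinitely often almost surely, so almost surely some traversal succeeds; this gives $\Pr\nolimits^{\asched_E}_{\cQ_E,f}(\Eventually\transformed{\Operational}) = 1$, i.e.~\eqref{res-rep}. Hence $\asched_E$ is resilient from every $s \in E$.

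I expect the main obstacle to be the bookkeeping underlying the renewal--reward step: making the excursion decomposition precise (exactly one $\langle f,\cdot,\cdot\rangle$-traversal per excursion; the exit state is uniquely of type~(i) or~(ii); intermediate states and foreign chains $\langle f',\cdot,\cdot\rangle$ contribute $\wgt_f$-weight $0$), ruling out the non-returning case~(iii) via recurrence of $f$ in the finite irreducible chain on $E$, and invoking the renewal--reward theorem in the precise form needed. The remaining pieces --- the reduction via memorylessness, the identification of $p_f$ with the repair-chain success probability through Lemma~\ref{sched-schedprime}(a), and the two short deductions from (H1) and from $\wp > 0$ --- are routine.
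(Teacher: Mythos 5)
Your proof is correct and follows essentially the same route as the paper: the paper's argument is exactly that memorylessness of $\asched_E$ makes condition~\eqref{res-sched} uniformly satisfied or uniformly violated at each error state $f\in E$, and that uniform violation would contradict the average-resilience guaranteed by (H1), while the availability claim follows from Lemma~\ref{lem-average-res} together with Lemma~\ref{lem-E-max}. You merely spell out, via the renewal--reward identity $\fMP_f^{\asched_E}=(p_f-\wp)/\Exp_f[\tau_f]$, the step the paper leaves implicit, and your recurrence argument for \eqref{res-rep} is sound.
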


\vspace{-2em}
\subsubsection{Proof of Lemma~\ref{lem-claim-one}.}
Let $\rsched$ be a resilient scheduler for $\transformed{\cM}$. We show that there is another resilient scheduler $\rsched'$ satisfying the conditions of Lemma~\ref{lem-claim-one}. First, let us consider the end components $(F_1,\cB_1),\ldots,(F_m,\cB_m)$ where $\rsched$ stays. For every $(F_i,\cB_i)$, let $(E,\A,\asched_E) \in \cE$ be a triple with the maximal $\Avail(E)$ such that  $F_i \cap E \neq \varnothing$ (such a triple exists due to Lemma~\ref{lem-cover}). We say that $(E,\A,\asched_E)$ is \emph{associated} to $(F_i,\cB_i)$. Let $\Avail(F_i,\cB_i)$ be the \emph{conditional availability} w.r.t. scheduler $\rsched$ under the condition that an infinite path initiated in $\sinit$ stays in $(F_i,\cB_i)$.
Given a triple $(E,\A,\asched_E)\in \cE$, 
we use $\Avail(E)$ to denote the availability achieved by 
scheduler $\asched_E$ for $s$.
Note that $\Avail(E)$ is independent of~$s$.

\begin{lemma}
\label{lem-avegare-avail}
   $\Avail(F_i,\cB_i) \leq \Avail(E)$, where $(E,\A,\asched_E) \in \cE$ is the triple associated to $(F_i,\cB_i)$.   
\end{lemma}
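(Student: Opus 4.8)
The plan is to compare the behaviour of $\rsched$ inside the end component $(F_i,\cB_i)$ with what $\asched_E$ achieves, exploiting that $\asched_E$ is optimal among *average-resilient* schedulers on $\cQ_E$ (Lemma~\ref{lem-E-max}) and that $\rsched$, being resilient, is in particular average-resilient (Lemma~\ref{lem-average-res}). First I would fix a single end component $(F,\cB) = (F_i,\cB_i)$ together with its associated triple $(E,\A,\asched_E)\in\cE$, and pick a state $t \in F \cap E$, which is nonempty by Lemma~\ref{lem-cover} (or rather by the way ``associated'' is defined via Lemma~\ref{lem-cover}). The key structural fact I want is that $(F,\cB)$ is an end component of $\cQ_E$: this is exactly the first assertion of Lemma~\ref{lem-cover}, so $(F,\cB)$ lives inside the MDP $\cQ_E$ on which $\asched_E$ is known to be optimal among average-resilient schedulers.

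Next I would build, from $\rsched$, a memoryless randomized scheduler $\rsched_F$ supported on $(F,\cB)$ that realizes the conditional availability $\Avail(F,\cB)$. This is the standard fact about end components: conditioned on the event that a run stays in $(F,\cB)$, the long-run frequencies of state--action pairs under $\rsched$ define a stationary distribution on $(F,\cB)$, and the induced MR-scheduler $\rsched_F$ on $(F,\cB)$ achieves mean payoff equal to $\Avail(F,\cB)$ when started anywhere in $F$ (see~\cite{Kallenberg,BBCFK-TwoViews14}); the $\liminf$ in the definition of availability becomes an honest limit here. Because $(F,\cB)$ is strongly connected as an end component of $\cQ_E$, $\rsched_F$ is a well-defined scheduler for $\cQ_E$ with initial state $t$, and $\Avail^{\rsched_F}_{\cQ_E,t} = \Avail(F,\cB)$.

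It then remains to check that $\rsched_F$ is average-resilient for $\cQ_E$, i.e.\ that $\fMP^{\rsched_F}_e \geq 0$ for every $e \in \Error$; once this is established, Lemma~\ref{lem-E-max} applied to the initial state $t \in E$ gives $\Avail(F,\cB) = \Avail^{\rsched_F}_{\cQ_E,t} \leq \Avail^{\asched_E}_{\cQ_E,t} = \Avail(E)$, which is the claim. For average-resilience of $\rsched_F$: since $\rsched$ is resilient it is average-resilient by Lemma~\ref{lem-average-res}, so $X_e \geq 0$ almost surely under $\Pr^{\rsched}_{\transformed{\cM},\sinit}$; conditioning on the positive-probability event of staying in $(F,\cB)$ preserves this, and the conditional long-run average of $\wgt_e$ is precisely the mean payoff of $\wgt_e$ under the stationary distribution of $\rsched_F$ on $(F,\cB)$, hence is $\geq 0$; this value is exactly $\fMP^{\rsched_F}_e$ computed from any starting state in $F$, again because on a strongly connected end component the $\wgt_e$-mean-payoff is start-state independent. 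I expect this last point — carefully identifying the conditional long-run frequencies under $\rsched$ restricted to $(F,\cB)$ with the stationary behaviour of the extracted MR-scheduler $\rsched_F$, both for $\payoff$ and simultaneously for every $\wgt_e$ — to be the main technical obstacle, since it requires the standard but slightly delicate end-component frequency argument and the observation that the relevant reward functions $\wgt_e$ vanish outside the $\<e,\cdot,\cdot\>$-states so their long-run averages are genuinely determined by the end component's stationary distribution.
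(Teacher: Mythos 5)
Your overall reduction is the same as the paper's: both arguments come down to exhibiting an average-resilient scheduler on (a sub-MDP containing) $(F_i,\cB_i)$ whose availability is at least $\Avail(F_i,\cB_i)$, and then invoking the optimality of $\asched_E$ among average-resilient schedulers for $\cQ_E$ (the paper phrases this as a contradiction obtained by splicing the witness into $\asched_E$; you apply Lemma~\ref{lem-E-max} directly at $t\in F_i\cap E$ --- the same step). Where you differ is in how the witness is extracted. The paper picks, for each $\varepsilon>0$, a history $\fpath$ after which $\residual{\rsched}{\fpath}$ stays in $(F_i,\cB_i)$ with probability at least $1-\varepsilon$ while still achieving availability at least $\Avail(F_i,\cB_i)$, redirects the leaving transitions to confine the scheduler to $F_i$, and lets $\varepsilon\to 0$ using the closedness of the achievable set for multiple mean-payoff objectives from \cite{BBCFK-TwoViews14}. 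You instead condition on the event of staying in $(F_i,\cB_i)$ and pass to long-run state--action frequencies. Both routes rest on the same multi-objective end-component machinery; yours avoids the $\varepsilon$-surgery but must carry the frequency argument explicitly.

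The step you flag as the main obstacle does, as stated, contain a gap. First, under a history-dependent scheduler the empirical frequencies along a path in $\Path(F_i,\cB_i)$ need not converge, so ``the long-run frequencies define a stationary distribution'' has to be replaced by a pathwise choice of a convergent subsequence of the full state--action frequency vector; along such a subsequence the $\payoff$-average is at least $X(\infpath)$ and each $\wgt_e$-average is at least $X_e(\infpath)\geq 0$, and averaging the resulting limit points over paths yields a vector $f$ in the invariant polytope of $(F_i,\cB_i)$ with $\langle f,\payoff\rangle \geq \Avail(F_i,\cB_i)$ and $\langle f,\wgt_e\rangle \geq 0$. Second, and more seriously, the support of $f$ may be a strict sub-end-component of $(F_i,\cB_i)$ not containing your chosen $t$, and the chain induced by the normalized MR-scheduler may split into several ergodic classes whose individual payoff and $\wgt_e$ values need not each satisfy the constraints --- only their $f$-weighted combination does. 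Hence a \emph{memoryless} $\rsched_F$ started at $t$ with $\Avail^{\rsched_F}_{\cQ_E,t}\geq\Avail(F_i,\cB_i)$ and $\fMP_e^{\rsched_F}\geq 0$ is not guaranteed. The repair is standard and harmless: realize $f$ from $t$ by a 2-memory stochastic-update scheduler (randomly choose an ergodic class with the weight $f$ assigns to it, navigate there inside the strongly connected $(F_i,\cB_i)$, then play memoryless), as in \cite{BBCFK-TwoViews14}. Since Lemma~\ref{lem-E-max} quantifies over \emph{all} average-resilient schedulers, not only memoryless ones, the conclusion $\Avail(F_i,\cB_i)\leq\Avail(E)$ still follows.
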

Further, we say that $(F_i,\cB_i)$ is \emph{offending} if there is a finite $\rsched$-path $\fpath$ initiated in $\sinit$ ending in a state $s \in E$, where $(E,\A,\asched_E)$ is associated to $(F_i,\cB_i)$, such that $s \not\in \Repair$ and the availability achieved by the scheduler $\residual{\rsched}{\fpath}$ in $s$ is \emph{strictly larger} than $\Avail(E)$. Note that if no  $(F_i,\cB_i)$ is offending, we can choose $\cE'$ as the set of triples associated to 
$(F_1,\cB_1),\ldots,(F_m,\cB_m)$, and redefine the scheduler $\rsched$ into a resilient scheduler $\rsched'$ as follows: $\rsched'$ behaves exactly like $\rsched$ until a state $s$ of some $(E,\A,\asched_E) \in \cE'$ is visited. Then, $\rsched'$ switches to $\asched_{E}$ immediately. The scheduler $\rsched'$ is resilient because $s \not\in \Repair$ (a visit to a repair state is preceded by a visit to the associated fail state which also belongs to $E$) and hence we can apply Lemma~\ref{lem-res-switch}. Clearly, $\rsched'$ is consistent with every $\asched_{E}$ such that $(E,\A,\asched_E) \in \cE'$. It remains to show that the availability achieved by $\rsched'$ in $\sinit$ is not smaller than the one achieved by $\rsched$.  This follows immediately by observing that whenever $\rsched'$ makes a switch to $\asched_{E}$ after performing a finite $\rsched$-path initiated in $\sinit$ ending in $s \in E$, the availability achieved by the resilient scheduler $\residual{\rsched}{\pi}$ for the initial state~$s$ must be bounded by $\Avail(E)$, because otherwise some $(F_i,\cB_i)$ would be offending. So, the introduced ``switch'' can only increase the availability.

Now assume that $(F_m,\cB_m)$ is offending, and let $(E,\A,\asched_E)$ be the triple associated to $(F_m,\cB_m)$. We construct a resilient scheduler $\tilde{\rsched}$ which stays in $(F_1,\cB_1),\ldots,(F_{m-1},\cB_{m-1})$ and achieves availability not smaller than the one achieved by $\rsched$. This completes the proof of Lemma~\ref{lem-claim-one}, because we can then successively remove all offending pairs. 
Since $(F_m,\cB_m)$ is offending, there is a finite $\rsched$-path $\fpath$ initiated in $\sinit$ ending in a state $s \in E$ such that $s \not\in \Repair$ and the availability $A$ achieved by $\residual{\rsched}{\fpath}$ in $s$ is larger than $\Avail(E)$. Since $F_m \cap E \neq \varnothing$, there is a state $t \not\in \Repair$ such that $t \in F_m \cap E$. Note that $\asched_E$ is resilient for the initial state $t$, and almost all infinite paths initiated in $t$ visit the state~$s$ under the scheduler~$\asched_E$.
 
Now, we construct a resilient scheduler $\sched_s$ achieving availability at least $A$ in~$s$ such that all components where $\sched_s$ stays (for the initial state~$s$) are among $(F_1,\cB_1),\ldots,(F_{m-1},\cB_{m-1})$. Let $P_m$ be the probability that an infinite path initiated in $s$ stays in $(F_m,\cB_m)$  under the scheduler $\residual{\rsched}{\fpath}$. If $P_m = 0$, we put $\sched_s = \residual{\rsched}{\fpath}$. Now assume $P_m > 0$. We cannot have $P_m = 1$, because then $A$ is bounded by $\Avail(E)$ (see Lemma~\ref{lem-avegare-avail}). Let $B$ be the conditional  availability achieved in $s$ by $\residual{\rsched}{\fpath}$ under the condition that an infinite path initiated in $s$ stays in $(F_1,\cB_1),\dots,(F_{m-1},\cB_{m-1})$. Since $A \leq (1{-}P_m) \cdot B + P_m \cdot \Avail(E)$ and $A > \Avail(E)$, we obtain $B > A$. 
For every $\varepsilon >0$, let $\Pi^\varepsilon$ be the set of all finite ($\residual{\rsched}{\fpath}$)-paths $\pi'$ initiated in $s$ and ending in $t$ such that the probability of all infinite paths initiated in $t$ staying in $(F_m,\cB_m)$ under the scheduler $\residual{\rsched}{(\fpath;\fpath')}$ is at least $1 {-}\varepsilon$. 
Note that each $(\residual{\rsched}{\fpath})$-path initiated in $s$ and staying in  $(F_m,\cB_m)$ is included in  $(\residual{\rsched}{\fpath})$-paths starting with a prefix of $\Pi^\varepsilon$. 
Hence, a smart redirection of the strategy after passing via $\Pi^\varepsilon$ can avoid staying in $(F_m,\cB_m)$.
We use $P_m^\varepsilon$ to denote the probability (under the scheduler $\residual{\rsched}{\fpath}$) of all infinite paths initiated in $s$ starting with a prefix of $\Pi^\varepsilon$, and $B^\varepsilon$ to denote the conditional availability achieved in $s$ by $\residual{\rsched}{\fpath}$ under the condition that an infinite path initiated in $s$ does \emph{not} start with a prefix of $\Pi^\varepsilon$. Since  $\lim_{\varepsilon \rightarrow 0} P_m^\varepsilon = P_m$ and $\lim_{\varepsilon \rightarrow 0} B^\varepsilon = B$, we can fix a sufficiently small $\delta > 0$ where
\begin{itemize}
	\item[I.] $\delta\cdot M + (1{-}\delta)\cdot \Avail(E) < A$, where $M$ is the maximal payoff assigned to a state of $\transformed{\cM}$.
	\item[II.] conditional bound $B^\delta > A$.
\end{itemize}
The scheduler $\sched_s$ is defined in the following way, where $\Sigma$ denotes the set of all finite paths $\varrho$ initiated in $t$ and ending in $s$, such that the state $s$ is visited by $\varrho$ only once: 
\[
   \sched_s(\pi') = \begin{cases}
                         \sched_s(\pi'') & \mbox{if $\pi' = \hat{\pi};\varrho;\pi''$ where $\hat{\pi} \in \Pi^\delta$ and $\varrho \in \Sigma$},\\
                         \asched_E(\pi'') & \mbox{if $\pi' = \hat{\pi};\pi''$ where $\hat{\pi} \in \Pi^\delta$ and no prefix of $\pi''$ is in $\Sigma$},\\
                         (\residual{\rsched}{\fpath})(\pi') & \mbox{otherwise.}
                    \end{cases}
\]
Intuitively, $\sched_s$ simulates $\residual{\rsched}{\fpath}$ unless a path of $\Pi^\delta$ is produced, in which case $\sched_s$ temporarily ``switches'' to $\asched_E$ until $s$ is revisited and the simulation of $\residual{\rsched}{\fpath}$ is restarted. It is easy to verify that $\sched_s$ is a resilient scheduler achieving availability equal to $B^\delta > A$ staying in end components $(F_i,\cB_i)$ with  $i < m$. 

Now we can easily construct the scheduler $\tilde{\rsched}$. Let $\Xi^\delta$ be the set of all finite paths $\pi$ initiated in $\sinit$ and ending in $t$ where the probability of all infinite paths initiated in $t$ staying in $(F_{m},\cB_{m})$ is at least $1{-}\delta$. The scheduler $\tilde{\rsched}$ behaves as $\rsched$ unless a path of $\Xi^\delta$ is produced, in which case $\tilde{\rsched}$ temporarily switches to $\asched_E$ until the state $s$ is reached, and then it permanently switches to $\sched_s$. The availability achieved by $\tilde{\rsched}$ in $\sinit$ can be only larger that the availability achieved by $\rsched$ due to Conditions~I and~II above.

\subsubsection{Proof of Lemma~\ref{lem-claim-two}.}
Let $\cE$ denote the set of triples computed by Algorithm~\ref{alg-comp-compute}.
Due to Lemma~\ref{lem-claim-one}, we can concentrate on schedulers those paths almost 
surely reach subsets $\cE'\subseteq\cE$ and are consistent with the schedulers in $\cE'$.
Observe that the transient prefix of each path then has no effect on the long-run availability 
of the path and just influences the reachability probability distribution on $\cE$. 
The resulting availability then is a convex combination of availabilities of the triples in $\cE$.
Thus, the aim is to find a \emph{resilient} scheduler that maximizes this convex combination.
We do so by constructing an MDP $\cN$ where the resilient MR-scheduler $\rsched_{\cN}$ with optimal reachability reward induces optimal resilient scheduler in $\transformed{\cM}$.
We show that $\rsched_{\cN}$ can be obtained from a slightly modified 
linear program of \cite{Kallenberg,Puterman:book}.

Let $\cN=(S_\cN,\Act_\cN,P_\cN,\sinit)$ be an MDP over the state space
$$
S_{\cN} \ \ = \ \ 
\transformed{S} \cup 
\bigl\{ \, \goal_E \, : \, (E,\A,\asched_E) \in \cE \, \bigr\} 
\cup \{\goal\}
$$
and the action space $\Act_{\cN}={\transformed{Act}} \cup \{\tau\}$, 
where $\tau$ is a fresh action symbol.
The transition probabilities $P_\cN$ are defined as for $\transformed{\cM}$, 
but with additional $\tau$-transitions for each $(E,\A,\asched_E) \in \cE$:
\begin{itemize}
	\item from each state $\transformed{s} \in E \cap \transformed{\Operational}$ to $\goal_E$,
	i.e., $P_{\cN}(\transformed{s},\tau,\goal_E)=1$,
	\item from $\goal_E$ to $\goal$, i.e., $P_{\cN}(\goal_E,\tau,\goal)=1$, and
	\item from $\goal$ to $\goal$, i.e., $P_{\cN}(\goal,\tau,\goal)=1$.
\end{itemize}
The reward function in $\cN$ is given by 
$\rew(\goal_E)=\Avail(E)$ for each $\goal_E \in S_\cN$
and
$\rew({s})=0$ for all the remaining states ${s}\in \transformed{S}\cup \{\goal\}$.
Given a scheduler $\sched$, the random variable $TR$ assigns to an infinite $\sched$-path 
$\infpath = s_0 \, \act_0 \, s_1 \, \act_1 \, s_2 \, \act_2 \ldots$
the total accumulated reward $TR(\infpath) = \sum_{i=0}^{\infty} \rew(s_i).$ 
The expected total accumulated reward from a state $s \in S_{\cN}$ 
is denoted by $\Exp^{\sched}_{\cN,s}[TR]$.

\begin{lemma}
	\label{lem:MtoN}
	Let $\rsched'$ be a resilient scheduler for $\transformed{\cM}$ such that $\rsched'$-paths
	from $\sinit$ almost surely reach a subset $\cE'\subseteq\cE$ and is consistent with the schedulers in $\cE'$.
	Then, there is a resilient scheduler $\rsched$ for $\cN$ where the
	$\rsched$-paths from $\sinit$ almost surely reach $\goal$ and	
	$$\Avail^{\rsched'}_{\transformed{\cM},\sinit} = \E^{\rsched}_{\cN,\sinit}[TR] .$$

\end{lemma}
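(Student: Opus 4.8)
The goal is to simulate the resilient scheduler $\rsched'$ for $\transformed{\cM}$ by a resilient scheduler $\rsched$ for $\cN$ so that the long-run availability of $\rsched'$ matches the expected total reward of $\rsched$. The key structural fact to exploit is that $\rsched'$-paths almost surely reach some end component $(E,\A,\asched_E)$ with $(E,\A,\asched_E)\in\cE'$, and from then on $\rsched'$ agrees with $\asched_E$; by property (E2)/(H2), $E$ is a bottom strongly connected component of the Markov chain induced by $\asched_E$, so the long-run availability of every $\rsched'$-path that ends up in $E$ equals $\Avail(E)$ almost surely. Consequently, writing $p_E$ for the probability (under $\Pr^{\rsched'}_{\transformed{\cM},\sinit}$) that an infinite $\rsched'$-path stays in $E$, we get $\Avail^{\rsched'}_{\transformed{\cM},\sinit} = \sum_{(E,\A,\asched_E)\in\cE'} p_E \cdot \Avail(E)$ by the dominated convergence theorem (the mean payoff is bounded by $M$).

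Next I would define $\rsched$ on $\cN$. Since $\cN$ extends $\transformed{\cM}$ with fresh $\goal_E$, $\goal$ states and a fresh action $\tau$, the scheduler $\rsched$ simply mimics $\rsched'$ on the $\transformed{\cM}$-part. The only additional decision concerns \emph{when to fire $\tau$}: informally, $\rsched$ waits until it is ``sure'' that the current $\rsched'$-path has entered and will stay in some $E$ with $(E,\A,\asched_E)\in\cE'$, and then, upon the first subsequent visit to a state $\transformed{s}\in E\cap\transformed{\Operational}$, it plays $\tau$, moving to $\goal_E$ (collecting reward $\Avail(E)$), then to $\goal$, and loops there forever with reward $0$. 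To make ``sure'' precise I would, for each $(E,\A,\asched_E)\in\cE'$ and each $\varepsilon>0$, consider the finite $\rsched'$-paths $\fpath$ from $\sinit$ ending in a state of $E\setminus\Repair$ such that the probability of staying in $E$ under $\residual{\rsched'}{\fpath}$ is $\ge 1-\varepsilon$; almost every $\rsched'$-path that stays in $E$ has such a prefix (a standard fact about end components — the residual probability of staying tends to $1$ along almost every such path). Committing to $\tau$ at the first visit to an operational state of $E$ after such a prefix yields an $\rsched$ whose probability of eventually reaching $\goal_E$ is within $\varepsilon$ of $p_E$; taking $\varepsilon\to 0$ along a diagonal (or, cleaner, defining $\rsched$ directly as a limit of such commitment times, which exists because the event ``stay in $E$ forever'' is, on almost every path, witnessed by some finite prefix) gives a scheduler for which $\rsched$-paths almost surely reach $\goal$ and the probability of reaching $\goal_E$ is exactly $p_E$. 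Then $\E^{\rsched}_{\cN,\sinit}[TR] = \sum_{(E,\A,\asched_E)\in\cE'} p_E\cdot\Avail(E) = \Avail^{\rsched'}_{\transformed{\cM},\sinit}$, since $\goal_E$ is the only reward-bearing state and is visited at most once.

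It remains to check that $\rsched$ is \emph{resilient} for $\cN$. Resilience is a condition on residual schedulers at error states, so I would verify it for every finite $\rsched$-path $w$ from $\sinit$ to an error state. Error states of $\cN$ are exactly those of $\transformed{\cM}$ (no new error states are introduced, and $\goal_E,\goal$ are not error states); moreover, once $\tau$ has been played the path is absorbed in $\goal$ and never revisits an error state, so any such $w$ lies entirely within the $\transformed{\cM}$-part and is also a finite $\rsched'$-path. On such $w$, the residual $\residual{\rsched}{w}$ either coincides with $\residual{\rsched'}{w}$ (if no commitment prefix has yet been seen) or coincides with $\asched_E$ from some point on — and in the latter case $\asched_E$ is resilient by Lemma~\ref{lem-resilient}, so the recovery probability within cost $R$ from the error state is still $\ge\wp$, with the remaining $\tau$-transitions being irrelevant since they occur only from operational states after the error has been repaired. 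Hence both conditions \eqref{res-sched} and \eqref{res-rep} carry over, using Lemma~\ref{lem-res-switch} to glue $\residual{\rsched'}{\cdot}$ with $\asched_E$ without breaking resilience. The main obstacle is the limiting argument for the commitment time: one must argue carefully that the decision to fire $\tau$ can be made measurably and ``in the limit'' so that no probability mass of staying in $E$ is lost, i.e.\ that $\Pr^{\rsched}_{\cN,\sinit}(\Eventually\,\goal_E) = p_E$ exactly and not merely $p_E - o(1)$; I would handle this by defining, for each infinite $\rsched'$-path that stays in some $E\in\cE'$, the (path-measurable) first time after which the residual staying-probability never drops below, say, $1/2$, and letting $\rsched$ fire $\tau$ at the next operational state of $E$ — this is a well-defined stopping time on the almost-sure event of eventual absorption in $\cE'$, and it realizes $p_E$ with no loss.
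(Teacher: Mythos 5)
Your proof is correct and, once the scaffolding is stripped away, it is the same construction as the paper's: simulate $\rsched'$ inside $\cN$, fire $\tau$ upon visiting a state of $E\cap\transformed{\Operational}$ for $(E,\A,\asched_E)\in\cE'$, observe that this does not affect resilience (the $\tau$-steps occur only at operational states, after which no error state is ever revisited), and conclude that the probability of reaching $\goal_E$ equals the probability that $\rsched'$ stays in $E$, so the total reward equals $\sum_E p_E\cdot\Avail(E)=\Avail^{\rsched'}_{\transformed{\cM},\sinit}$. The entire $\varepsilon$-commitment/stopping-time apparatus in your second paragraph is superfluous, though: the hypothesis that $\rsched'$ is \emph{consistent} with $\sched_E$, together with the fact that $(E,\A)$ is an end component (so every action in $\A$ leads back into $E$), means a path is trapped in $E$ from its \emph{first} visit onward, the residual staying probability is exactly $1$ there, and your sets $\Pi^\varepsilon$ degenerate to ``all prefixes ending in $E$''; the paper therefore fires $\tau$ deterministically at the first visit to $E\cap\transformed{\Operational}$ with no limiting argument. (The machinery you describe is genuinely needed in the proof of Lemma~\ref{lem-claim-one}, where consistency is not assumed, but not here.)
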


From $\rsched'$ we can easily construct an equivalent scheduler $\rsched$ by redefining $\rsched'$ to almost surely 
perform $\tau$ actions in $E \cap \transformed{\Operational}$ for $(E,\A,\asched_E) \in \cE'$.
From Lemma~\ref{lem-claim-one} and Lemma~\ref{lem:MtoN} it follows that if there is no resilient scheduler for $\cN$ %
there is no resilient scheduler for $\transformed{\cM}$.
Let $\rsched_{\cN}$ be the resilient scheduler that acquires the supremum of the expected total accumulated rewards from $\sinit$ among all resilient schedulers for $\cN$ that reach $\goal$ almost surely from $\sinit$.
As we shall see bellow, we can safely assume that $\rsched_{\cN}$ %
is an MR-scheduler. 
The technical details for proving the following lemma can be found in %
Appendix~\ref{app:claim-two}.

\begin{lemma}
	\label{lem:NtoM}
	Let $\rsched_{\cN}$ be an MR-scheduler that acquires maximal $\E^{\rsched'}_{\cN,\sinit}[TR]$ within resilient schedulers $\rsched'$ for $\cN$ such that almost all $\rsched'$-paths reach the $\goal$. Let $\cE'$ be the set of all $(E,\A,\asched_E) \in \cE$ such that $\goal_E$ is visited from $\sinit$ with positive probability under $\rsched_{\cN}$, and let
	$\sched_e(s) = \rsched_{\cN}(s)$ for each $s \in F$ where $F = \transformed{S} \setminus \bigcup_{(E,\A,\sched_E) \in \cE'} E$.
	Moreover, let  
	$\rsched$ be the unique scheduler consistent with $\sched_e$ and $\asched_E$ for each $(E,\A,\asched_E) \in \cE'$.
	It holds that
	$$\Avail^{\rsched}_{\transformed{\cM},\sinit} = \E^{\rsched_\cN}_{\cN,\sinit}[TR].$$
\end{lemma}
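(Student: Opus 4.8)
\emph{Plan.} I would prove the equality by exhibiting one common value for both sides: namely $\sum_{(E,\A,\asched_E)\in\cE'}q_E\cdot\Avail(E)$, where the weights $q_E$ form a reachability distribution that is \emph{literally the same} on the $F$-part of the chain of $\rsched$ in $\transformed{\cM}$ and of $\rsched_\cN$ in $\cN$. First I would check that $\rsched$ is well defined and resilient. The sets $E$ with $(E,\A,\asched_E)\in\cE'$ are pairwise disjoint by~(E1), and $\rsched_\cN$ never plays $\tau$ in a state of $F$ (if $\rsched_\cN(s)(\tau)>0$ then $s\in E\cap\transformed{\Operational}$ and $\goal_E$ is reached with positive probability, so $(E,\A,\asched_E)\in\cE'$ and $s\notin F$); hence $F$ and these $E$'s partition $\transformed{S}$, the recipe ``$\rsched_\cN$ on $F$, $\asched_E$ on each $E\in\cE'$'' determines a memoryless scheduler $\rsched$, and $\rsched,\rsched_\cN$ induce the same sub-chain on $F$ (since $\transformed{P}=P_\cN$ on $\transformed{S}$). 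Resilience of $\rsched$ then follows from Lemmas~\ref{lem-res-switch} and~\ref{lem-resilient} by the same gluing argument as in the proof of Lemma~\ref{lem-claim-one}: $\rsched$ agrees with the resilient scheduler $\rsched_\cN$ on $F$ — and the error/repair phase issued from an error $e\in F$ stays inside $F$ until the accumulated cost exceeds $R$ (a repair state $\langle e,\cdot,\cdot\rangle$ lying in an end component forces $e$ to lie there too, by monotonicity of the cost coordinate), so the conditions for such errors are inherited from $\rsched_\cN$ — and switches, upon first entering $E\in\cE'$, to $\asched_E$, which by Lemma~\ref{lem-resilient} is resilient from every state of $E$; that switch is made at a non-repair state of $E$ (each visited repair state of $E$ is preceded on the path by the corresponding error state, which also lies in $E$), so Lemma~\ref{lem-res-switch} applies.

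\emph{Computing the two quantities.} In $\cN$ the only edges into $\goal_E$ are $\tau$-edges from $E\cap\transformed{\Operational}$, the only edge out of $\goal_E$ goes to the absorbing state $\goal$, and rewards vanish on $\transformed{S}$; hence every $\rsched_\cN$-path that reaches $\goal$ (almost every one) visits exactly one $\goal_E$ and accumulates total reward $\Avail(E)$. By the strong Markov property, $\E^{\rsched_\cN}_{\cN,\sinit}[TR]=\int V(s)\,d\mu(s)$, where $\mu$ is the distribution of the first state visited in $\transformed{S}\setminus F=\bigcup_{(E,\A,\asched_E)\in\cE'}E$ (visited almost surely, since $\goal$ is) and $V(s)=\E^{\rsched_\cN}_{\cN,s}[TR]$. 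Setting $q_E=\mu(E)$ (so $\sum_{\cE'}q_E=1$), the key claim is that $V(s)=\Avail(E)$ for every $s\in E$ with $(E,\A,\asched_E)\in\cE'$; granting it, $\E^{\rsched_\cN}_{\cN,\sinit}[TR]=\sum_{\cE'}q_E\,\Avail(E)$. On the other side, $\rsched$ reaches $\transformed{S}\setminus F$ with the same distribution $\mu$ (same $F$-sub-chain), and once inside an $E\in\cE'$ it follows $\asched_E$ and therefore never leaves $E$ (end-component property); by~(E2) $E$ is a bottom SCC of the Markov chain induced by $\asched_E$, so the mean payoff is almost surely the constant $\Avail(E)$. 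Hence $\Avail^{\rsched}_{\transformed{\cM},\sinit}=\sum_{\cE'}q_E\,\Avail(E)$ as well, which finishes the proof.

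\emph{The key claim, which is the main obstacle.} It remains to show $V(s)=\Avail(E)$ for $s\in E$, $(E,\A,\asched_E)\in\cE'$. The bound $V(s)\ge\Avail(E)$ is easy: from $s$, run $\asched_E$ until a state of $E\cap\transformed{\Operational}$ is hit (almost surely, by~(E2)) and then play $\tau$; this is resilient, reaches $\goal$ almost surely, and collects reward $\Avail(E)$. For $V(s)\le\Avail(E)$ I would argue by contradiction: assume $V(s_0)>\Avail(E)$ for some $s_0\in E$, witnessed by a resilient scheduler reaching $\goal$ almost surely from $s_0$. Pick $s^*\in E\cap\transformed{\Operational}$ with $\rsched_\cN(s^*)(\tau)>0$, reached with positive probability, whose $\tau$-edge leads to $\goal_E$ (such an $s^*$ exists because $(E,\A,\asched_E)\in\cE'$). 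Running $\asched_E$ from $s^*$ until $s_0$ and then switching to the witness is resilient, reaches $\goal$ almost surely, and has value $V(s_0)>\Avail(E)$, so $V(s^*)>\Avail(E)$; but then $\rsched_\cN$ would be strictly improvable — without breaking resilience or almost-sure reachability of $\goal$ — by redirecting, at each visit to $s^*$, the probability mass it assigns to $\tau$ towards a resilient continuation of value above $\Avail(E)$, contradicting the optimality of $\rsched_\cN$ (which, per Appendix~\ref{app:claim-two}, may be assumed optimal from every state). So the genuine difficulty is exactly this monotonicity-type step — that a resilient scheduler which has already entered $E$ cannot beat $\Avail(E)$ — and it hinges on the reachability property~(E2) of $\asched_E$ inside $E$ together with the switching lemma (Lemma~\ref{lem-res-switch}).
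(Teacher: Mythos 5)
Your proof is correct and follows essentially the same route as the paper: your ``key claim'' that $\E^{\rsched_\cN}_{\cN,s}[TR]=\Avail(E)$ at the (non-repair) entry states $s\in E$ is precisely the paper's auxiliary Lemma~\ref{lem:same_value}, and you establish it by the same exchange-with-$\asched_E$ / redirect-the-$\tau$-mass contradiction against the optimality of $\rsched_\cN$. Resilience of $\rsched$ is likewise argued as in the paper, by gluing via Lemmas~\ref{lem-res-switch} and~\ref{lem-resilient} at non-repair entry points, and your first-entry decomposition addresses exactly the subtlety the paper flags (switching on entering $E$ rather than on reaching $\goal_E$).
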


\noindent
Note that the scheduler $\rsched$ of Lemma~\ref{lem:NtoM} simulates the scheduler $\rsched_{\cN}$ only until a state of $\cE'$ is visited (not until $\rsched_{\cN}$ visits a $\goal_E$ state). This is the main subtlety hidden in Lemma~\ref{lem:NtoM}.

\paragraph*{A resiliency linear program.}
To obtain $\rsched_{\cN}$, let us consider the following linear program clearly constructible in
polynomial time in $|\cN|$ (and thus also in $|\transformed{\cM}|$). 
Intuitively, the variables $y_{t,\alpha}$ stand for the expected number of times an action 
$\alpha\in \Act_{\cN}$ is taken from state $t \in S_{\cN}$. 
We set $y_{t}= \sum_{\alpha\in \Act_{\cN}(t)} y_{t,\alpha}$ and define
\begin{enumerate}
	\item [(1)]
	flow equation: for all states $s \in S_{\cN} \setminus \{\goal\}$
	$$
	y_{s} \ \ = \ \ 
	\delta(s,\sinit) \ + \ \sum\nolimits_{t\in S_{\cN}} \  
	\sum\nolimits_{\alpha\in \Act_{\cN}(t)}
	y_{t,\alpha} \cdot 
	P_{\cN}(t,\alpha,s)
	$$
	where $\delta(s,\sinit)$ is $1$ if $s=\sinit$, and $0$ otherwise.
	\item [(2)] non-negativeness: \ 
	$y_{s,\alpha}\geqslant 0$ for all state-action pairs 
	$(s,\alpha)$. 
	\item [(3)] flow equation for the goal state: \ 
	$y_{\goal}\geqslant 1$.
	\item [(4)]
	resiliency constraint: for all $e\in \Error$ 
	$$
	\sum\nolimits_{s\in \Operational_e} y_{s} 
	\ \ \geqslant \ \ 
	\wp \cdot y_e
	$$
\end{enumerate}

The next lemma is proven by the methods of \cite{Kallenberg,Puterman:book} (the only difference distinguishing our case is Constraint~(4), which is easy to handle). 
\begin{lemma}
	\label{lem:reach_correctness}
	Each feasible solution $(z^*_{s,\alpha})_
	{s\in S_{\cN},\alpha \in \Act_{\cN}(s)}$ of the linear program 
	(1)-(4) under the objective to maximize
	$\sum_{(E,\A,\asched_E) \in \cE} y_{\goal_E} \cdot \Avail(E)$,
	induces an MR-scheduler $\rsched_\cN$ that is resilient in $\cN$
	and can be computed in time polynomial in $|\cN|$.
	If there is no such solution, there is no resilient scheduler in $\cN$.
	
	Conversely, let $\rsched$ be a resilient scheduler such that $\rsched$-paths 
	almost surely reach $\goal$ and the expected number of actions executed 
	before reaching $\goal$ is finite. 
	Let $z_{s,\alpha}$ denote the expected number of times an 
	action $\alpha \in \Act_{\cN}$ is taken in a state $s \in S_\cN$ using $\rsched$. 
	Then, values $z_{s,\alpha}=y_{t,\alpha}$ form a solution of the above linear constraints (1)-(4).
\end{lemma}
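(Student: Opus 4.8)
The plan is to adapt the classical ``expected frequency'' linear program for maximizing expected total reward in MDPs with an absorbing target (as in \cite{Kallenberg,Puterman:book}), adding only the resiliency constraint~(4), and to argue that this addition is harmless. The key observation is the standard one: for an MDP in which $\goal$ is reached almost surely and absorbs all reward, the quantities $z_{s,\alpha}$ (expected number of times action $\alpha$ is used in $s$ before absorption) satisfy the flow equation~(1), are non-negative~(2), satisfy $z_{\goal}\geq 1$~(3), and the objective $\sum_{s,\alpha} z_{s,\alpha}\cdot \rew(s)$ equals $\E^{\rsched}_{\cN,\sinit}[TR]$; conversely, any feasible point of~(1)--(3) induces an MR-scheduler $\rsched_{\cN}$ defined by $\rsched_{\cN}(s)(\alpha)=z_{s,\alpha}/z_s$ whenever $z_s>0$ (and arbitrarily otherwise) which reaches $\goal$ almost surely and whose expected frequencies recover the $z_{s,\alpha}$. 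This is exactly Theorem in \cite[][]{Kallenberg}; I would cite it and only spell out the delta.

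The remaining work is to check that constraint~(4) corresponds precisely to the resiliency condition. First, for the ``induces a resilient scheduler'' direction: given a feasible solution, form the MR-scheduler $\rsched_{\cN}$ as above. For each $e\in\Error$, I note that $y_e$ is the expected number of visits to $e$ and, since $\rsched_{\cN}$ is memoryless, $\Pr^{\rsched_{\cN}}_{\cN,e}(\Eventually^{\leqslant R}\transformed\Operational)$ — which by Lemma~\ref{sched-schedprime}(a) equals $\Pr^{\rsched_{\cN}}_{\cN,e}(\neXt(\Repair\Until\Operational_e))$ — is a fixed number $p_e$ independent of history; moreover $y_{s}$ for $s\in\Operational_e$ counts the expected number of visits to $s$, and summing over $s\in\Operational_e$ gives exactly $p_e\cdot y_e$ (each visit to $e$ contributes $p_e$ in expectation to $\sum_{s\in\Operational_e}y_s$, by the structure of $\transformed{\cM}$: the states $\Operational_e$ are visited at most once per visit to $e$, immediately upon a successful fast repair). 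Hence~(4) reads $p_e\cdot y_e\geq\wp\cdot y_e$, and since $y_e>0$ whenever $e$ is reachable under $\rsched_{\cN}$ this forces $p_e\geq\wp$, i.e.\ condition~\eqref{res-sched}; condition~\eqref{res-rep} holds because $\rsched_{\cN}$ reaches $\goal$ almost surely, hence almost surely leaves the repair region via $\Operational$ (an infinite path trapped in $\Repair$ would never reach $\goal$). If $e$ is \emph{not} reachable under $\rsched_{\cN}$, it is vacuously fine. For the converse direction, given a resilient $\rsched$ reaching $\goal$ almost surely with finite expected absorption time, the induced frequencies $z_{s,\alpha}$ satisfy~(1)--(3) by the classical argument, and the resilience inequality $\Pr^{\residual{\rsched}{\fpath}}_{\cN,e}(\Eventually^{\leqslant R}\transformed\Operational)\geq\wp$ at every visit, summed (using the tower property over all finite paths ending at $e$), yields $\sum_{s\in\Operational_e}z_s\geq\wp\cdot z_e$, which is~(4).

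The main obstacle I anticipate is the bookkeeping that identifies $\sum_{s\in\Operational_e}y_s$ with ``$\wp$-weighted expected visits to $e$'' cleanly in the history-dependent (converse) direction: one must be careful that a single ``sojourn'' triggered by a visit to $e$ contributes to $\Operational_e$ at most once, and that the event $\Eventually^{\leqslant R}\transformed\Operational$ from $e$ decomposes as ``next step enters $\Repair$ and the $\Repair$-phase ends in $\Operational_e$'' — which is precisely the content of Lemma~\ref{sched-schedprime}(a) lifted to $\cN$ (the $\tau$-gadget does not interfere since $\tau$ is only enabled in $E\cap\transformed\Operational\subseteq\transformed\Operational$, outside any error/repair region). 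Once this correspondence is in place, monotonicity of the objective and feasibility transfer are immediate, and the ``no feasible solution $\Rightarrow$ no resilient scheduler'' statement follows by contraposition from the converse direction together with Lemma~\ref{lem:MtoN} (a resilient scheduler for $\cN$ reaching $\goal$ a.s.\ can be taken with finite expected absorption time without loss, e.g.\ by the argument that staying forever outside $\goal$ has probability zero and a standard cut-off/averaging yields the finiteness needed to invoke the frequency LP).
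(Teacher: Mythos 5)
Your proposal is correct and follows exactly the route the paper intends: the paper itself gives no proof of this lemma beyond the remark that it ``is proven by the methods of [Kallenberg, Puterman] (the only difference distinguishing our case is Constraint~(4), which is easy to handle)'', and your elaboration --- the classical occupation-measure correspondence for (1)--(3) plus the observation that, since $\Operational_e$ is entered at most once per sojourn triggered by a visit to $e$, constraint~(4) is equivalent to the per-visit success probability being at least $\wp$ --- is precisely the intended handling of that difference. The only slight imprecision is the claim that the induced MR-scheduler's frequencies ``recover'' the $y$-values (a feasible solution may contain circulation in end components), but this does not affect the conclusion, since any such circulating component containing $e$ has $\Operational_e$-to-$e$ invariant-measure ratio equal to the same per-visit success probability $p_e$, so the inference $p_e \geqslant \wp$ from~(4) survives.
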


According to the second part of Lemma~\ref{lem:reach_correctness}, the scheduler $\rsched_\cN$ achieves the optimal total accumulated reward among all resilient schedulers where the expected number of transitions executed before reaching $\goal$ is finite. The next lemma shows that $\rsched_\cN$ achieves the optimal total accumulated reward among \emph{all} resilient schedulers, which completes the proof of Lemma~\ref{lem-claim-two}.

\begin{lemma}
	\label{lem:completeness}

	$\E^{\rsched_\cN}_{\cN,\sinit}[TR]\geq\ L$ with $L$ being the supremum
	over all $E^{\rsched}_{\cN,\sinit}[TR]$ ranging over resilient schedulers $\rsched$ 
	in $\cN$ those paths almost surely reach $\goal$.
\end{lemma}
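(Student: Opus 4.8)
The plan is to reduce the statement to resilient schedulers that reach $\goal$ with a \emph{finite} expected number of transitions, for which Lemma~\ref{lem:reach_correctness} and the optimality of $\rsched_{\cN}$ already give the bound. Concretely, it suffices to prove: for every resilient scheduler $\rsched$ for $\cN$ whose paths reach $\goal$ almost surely and every $\varepsilon>0$, there is a resilient scheduler $\rsched'$ for $\cN$ that again reaches $\goal$ almost surely, reaches it with a finite expected number of transitions, and satisfies $\E^{\rsched'}_{\cN,\sinit}[TR]\geqslant\E^{\rsched}_{\cN,\sinit}[TR]-\varepsilon$. Granting this, the second part of Lemma~\ref{lem:reach_correctness} together with the optimality of $\rsched_\cN$ gives $\E^{\rsched_{\cN}}_{\cN,\sinit}[TR]\geqslant\E^{\rsched'}_{\cN,\sinit}[TR]\geqslant\E^{\rsched}_{\cN,\sinit}[TR]-\varepsilon$; letting $\varepsilon\to 0$ and taking the supremum over $\rsched$ yields the claim.

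The approximation rests on two elementary facts. First, $TR$ is bounded: along any path at most one state $\goal_E$ is ever visited (it leads deterministically to the absorbing state $\goal$), so $0\leqslant TR\leqslant M$ with $M=\max\{\,\Avail(E):(E,\A,\asched_E)\in\cE\,\}$; in particular $L\leqslant M<\infty$. Second, since $\rsched$ reaches $\goal$ almost surely there is $N$ with $\Pr^{\rsched}_{\cN,\sinit}(\neg\,\Eventually^{\leqslant N}\goal)<\varepsilon/(M{+}1)$, hence the contribution to $\E^{\rsched}_{\cN,\sinit}[TR]$ of the paths that do not reach $\goal$ within $N$ transitions is below $\varepsilon$. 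So it is enough to redefine $\rsched$ only on this small-probability event, keeping it resilient and almost-surely $\goal$-reaching while forcing the expected number of transitions to be finite.

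The scheduler $\rsched'$ follows $\rsched$ for $N$ transitions; if $\goal$ has been reached we are done, otherwise we are in a state $\transformed{u}$ after a finite $\rsched$-path $w$ of positive probability. If $\transformed{u}\notin\Repair$, then $\transformed{u}$ admits a resilient almost-surely-$\goal$-reaching scheduler (namely $\residual{\rsched}{w}$), so the scheduler $\rho$ constructed below is defined at $\transformed{u}$; we switch to it, and since $\transformed{u}\notin\Repair$ Lemma~\ref{lem-res-switch} (applied once per possible state $\transformed{u}$) keeps $\rsched'$ resilient. If $\transformed{u}=\<e,s,r\>\in\Repair$, the repair phase opened by $e$ is still pending and we may not disturb the decisions at the \emph{critical} repair states $\<e,\cdot,\cdot\>$ (non-operational second component, accumulated cost still at most $R$), which determine the within-budget recovery probability of $e$. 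Instead, \emph{inside this repair phase} we switch to a \emph{memoryless} scheduler that maximises the probability of reaching $\Operational_e$ before the cost exceeds $R$ and, subject to that, terminates the phase almost surely, additionally restricting its actions to those that $\residual{\rsched}{w}$ uses along the phase. This can only raise the within-budget recovery probability of $e$ (so condition $(\mathrm{Res})$ is preserved); being memoryless on a finite MDP with almost-sure termination of the phase, it terminates the phase in finite expected time, leaving it in a non-repair state from which $\residual{\rsched}{w}$ still reaches $\goal$ almost surely; there we again switch to $\rho$. Condition $(\mathrm{ASRep})$ holds automatically, since reaching $\goal$ almost surely forces reaching $\transformed{\Operational}$ almost surely. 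Altogether $\rsched'$ is resilient, reaches $\goal$ almost surely with finitely many transitions in expectation, and loses at most $\varepsilon$ of expected reward.

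The main obstacle is the scheduler $\rho$: a single resilient scheduler for $\cN$ that, from every state admitting \emph{some} resilient almost-surely-$\goal$-reaching scheduler, reaches $\goal$ almost surely \emph{with finite expected number of transitions} --- the difficulty being that a resilient scheduler can otherwise reach $\goal$ almost surely while ``dawdling'' for infinite expected time. I would build $\rho$ as an essentially \emph{memoryless} resilient scheduler: first delete every error state whose maximal within-budget recovery probability is below $\wp$ (no resilient scheduler may visit it) together with the states that cannot avoid such errors, obtaining a sub-MDP $\cN'$ that still contains every admissible state; then, for each surviving error $e$, hard-wire an optimal memoryless recovery behaviour on $\{e\}\cup\{\<e,\cdot,\cdot\>\}$ that additionally terminates the phase almost surely (there is no conflict, as these sets are pairwise disjoint over distinct errors); and finally choose memoryless actions at the remaining operational and post-overrun states so as to realise an almost-surely-$\goal$-reaching scheduler of $\cN'$ \emph{with the recovery behaviour fixed}. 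A memoryless scheduler on a finite MDP that reaches an absorbing target almost surely does so in finite expected time, so $\rho$ has the required property. The delicate steps, which I expect to be the technical heart of the proof, are that fixing the recovery behaviour does not shrink the set of states from which $\goal$ is almost-surely reachable (maximal recovery can only help, and after any recovery phase the memoryless choices can be routed back into the admissible region), and that $\rho$ and every switch above are invoked only at non-repair states, or within a repair phase in a recovery-probability-nondecreasing way, so that resilience is genuinely preserved throughout.
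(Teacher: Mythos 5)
Your proposal is correct and follows essentially the same route as the paper: bound $TR$, truncate the given resilient scheduler after finitely many steps, append a resilient tail that reaches $\goal$ with finite expected number of transitions, and then invoke the optimality of $\rsched_{\cN}$ among schedulers with finite expected time to $\goal$ (the second part of Lemma~\ref{lem:reach_correctness}). The one place you diverge is the choice of tail: the paper simply lets $\rsched_i$ continue as $\rsched_{\cN}$ itself after step $i$, whereas you construct a separate universal memoryless resilient scheduler $\rho$ and treat the case where the truncation point falls inside a repair phase by switching to a recovery-probability-maximizing memoryless scheduler on the states $\<e,\cdot,\cdot\>$. That extra machinery is not needed for the paper's argument, but it does make explicit two points the paper leaves implicit --- that the appended tail is defined and resilient from \emph{every} state the truncated scheduler can occupy (including mid-repair states, where a naive switch could in principle depress the within-budget recovery probability of the pending error below $\wp$), and that the resulting composite scheduler is genuinely resilient via Lemma~\ref{lem-res-switch}. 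So your proof is a more conservative, self-contained version of the same argument; just be aware that the simpler option of reusing $\rsched_{\cN}$ as the tail suffices once those switching issues are checked.
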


\begin{proof}
	First, note that $\E^{\sched}_{\cN,\sinit}[TR]$ for an HR-scheduler $\sched$ can be approximated 
	up to an arbitrary small error using a sequence of schedulers $\rsched_i$:
	For each $i \in \Nat$ we define the scheduler $\rsched_i$ by acting as $\sched$ until the $i$-th step 
	and then continuing as $\rsched_{\cN}$.
	The expected number of executed actions before reaching the $\goal$ state is finite for all $\rsched_i$.
	Clearly, $|\E^{\sched}_{\cN,\sinit}[TR] - \E^{\rsched_i}_{\cN,\sinit}[TR]|$ gets arbitrarily small for increasing $i$.
	Towards a contradiction, assume that $L - \E^{\rsched_{\cN}}_{\cN,\sinit}[TR] > \delta > 0$.
	Then, there is a sequence of schedulers that approximate $L$ arbitrarily close and there is a
	scheduler $\rsched$ such that $\E^{\rsched}_{\cN,\sinit}[TR] = K$ with $|L-K| < \delta/2$.
	Moreover, there is sequence of schedulers $\rsched_i$ that approximate $K$ arbitrarily close and have a 
	finite expected number of executed actions before reaching $\goal$. 
	Hence, there is some $\rsched_i$ such that $\big|L - \E^{\rsched_i}_{\cN,\sinit}[TR]\big| < \delta$, which 
	is in contradiction with the optimality of $\rsched_{\cN}$ among all schedulers with a finite expected number 
	of actions executed before reaching $\goal$.
	\qed
\end{proof}

\subsection{A lower complexity bound}
\label{sec:complexity}
 
When the bound $R$ is encoded in binary, our algorithms become exponential. 
Using the PSPACE-hardness result for cost-bounded reachability
problems in acyclic MDPs by Haase and Kiefer \cite{HaaseKiefer15}, we show that the question whether there exists a resilient scheduler is PSPACE-hard, even for acyclic MDPs, when $R$ is encoded in binary.

\begin{lemma}
\label{lem-lower}
  If $R$ is encoded in binary, the problem to check the existence of a resilient scheduler
  and the decision variant of the resilience-availability problem are PSPACE-hard.
\end{lemma}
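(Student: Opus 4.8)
The plan is to reduce from the cost-bounded reachability problem in acyclic MDPs, which is PSPACE-hard by Haase and Kiefer~\cite{HaaseKiefer15} when the cost bound is encoded in binary. Recall that an instance of that problem is an acyclic MDP $\cM_0$ with a designated target state $t$, a non-negative integer weight function on states (or transitions), a cost bound $R$ encoded in binary, and a rational threshold $\wp$; the question is whether there is a scheduler under which $t$ is reached with accumulated cost at most $R$ with probability at least $\wp$. (If the precise variant in~\cite{HaaseKiefer15} is a yes/no ``$\geq \wp$'' question for fixed $\wp$, or the ``exact cost'' variant, the construction below adapts with only cosmetic changes.)

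First I would build an MDP with repair $(\cM,\Error,\Operational,\rew)$ from $\cM_0$ as follows. Introduce a fresh initial state $\sinit$ and a fresh error state $e$; put $\Error = \{e\}$, and let $\sinit$ deterministically move to $e$, and $e$ deterministically move to the old initial state of $\cM_0$. All non-target states of $\cM_0$ become repair states, carrying the costs inherited from $\cM_0$ (so that $\cost$ in $\cM$ accumulated from $e$ onwards equals the cost accumulated in $\cM_0$, up to the fixed contribution of $e$ and $\sinit$, which we set to $0$). The target $t$ becomes the unique operational state, with a self-loop so that assumption~\eqref{weakuntil} is respected and no trap states arise; we set $\payoff(t)=0$ (availability is irrelevant for the pure existence question). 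To guarantee the almost-sure repair condition~\eqref{res-rep}, I would add from every repair state that has no successor in the original acyclic MDP (i.e.\ every ``dead end'' other than $t$) a transition to $t$; crucially this extra transition is given a cost large enough — say, larger than $R$ — that using it never helps meet the bounded-cost constraint, so it does not affect which schedulers are resilient, it only ensures $\Operational$ is reached with probability~$1$. Since $\cM_0$ is acyclic, $\cM$ is acyclic except for the self-loop on $t$, so the reduction preserves acyclicity in the relevant sense. The reduction is clearly computable in polynomial time, and $R$ stays encoded in binary.

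The heart of the argument is the equivalence: a resilient scheduler for $(\cM,\Error,\Operational,\rew)$ w.r.t.\ $\wp$ and $R$ exists if and only if the cost-bounded reachability instance is a yes-instance. For the forward direction, a resilient scheduler $\sched$, applied along the unique $\sched$-path $\sinit\, e$, must satisfy~\eqref{res-sched}: from $e$ the system reaches $\Operational$ within cost $R$ with probability $\geq \wp$. Since the only way to reach $\Operational = \{t\}$ within cost $R$ is along original $\cM_0$-transitions (the added high-cost edges cost more than $R$), the residual scheduler projects to a scheduler in $\cM_0$ reaching $t$ with accumulated cost $\leq R$ with probability $\geq \wp$. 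For the converse, take a scheduler witnessing the reachability instance, lift it to $\cM$ (using the added edges only to discharge~\eqref{res-rep} in the probability-$0$-relevant branches, i.e.\ on the part of the paths that never reach $t$), and observe that~\eqref{res-sched} holds at $e$ by construction, and that~\eqref{res-sched}/\eqref{res-rep} are vacuous at any other point because $e$ is the only error state and it is visited exactly once. Hence the scheduler is resilient. The same construction answers the decision variant of the resilience–availability problem: with all payoffs zero, the optimal availability is $0$ whenever a resilient scheduler exists and undefined otherwise, so deciding ``does there exist a resilient scheduler with availability $\geq 0$'' is already PSPACE-hard; alternatively, if one wants a non-degenerate availability value, one can attach to $t$ a small gadget whose mean payoff is a fixed constant independent of the scheduler.

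The main obstacle I anticipate is matching the exact formulation of the source hardness result in~\cite{HaaseKiefer15} — in particular whether it is stated for ``cost $\leq R$'' or ``cost $= R$'' reachability, for a probability bound or for qualitative (positive/almost-sure) reachability, and whether weights sit on states or transitions — and making sure the gadget that enforces the almost-sure-repair condition~\eqref{res-rep} genuinely does not interfere with~\eqref{res-sched}. The cleanest way around the latter is the ``prohibitively expensive escape edge'' trick described above: because its cost exceeds $R$, it is invisible to the bounded-cost event $\Eventually^{\leqslant R}\Operational$ but fully counts towards the unbounded event $\Eventually\Operational$, decoupling the two constraints. Everything else is a routine structural check that $\cM$ is a legal MDP with repair (no trap states, assumption~\eqref{weakuntil} holds because $e$ is the sole error state and is left immediately into a region containing no error states) and that probabilities and $\wp$ remain fractions of binary numbers while $R$ remains in binary, so the reduction is polynomial.
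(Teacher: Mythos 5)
Your reduction is correct and is essentially the paper's own: both reduce from the Haase--Kiefer cost-bounded reachability problem for acyclic MDPs by making the initial state the sole error state, making the targets the operational states, closing off trap states with fresh transitions, and taking all payoffs equal to zero (with threshold $0$) for the decision variant of the resilience--availability problem. The only differences are cosmetic: in the formulation the paper cites, the target $T$ is the set of \emph{all} trap states of the acyclic MDP, so $\Eventually \Operational$ holds almost surely for free and your ``prohibitively expensive escape edge'' gadget (which, note, must be realized as a high-cost intermediate \emph{state}, since costs sit on states here) is not needed, and the paper loops $T$ back to $\sinit$ rather than self-looping on the target, which changes nothing essential.
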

\begin{proof}
	In \cite{HaaseKiefer15}, the PSPACE-completeness of the following
	cost-problem has been proven: Given an acyclic MDP 
	$\cN = (S,\Act,P,\sinit)$ with a cost function
	and a cost bound $R$,
	the task is to check whether there is a scheduler
	$\sched$ for $\cN$ such that 
	$\Pr^{\sched}_{\cN,\sinit}(\Eventually^{\leqslant R} T) \geqslant \frac{1}{2}$.
	Here, $T$ denotes the set of trap states in $\cN$ and $\sinit \notin T$.
	
	We now provide a polynomial reduction from the cost-problem \`{a} la
	Haase and Kiefer \cite{HaaseKiefer15} to the problem to decide
	the existence of a resilient scheduler and the decision
	variant of the resilience-availability problem.
	
	Let $\cM$ be the MDP resulting from $\cN$ by defining $\Error=\{\sinit\}$
	and $\Operational = T$ and adding a fresh action symbol $\tau$ 
	and $\tau$-transitions from the states $t\in T$
	to $\sinit$. That is,
	$\cM$ has the same state space as $\cN$, the action set is
	$\Act_{\cM}=\Act \cup \{\tau\}$ and the $\cM$'s 
	transition probability function extends 
	$\cN$'s transition probability function by $P(t,\tau,\sinit)=1$
	and $P(t,\alpha,s)=0$ 
	for all states $t\in T$, $\alpha \in \Act_{\cM}$ and $s\in S$ with
	$(s,\alpha)\not= (\sinit,\tau)$.
	$\cM$'s cost function is the same as in $\cN$ for all states $s\in S$
	and $\cost(t)=0$ for all states $t\in T$.
	Obviously, each scheduler $\sched$ for $\cN$ with
	$\Pr^{\sched}_{\cN,\sinit}(\Eventually^{\leqslant R} T) \geqslant \frac{1}{2}$
	can be viewed as a memoryless resilient scheduler for $\cM$
	with respect to the probability threshold $\wp=\frac{1}{2}$ and cost bound $R$.
	Vice versa, given a resilient scheduler $\sched'$ for $\cM$,
	the decisions of $\sched'$ for the paths from $\sinit$ to a $T$-state yield a scheduler $\sched$ for $\cN$ with
	$\Pr^{\sched}_{\cN,\sinit}(\Eventually^{\leqslant R} T) \geqslant \frac{1}{2}$.
	
	For the decision problem of the resilience-availability problem, 
	we use the same reduction
	with availability threshold $\vartheta=0$ and the payoff function that
	assign 0 to all operational states.
\end{proof} %

\bibliographystyle{plain}
\bibliography{resilience}

\newpage
\appendix
\section{Proofs for Lemma~\ref{lem-claim-one}}

\textbf{Lemma~\ref{lem-avegare-avail}.}
\textit{$\Avail(F_i,\cB_i) \leq \Avail(E)$, where $(E,\A,\asched_E) \in \cE$ is the triple associated to $(F_i,\cB_i)$.}
\begin{proof}
	Let $A = \Avail(F_i,\cB_i)$. By contradiction we assume $A > \Avail(E)$. Then for an arbitrarily small $\varepsilon > 0$, there is a finite $\rsched$-path $\pi$ initiated in $\sinit$ ending in a state \mbox{$s \in F_i \setminus \Repair$} such that the probability (under the scheduler $\residual{\rsched}{\fpath}$) of all infinite paths initiated in~$s$ eventually staying in $(F_i,\cB_i)$ is at least $1 {-}\varepsilon$, and the availability achieved by  $\residual{\rsched}{\fpath}$ in $s$ is at least~$A$. For the initial state $s$, the scheduler $\residual{\rsched}{\fpath}$ is average resilient, but it can still choose \emph{leaving} transitions leading to the states outside $F_i$ (and then possibly outside $\cQ_E$) with positive probability, so it cannot be seen as a scheduler for $\cQ_E$.
	Now consider a scheduler $\sched^\varepsilon$ for the initial state $s$ which behaves like $\residual{\rsched}{\fpath}$ except that instead of executing a leaving transition,
	$\sched^\varepsilon$ selects some transition leading inside $F_i$ with the same probability (after that, $\sched^\varepsilon$ behaves arbitrarily, but it stays in $F_i$). Note that $\sched^\varepsilon$ is not necessarily average resilient. However, as $\varepsilon \rightarrow 0$, the availability achieved by $\sched^\varepsilon$ approaches the one achieved by $\residual{\rsched}{\fpath}$ (which is at least $A$), and $\fMP_e^{\sched^\varepsilon}$ approaches $0$ for all $e \in \Error$. Now we can apply the result of \cite{BBCFK-TwoViews14} which says that the set of achievable solutions for multiple mean-payoff objectives is closed under Pareto points. Note that $(F_i,\cB_i)$ can be seen as an MDP with initial state~$s$, and $\sched^\varepsilon$ are schedulers for this MDP. Hence, there must be an average resilient scheduler $\sched$ for $(F_i,\cB_i)$ achieving availability at least~$A$ in~$s$. Since $F_i \cap E \neq \varnothing$, there is $t \not\in \Repair$ such that $t \in F_i \cap E$. Consider a scheduler $\tsched$ for $\cQ_E$ which behaves like $\asched_E$ until $t$ is visited, and then it switches to~$\sched$. Then $\tsched$ is average resilient for all states of $E$ and achieves availability larger than $\Avail(E)$, which contradicts Lemma~\ref{lem-resilient}.
	\qed
\end{proof}

\section{Proofs for Lemma~\ref{lem-claim-two}}
\label{app:claim-two}

\medskip
\noindent
\textbf{Lemma~\ref{lem:MtoN}.}
\textit{
		Let $\rsched'$ be a resilient scheduler for $\transformed{\cM}$, such that 
		it is consistent with the schedulers in $\cE'$
		and almost all $\rsched'$-paths
		reach a subset $\cE'\subseteq\cE$ from $\sinit$. 
		There is a resilient scheduler $\rsched$ for $\cN$ almost surely reaching $\goal$ from $\sinit$ with
		$$\Avail^{\rsched'}_{\transformed{\cM},\sinit} = \E^{\rsched}_{\cN,\sinit}[TR] .$$
}
\begin{proof}{(Sketch)}
	Let $\pi$ be a finite path where $\last(\pi) = s$. 
	We set $\rsched(\pi)(\tau) = 1$ if $s \in (E \cap \transformed{\Operational}) \cup \{\goal, \goal_E\}$ for some $(E,\A,\asched_E) \in \cE'$ 
	and
	$\rsched(\pi) = \rsched'(\pi)$ otherwise.
	Scheduler $\rsched$ is resilient, since $\rsched'$ is resilient and we switch from its behavior only in operational states what does not effect resilience up to the switch and since that no more error is reached.
	
	Note that the probability to reach some $(E,\A,\asched_E) \in \cE'$ from $\sinit$
	is the same for $\rsched'$ and $\rsched$, i.e., when $p_E$ and $q_E$ denote the probabilities
	of reaching $E$ for some $(E,\A,\asched_E)\in\cE'$ under $\rsched'$ and $\rsched$, respectively, 
	then $p_E=q_E$. 
	Thus, 
	\[
	\Avail^{\rsched'}_{\transformed{\cM},\sinit} = \sum_{(E,\A,\asched_E) \in \cE'} p_E\cdot \Avail(E) =
	\sum_{(E,\A,\asched_E) \in \cE'} q_E\cdot \Avail(E) = \E^{\rsched}_{\cN,\sinit}[TR].
	\]
	\qed
\end{proof}

To prove Lemma~\ref{lem:NtoM}, we need auxiliary Lemma~\ref{lem:same_value}.
Let $\rsched_{\cN}$ be the scheduler that acquires maximal $\E^{\rsched'}_{\cN,\sinit}[TR]$ within resilient MR-schedulers for $\cN$ such that almost all $\rsched'$-paths reach the $\goal$.
Intuitively, it states that if the $\tau$ action is chosen by $\rsched_{\cN}$ with positive probability to $\goal_E$ for some $(E,\A,\asched_E) \in \cE$ then the expected total reward of each state in $E \cap S$ equals to $\rew(\goal_E)$.
The lemma follows from properties of the expected total reward and since $(E,\A,\asched_E)$ is strongly connected.

\begin{lemma}
	\label{lem:same_value}
	Let $\rsched_{\cN}$ be an MR-scheduler that acquires maximal $\E^{\rsched'}_{\cN,\sinit}[TR]$ within resilient schedulers $\rsched'$ for $\cN$ such that almost all $\rsched'$-paths reach the $\goal$.
	Let $(E,\A,\asched_E) \in \cE$ and $s \in E$ such that $\rsched_\cN(s)(\tau)>0$. Then for all $s' \in E \cap S$
	$$\E^{\rsched_\cN}_{\cN,s'}[TR] = \rew(\goal_E).$$
\end{lemma}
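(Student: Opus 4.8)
I would study the value function $v(t)\eqdef\E^{\rsched_{\cN}}_{\cN,t}[TR]$ on the Markov chain induced by the MR-scheduler $\rsched_{\cN}$. First the easy structural facts: the only states of $\cN$ carrying nonzero reward are the $\goal_{E'}$'s, and each of them has a single outgoing transition to the absorbing state $\goal$; hence every $\rsched_{\cN}$-path accumulates reward \emph{at most once}, so $0\le v\le\max_{E'}\Avail(E')$, $v$ is well-defined, $v(\goal_{E'})=\Avail(E')$, $v(\goal)=0$, and (for the \emph{fixed} MR-scheduler $\rsched_{\cN}$) $v$ obeys the one-step recurrence $v(t)=\sum_{\alpha}\rsched_{\cN}(t)(\alpha)\,q_t(\alpha)$ for $t\in\transformed{S}$, where $q_t(\alpha)\eqdef\sum_{t'}P_{\cN}(t,\alpha,t')\,v(t')$. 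Since a scheduler can only randomize over enabled actions, and by construction the only non-$\goal$ states in which $\tau$ is enabled towards $\goal_E$ are those in $E\cap\transformed{\Operational}$ — and by (E1) the end components in $\cE$ are state-disjoint, so $E$ is the unique such component containing $s$ — the hypothesis gives $s\in E\cap\transformed{\Operational}$ and $P_{\cN}(s,\tau,\goal_E)=1$, whence $q_s(\tau)=\Avail(E)$. Finally, I would normalize: the linear program (1)--(4) does not constrain $\rsched_{\cN}$ on states unreachable from $\sinit$, so without loss of generality $\rsched_{\cN}$ plays $\asched_{E'}$ on the $\sinit$-unreachable states of every $(E',\A',\asched_{E'})\in\cE$; this changes neither $\E^{\rsched_{\cN}}_{\cN,\sinit}[TR]$ nor resilience nor $\goal$-reachability from $\sinit$, and it forces $s$ to be $\sinit$-reachable under $\rsched_{\cN}$ (otherwise $\rsched_{\cN}(s)(\tau)=0$).

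Next, the two-sided estimate for the given state $s$. For the lower bound, for any $\sinit$-reachable $t\in E\cap S$ consider the scheduler that plays $\rsched_{\cN}$ until the first visit to $t$ and then switches to ``play $\asched_E$ until $s$ is hit, then play $\tau$''; this is resilient (it is a $\residual{\cdot}{\fpath}$-type switch at $t\notin\Repair$, so Lemma~\ref{lem-res-switch} applies; $\asched_E$ is resilient for initial state $t$ by Lemma~\ref{lem-resilient}; and the extra modification only changes the action in the \emph{operational} state $s$, which lies on no repair-phase prefix and hence affects neither the probability in \eqref{res-sched} of any repair phase nor the obligation \eqref{res-rep}), it reaches $\goal$ almost surely (by (E2) the fragment $(E,\A,\asched_E)$ is strongly connected, so $\asched_E$ reaches $s$ a.s.\ from any state of $E$), it collects exactly $\Avail(E)$ from $t$ and nothing before $t$; since $\rsched_{\cN}$ is optimal from $\sinit$ and $\Pr^{\rsched_{\cN}}_{\cN,\sinit}(\Eventually t)>0$, the exchange identity $\E^{\text{new}}_{\cN,\sinit}[TR]=v(\sinit)-\Pr^{\rsched_{\cN}}_{\cN,\sinit}(\Eventually t)\cdot(v(t)-\Avail(E))$ forces $v(t)\ge\Avail(E)$; for $\sinit$-unreachable $t\in E$ one reduces to this case, since after the normalization $\rsched_{\cN}$ follows $\asched_E$ there and, by strong connectivity, a.s.\ reaches $s$ (a $\sinit$-reachable state of $E$) before collecting any reward. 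For the upper bound I use local optimality at $s$: if some $\alpha$ in the support of $\rsched_{\cN}(s)$ had $q_s(\alpha)>\Avail(E)=q_s(\tau)$, then shifting a small mass from $\tau$ to a best such action would yield a scheduler that still reaches $\goal$ almost surely (the shift does not change the edge set of the induced chain, as $\tau$ remains played with positive probability because $v(s)>\Avail(E)$ forces $\rsched_{\cN}(s)(\tau)<1$) and is still resilient ($s$ is operational, and every path of the new scheduler is a $\rsched_{\cN}$-path) with strictly larger $\E_{\cN,\sinit}[TR]$ by a standard monotone policy-improvement argument (strictness because $s$ is $\sinit$-reachable), contradicting optimality of $\rsched_{\cN}$. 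Hence every support action at $s$ has $q_s(\cdot)\le\Avail(E)$, so $v(s)\le\Avail(E)$, and with the lower bound $v(s)=\Avail(E)$; consequently each support action at $s$ has $q_s(\cdot)=\Avail(E)$ exactly.

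It remains to propagate $v\equiv\Avail(E)$ from $s$ to all of $E\cap S$. Every $\A$-action in the support of $\rsched_{\cN}(s)$ leads (as $(E,\A)$ is an end component) only to states of $E$, where $v\ge\Avail(E)$; since the corresponding $q_s(\cdot)$ equals $\Avail(E)$, all these successors also satisfy $v=\Avail(E)$, and the same argument iterates along $\rsched_{\cN}$'s $\A$-moves inside $E$. The point I expect to be the main obstacle is to close this induction over \emph{all} $\sinit$-reachable states of $E$: $\rsched_{\cN}$ could a priori take an action \emph{outside} the fragment $\A$ from some state of $E$, escaping $E$ towards higher value. I would rule this out exactly from the hypothesis $\rsched_{\cN}(s)(\tau)>0$ together with strong connectivity of $(E,\A,\asched_E)$ and Lemma~\ref{lem-res-switch}: if from some $\sinit$-reachable $t\in E$ there were a resilient escape to a state of value $>\Avail(E)$, then the scheduler obtained by routing, via $\asched_E$, from $s$ to $t$ and then escaping would be resilient and strictly better than playing $\tau$ at $s$, so $\rsched_{\cN}$ — being optimal — would not play $\tau$ at $s$; hence on the reachable part of $E$ the scheduler $\rsched_{\cN}$ never leaves the fragment towards strictly higher value, strong connectivity carries $v\equiv\Avail(E)$ across the whole reachable part, and the normalization finally extends it to all of $E\cap S$. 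A pervasive but routine point throughout is checking that each re-routing at an operational state preserves resilience, which reduces to the observation — immediate from the definitions of $\transformed{\Operational}$ and $\transformed{P}$ — that an operational state never occurs as a non-operational repair state $\<e,\cdot,\cdot\>$ on the finite prefix governing conditions \eqref{res-sched} and \eqref{res-rep} of Definition~\ref{def:res-scheduler}.
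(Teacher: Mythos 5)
Your proof is correct in substance but takes a genuinely different route from the paper's. The paper first argues that $\E^{\rsched_\cN}_{\cN,\cdot}[TR]$ is \emph{constant} on $E\cap S$ --- by taking the state $s''$ of maximal value and replacing $\rsched_\cN$ by $\asched_E$ on the rest of $E$, so that strong connectivity funnels every state of $E$ into $s''$ --- and only then pins the common value to $\rew(\goal_E)$ by pushing the probability of $\tau$ at $s$ up or down. You instead prove the two inequalities separately at every $\sinit$-reachable $t\in E\cap S$, each time by exhibiting an explicit resilient, goal-reaching competitor and invoking optimality from $\sinit$: the competitor ``follow $\rsched_\cN$ until the first visit to $t$, then $\asched_E$ until $s$, then $\tau$'' gives the lower bound $v(t)\ge\Avail(E)$, and the competitor that replaces the $\tau$-branch at $s$ by ``route to $t$ via $\asched_E$, then follow $\rsched_\cN$'' gives the upper bound. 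Both arguments rest on the same pillars --- strong connectivity (E2), resilience of $\asched_E$ (Lemma~\ref{lem-resilient}), resilience-preservation of switches at operational states (Lemma~\ref{lem-res-switch} plus your observation for operational states inside $\Repair$), and optimality of $\rsched_\cN$ among \emph{all} resilient goal-reaching schedulers, which legitimises your history-dependent competitors --- but your decomposition avoids the paper's intermediate ``all values equal'' step, whose justification (that replacing $\rsched_\cN$ by $\asched_E$ off $s''$ strictly improves some states) is the sketchiest part of the paper's argument. Two remarks. First, your closing ``propagation'' paragraph is largely redundant: the routing competitor already yields $v(t)\le\Avail(E)$ directly for every reachable $t\in E\cap S$, so there is no induction over successors to close and no need to worry about actions leaving the fragment; the local policy-improvement step at $s$ is likewise dispensable. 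Second, both you and the paper implicitly need the states of $E\cap S$ to be reachable from $\sinit$ under $\rsched_\cN$, since optimality constrains nothing at unreachable states; your normalization to $\asched_{E'}$ there is the right repair but technically replaces the scheduler the lemma quantifies over --- this matches how Lemma~\ref{lem:NtoM} actually uses the statement, so it is a defect of the lemma's phrasing rather than of your argument.
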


\begin{proof}{(Sketch)}
	Observe that from the definition of the total accumulated reward it follows that if the MR-scheduler $\rsched_\cN$ is changed in a subset of states $S' \subseteq S_\cN$ to $\rsched$ such that $\E^{\rsched}_{\cN,s}[TR] > \E^{\rsched_\cN}_{\cN,s}[TR]$ for all $s \in S'$, then for each $s' \in S_\cN$ it holds that $\E^{\rsched}_{\cN,s'}[TR] \geq \E^{\rsched_\cN}_{\cN,s'}[TR]$.
	
	To prove the lemma, let us assume contrary and derive a contradiction.
	First, assume that there are states $s,s' \in E \cap S$ such that $\E^{\rsched_\cN}_{\cN,s}[TR] \neq \E^{\rsched_\cN}_{\cN,s'}[TR]$.
	Let $s'' \in E \cap S$ be the state with maximal $\E^{\rsched_\cN}_{\cN,s''}[TR]$.
	If $s'' \not\in \transformed{\Error}$ then we set $\rsched(s'') = \rsched_\cN(s'')$ and $\rsched(s) = \asched_E(s)$ for all other states $s \in E$.
	If $s'' \in \transformed{\Error}$ then we set $\rsched(s'') = \rsched_\cN(s'')$, $\rsched(\<s'',s,r\>) = \rsched_\cN(\<s'',s,r\>)$ for $\<s'',s,r\> \in \Repair $, and $\rsched(s) = \asched_E(s)$ for all other states in $E$.
	In both cases, $\rsched$ is a resilient MR-scheduler, since $\rsched_\cN$ and $\asched_E$ are resilient MR-schedulers. %
	Moreover, we strictly improved the total accumulated reward for some subset of states $ S' \subseteq (E \cap S)\setminus \{s''\}$, i.e., $\E^{\rsched}_{\cN,s}[TR] > \E^{\rsched_\cN}_{\cN,s}[TR]$ for all $s \in S'$.
	This is contradiction with the optimality of $\rsched_\cN$.
	Thus, the expected total accumulated reward is the same in all states of $E \cap S$.
	
	Let $s \in E \cap S$ and $\rsched_\cN(s)(\tau)>0$.
	Now, assume that $\E^{\rsched_\cN}_{\cN,s}[TR] < \rew(\goal_E)$.
	Then setting $\rsched_\cN(s)(\tau) = 1$ will result in 
	$\E^{\rsched_\cN}_{\cN,s'}[TR] = \rew(\goal_E)$, what is a contradiction with the optimality of $\rsched_\cN$.
	
	Finally, assume that $\E^{\rsched_\cN}_{\cN,s}[TR] > \rew(\goal_E)$.
	We change $\rsched_\cN$ by adding probability $\rsched_\cN(s)(\tau)$ proportionally to all other actions of $s$ and setting $\rsched_\cN(s)(\tau)$ to $0$.
	This strictly improves $\E^{\rsched_\cN}_{\cN,s}[TR]$ and is contradiction with the optimality of $\rsched_\cN$.
	\qed
\end{proof}

\medskip
\noindent
\textbf{Lemma~\ref{lem:NtoM}.}
\textit{	Let $\rsched_{\cN}$ be an MR-scheduler that acquires maximal $\E^{\rsched'}_{\cN,\sinit}[TR]$ within resilient schedulers $\rsched'$ for $\cN$ such that almost all $\rsched'$-paths reach the $\goal$. Let $\cE'$ be the set of all $(E,\A,\asched_E) \in \cE$ such that $\goal_E$ is visited from $\sinit$ with positive probability under $\rsched_{\cN}$, and let
	$\sched_F(s) = \rsched_{\cN}(s)$ for each $s \in F$ where $F = \transformed{S} \setminus \bigcup_{(E,\A,\sched_E) \in \cE'} E$.
	Moreover, let  
	$\rsched$ be the unique scheduler consistent with $\sched_F$ and $\asched_E$ for each $(E,\A,\asched_E) \in \cE'$.
	It holds that
	$$\Avail^{\rsched}_{\transformed{\cM},\sinit} = \E^{\rsched_\cN}_{\cN,\sinit}[TR].$$}
\medskip

\begin{proof}{(Sketch)}
	Let $(E,\A,\asched_E) \in \cE'$.
	Observe that, $E$ can be reached from $\transformed{S} \setminus E$ only through states in $E \cap S$.
	From Lemma~\ref{lem:same_value}, for each state $s \in E \cap S$ it holds that $\E^{\rsched_\cN}_{\cN,s}[TR] = \rew(\goal_E)$.
	This implies $\E^{\rsched_\cN}_{\cN,\sinit}[TR] = \Avail^{\rsched}_{\transformed{\cM},\sinit}$.
	
	Let $f \in \transformed{\Error}$.
	Note that, if $\Operational_f \cap E \neq \varnothing$ then $f \in E$.
	Then, since $\sched_F$ and $\asched_{E'}$ are resilient for each $(E',\A',\asched_{E'}) \in \cE'$ the condition~\eqref{res-sched} is satisfied for $\rsched$.	
	Condition~\eqref{res-rep} is satisfied for each $f \in \transformed{\Error} \cap E$, since $\asched_E$ is resilient and $(E,\A,\asched_E)$ is strongly connected.

	Let $f \in \transformed{\Error} \cap F$ and $\Pr\nolimits^{\rsched}_{\transformed{\cM},\sinit}(\Eventually f) > 0$.
	Since, $\sched_F$ was created from resilient $\rsched_{\cN}$ if the corresponding operational and repair states are all in $F$ the condition~\eqref{res-rep} holds for $f$.
	Assume that there is $\rsched$-path $s_0\alpha_0s_1\alpha_1\ldots$, where $s_0=f $ and there is $n>0$ such that $s_n \in E$ for some $(E,\A,\asched_E) \in \cE'$ and $ s_i \in \Repair$ for each $i \leq n$.
	Observe that $E \cap \transformed{\Operational} \neq \emptyset$ since $(E,\A,\asched_E) \in \cE'$ and the $\tau$ actions to $\goal_E$ are available only from $E \cap \transformed{\Operational}$ states.
	Moreover, $(E,\A,\asched_E)$ is strongly connected from the definition, 
	thus there is probability $1$ to reach $E \cap \transformed{\Operational}$ from $s_n$.
	\qed
\end{proof}

 \section{Lower complexity bound}

\textbf{Note on the proof of Lemma~\ref{lem-lower}.}
		In the approach of \cite{HaaseKiefer15} the cost function is integrated
		in the transition probability function. Formally, \cite{HaaseKiefer15}
		deals with a function
		$\delta\colon S \times \Act \to D(S \times \Nat)$
		where $D(S \times \Nat)$ denotes the set of probability distributions
		for $S \times \Nat$ with finite support.
		Rephrased for our notations, this means that we deal with the state
		space $S'=S \cup X$ where
		$X=\{(s,\alpha,s',k): \delta(s,\alpha)(s',k)>0\}$,
		the action set $\Act' = \Act \cup \{\iota\}$ where $\Act'(s)=\Act(s)$
		for $s\in S$ and $\Act'(s,\alpha,s',k)=\{\iota\}$ and
		the transition probabilities
		$P(s,\alpha,(s,\alpha,s',k))=\delta(s,\alpha)(s',k)$
		and
		$P((s,\alpha,s',k),\iota,s')=1$ and
		$P(\cdot)=0$ in all remaining cases.
		The cost function assigns value 0 to all states $s\in S$ and
		value $k$ to the states $(s,\alpha,s',k)\in X$.

\end{document}